\Crefname{remark}{Remark}{Remarks}
\Crefname{observation}{Observation}{Observations}
\theoremstyle{plain}
\newtheorem{theorem}{Theorem}
\newtheorem{lemma}[theorem]{Lemma}
\newtheorem{corollary}[theorem]{Corollary}
\theoremstyle{definition}
\newtheorem{definition}[theorem]{Definition}
\newtheorem{observation}[theorem]{Observation}
\theoremstyle{remark}
\DeclareMathOperator{\poly}{poly}
\newcommand{\LOCAL}{\ensuremath{\mathsf{LOCAL}}\xspace}
\newcommand{\CONGEST}{\ensuremath{\mathsf{CONGEST}}\xspace}
\definecolor{darkgreen}{rgb}{0,0.5,0}
\definecolor{darkred}{rgb}{0.4,0,0}
\newenvironment{myabstract}
{\list{}{\listparindent 1.5em%
		\itemindent    \listparindent
		\leftmargin    1cm
		\rightmargin   1cm
		\parsep        0pt}%
	\item\relax}
{\endlist}
\newenvironment{mycover}
{\list{}{\listparindent 0pt
		\itemindent    \listparindent
		\leftmargin    1cm
		\rightmargin   1cm
		\parsep        0pt}%
	\raggedright
	\item\relax}
{\endlist}
\newcommand{\myemail}[1]{\,$\cdot$\, {\small #1}}
\newcommand{\myaff}[1]{\,$\cdot$\, {\small #1}\par\smallskip}
\newcommand{\set}[1]{\left\{#1\right\}}
\newcommand{\calA}{\mathcal{A}}
\newcommand{\calG}{\mathcal{G}}
\newcommand{\calP}{\mathcal{P}}
\DeclareMathOperator{\E}{\mathbb{E}}
\newcommand{\nodeavg}{\mathsf{AVG}_V}
\newcommand{\edgeavg}{\mathsf{AVG}_E}
\newcommand{\wnodeavg}{\mathsf{AVG}_V^{\boldsymbol{w}}}
\newcommand{\nodeexp}{\mathsf{EXP}_V}
\newcommand{\nodeworst}{\mathsf{WORST}_V}
\begin{document}

 \setcounter{page}{0}
 \thispagestyle{empty}

\begin{mycover}
     {\huge\bfseries Node and Edge Averaged  Complexities \\of Local Graph Problems \par}
     \bigskip
     \bigskip
     \bigskip

     \textbf{Alkida Balliu}
     \myemail{alkida.balliu@gssi.it}
     \myaff{Gran Sasso Science Institute}

     \textbf{Mohsen Ghaffari}
     \myemail{ghaffari@inf.ethz.ch}
     \myaff{ETH Zurich}

    \textbf{Fabian Kuhn}
     \myemail{kuhn@cs.uni-freiburg.de}
     \myaff{University of Freiburg}

     \textbf{Dennis Olivetti}
     \myemail{dennis.olivetti@gssi.it}
     \myaff{Gran Sasso Science Institute}
 \end{mycover}
 \bigskip

\begin{myabstract}
  We continue the recently started line of work on the distributed node-averaged complexity of distributed graph algorithms. The node-averaged complexity of a distributed algorithm running on a graph $G=(V,E)$ is the average over the times at which the nodes $V$ of $G$ finish their computation and commit to their outputs. We study the node-averaged complexity for some of the central distributed symmetry breaking problems and provide the following results (among others):
  \begin{itemize}
  \item As our main result, we show that the randomized node-averaged complexity of computing a maximal independent set (MIS) in $n$-node graphs of maximum degree $\Delta$ is at least $\Omega\big(\min\big\{\frac{\log\Delta}{\log\log\Delta},\sqrt{\frac{\log n}{\log\log n}}\big\}\big)$. This bound is obtained by a novel adaptation of the well-known lower bound of Kuhn, Moscibroda, and Wattenhofer [JACM'16]. As a side result, we obtain that the worst-case randomized round complexity for computing an MIS in trees is also $\Omega\big(\min\big\{\frac{\log\Delta}{\log\log\Delta},\sqrt{\frac{\log n}{\log\log n}}\big\}\big)$---this essentially answers open problem 11.15 in the book of Barenboim and Elkin and resolves the complexity of MIS on trees up to an $O(\sqrt{\log\log n})$ factor. We also show that, perhaps surprisingly, a minimal relaxation of MIS, which is the same as $(2,1)$-ruling set, to the $(2,2)$-ruling set problem drops the randomized node-averaged complexity to $O(1)$. 
  
  \item For maximal matching, we show that while the randomized node-averaged complexity is $\Omega\big(\min\big\{\frac{\log\Delta}{\log\log\Delta},\sqrt{\frac{\log n}{\log\log n}}\big\}\big)$, the randomized edge-averaged complexity is $O(1)$. Further, we show that the deterministic edge-averaged complexity of maximal matching is $O(\log^2\Delta + \log^* n)$ and the deterministic node-averaged complexity of maximal matching is $O(\log^3\Delta + \log^* n)$.
  
  \item Finally, we consider the problem of computing a sinkless orientation of a graph. The deterministic worst-case complexity of the problem is known to be $\Theta(\log n)$, even on bounded-degree graphs. We show that the problem can be solved deterministically with node-averaged complexity $O(\log^* n)$, while keeping the worst-case complexity in $O(\log n)$.
  \end{itemize}
\end{myabstract}

\clearpage
\setcounter{page}{0}
\thispagestyle{empty}
\tableofcontents

\clearpage

\section{Introduction}
\label{sec:intro}

The main focus throughout the past four decades of studying distributed algorithms for graph problems has traditionally been on the worst-case round complexity. That is, the round complexity of the algorithm is defined to be the number of rounds until all nodes in the network terminate. This has proved fruitful in many contexts. For example, it allows to cleanly talk about the \emph{locality} of a graph problem: e.g., any deterministic $k$-round algorithm for a given problem shows that any node's output can be determined by a function of the topology induced by the $k$-hop neighborhood of the node~\cite{linial1987LOCAL,peleg00}. More recently~\cite{feuilloley2020long,barenboim2019distributed,chatterjee2020sleeping}, starting with an initial exploration of Feuilloley~\cite{feuilloley2020long}, there has been interest in going beyond this worst-cast measure. As a primary example, this involves asking ``what is the average termination time among the nodes?" There are different arguments for why understanding such node-averaged complexities is valuable in different contexts. It is indicative of the run-time of a typical node~\cite{feuilloley2020long} and it also implies sharper bounds on the overall energy spent in a network~\cite{chatterjee2020sleeping}. In this paper, we investigate the node and edge averaged complexity of some of the most prominent problems in the literature on distributed graph algorithms. 

\subsection{Our Contributions}
\paragraph{Maximal Independent Set.} The maximal independent set (MIS) problem is one of the central problems in the area of distributed graph algorithms. Feuilloley~\cite{feuilloley2020long} showed that Linial's $\Omega(\log^* n)$ lower bound for computing an MIS on $n$-node cycles even applies to node-averaged round complexity, as long as we stick to deterministic algorithms. In contrast, even though Linial's $\Omega(\log^* n)$ worst-case round complexity~\cite{linial1987LOCAL} even holds for randomized algorithms~\cite{Naor91}, when switching to node-averaged complexity, randomized algorithms can easily break this barrier. Indeed for any constant degree graph, one can obtain a \emph{randomized} MIS algorithm with $O(1)$ node-averaged round complexity in a straightforward way: among many others, Luby's algorithm~\cite{alon86,luby86} will remove each node with a constant probability in a single phase of the algorithm and it thus has a node-averaged complexity of $O(1)$ on constant degree graphs. 

One of the main open questions in the study of distributed node-averaged complexity is whether this $O(1)$ bound is achievable in general graphs---this was for instance recently mentioned explicitly by Chatterjee et al.\ in \cite{chatterjee2020sleeping}. It is however worth noting that this particular question is in some sense much older: Luby's analaysis~\cite{luby86} shows that his MIS algorithm removes a constant fraction of the edges per iteration, and it has been open since then whether a distributed MIS algorithm can also remove a constant fraction of the nodes in $O(1)$ rounds.

As one of our main contributions, we refute the possibility of a randomized MIS algorithm with node-averaged complexity $O(1)$. In particular, we give a modification of the KMW lower bound by Kuhn, Moscibroda, and Wattenhofer~\cite{lowerbound,kuhn16_jacm} to show that their bound also applies to the node-averaged complexity of computing an MIS. That is, there is a family of graphs for which the node-averaged complexity of any randomized distributed MIS algorithm is lower bounded by $\Omega\Big(\min\Big\{\frac{\log \Delta}{\log\log \Delta}, \sqrt{\frac{\log n}{\log\log n}}\Big\}\Big)$. We also comment that a randomized MIS algorithm from prior work by Bar-Yehuda, Censor-Hillel, Ghaffari, and Schwartzman~\cite{bar2017distributed}
has a node-averaged complexity of $O\big(\frac{\log \Delta}{\log\log \Delta}\big)$. Hence, at least for graphs of maximum degree $\Delta = 2^{O(\sqrt{\log n\cdot\log\log n})}$, the node-averaged complexity of MIS is now completely resolved. For larger $\Delta$, the problem remains open. This is however also true for the worst-case round complexity and our work closes the gap essentially to where it currently also is for the worst-case complexity of MIS.

As a side result, 
we also obtain the same lower bound for the worst-case MIS complexity in trees, i.e., any randomized MIS algorithm in trees requires time at least $\Omega\Big(\min\Big\{\frac{\log \Delta}{\log\log \Delta}, \sqrt{\frac{\log n}{\log\log n}}\Big\}\Big)$. For $\Delta=\log^{\omega(1)}n$, this improves 
on a recent randomized $\Omega\big(\min\big\{\Delta, \frac{\log\log n}{\log\log\log n}\big\}\big)$ lower bound for the same problem~\cite{balliu2021hideandseek}. The lower bound also almost matches the best known randomized MIS algorithm in trees, which has a worst-case complexity of $O\big(\min\big\{\sqrt{\log n}, \log\Delta +\frac{\log\log n}{\log\log\log n}\big\}\big)$~\cite{ghaffari2016MIS}, and we thus nearly resolve Open Problem 11.15 in the book of Barenboim and Elkin~\cite{barenboim15}. 

\paragraph{Maximal Matching.} A basic problem that is closely related to MIS is the problem of computing a maximal matching. We show that also for maximal matching, the randomized node-averaged complexity has an $\Omega\Big(\min\Big\{\frac{\log \Delta}{\log\log \Delta}, \sqrt{\frac{\log n}{\log\log n}}\Big\}\Big)$ lower bound. In this case, the bound follows almost immediately from the KMW lower bound construction that has been used for the approximate maximum matching problem in \cite{kuhn16_jacm}. Note, however, that a more apt comparison with MIS would be to consider the edge-averaged complexity of maximal matching. This is because when computing a matching, the output is on the edges rather than on the nodes. 

Recall that for a graph $G=(V, E)$, its line graph $H=(E, E')$ is the graph where we put one vertex for each edge of $G$ and we connect two of those vertices of $H$ if their corresponding edges in $G$ share an endpoint. Any maximal matching of a graph $G$ is simply an MIS of the line graph of $G$. Consequently, the node-averaged complexity of this MIS problem is equal to the edge-averaged complexity of the maximal matching problem. We show that, unlike in the general case, the MIS problem on line graphs has an $O(1)$ node-averaged complexity. Concretely, a close variant of Luby's randomized classic algorithm~\cite{alon86,luby86,IsraelI86} provides a maximal matching algorithm with edge-averaged complexity $O(1)$. 

We also provide results for the deterministic averaged complexity of maximal matching by giving an algorithm that achieves an $O(\log^2 \Delta + \log^* n)$ edge-averaged complexity and an $O(\log^3 \Delta + \log^* n)$ node-averaged complexity. The algorithm is obtained by adapting deterministic matching algorithms developed in \cite{fischer2020improved,ahmadi2018distributed}. We note that the current best deterministic worst-case complexity of maximal matching is $O(\log^2\Delta\cdot \log n)$~\cite{fischer2020improved} and any improvement on the edge-averaged complexity would thus most likely also improve the state of the art of the worst-case round complexity.

\paragraph{Ruling Set.} Faced with the $\Omega\big(\min\big\{\frac{\log \Delta}{\log\log \Delta}, \log_\Delta n\big\}\big)$ lower bound on the node-averaged complexity of the MIS problem, it is natural to wonder if any relaxation of the problem admits a better node-averaged complexity. A natural relaxation of MIS that has been studied quite intensively is ruling sets. For positive integers $\alpha$ and $\beta$, an $(\alpha,\beta)$-ruling set is a set $S$ of nodes such that any two nodes in $S$ are at distance at least $\alpha$, and for every node not in $S$, there is a node in $S$ at distance at most $\beta$~\cite{awerbuch89}. An MIS is therefore a $(2,1)$-ruling set. We show that, perhaps surprisingly, even relaxing the MIS problem only slightly to the problem of computing a $(2,2)$-ruling set completely avoids the lower bound. The $(2,2)$-ruling set problem (i.e., the problem of computing an independent set $S$ such that any node not in $S$ has a node in $S$ within distance $2$) admits a randomized algorithm with a node-averaged complexity of $O(1)$. It is plausible that in many applications of maximal independent sets (e.g., if an MIS algorithm is used as a subroutine in a higher-level algorithm), one could also work with the weaker $(2,2)$-ruling sets. Doing this might lead to an algorithm that is considerably faster from a node-averaged perspective.

We also study the node-averaged complexity of deterministic ruling set algorithms. We give algorithms with $O(\log^* n)$ node-averaged complexity to compute $(2, O(\log \Delta))$-ruling sets and $(2, O(\log \log n))$-ruling sets. Contrast this with the worst-case deterministic round complexity measure: The best known deterministic algorithm for computing a $(2, \beta)$-ruling set has a round complexity of  $O(\beta \Delta^{2/(\beta+1)} + \log^* n)$~\cite{schneider2013symmetry} and it is known that any deterministic $(2, \beta)$-ruling set algorithm requires at least $\Omega\big(\min\big\{\beta \Delta^{1/\beta},\log_{\Delta} n \big\}\big)$ rounds~\cite{balliu2021hideandseek}. 

\paragraph{Sinkless Orientation.} Finally, we investigate the node-averaged complexity of the sinkless orientation problem. While the worst-case time of deterministic algorithms for computing a sinkless orientation is $\Theta(\log n)$~\cite{brandt2016lower,chang16,ghaffari2017orinetation}, we show that there is a deterministic distributed sinkless orientation algorithm with node-averaged round complexity $O(\log^* n)$.

\subsection{Other Related Work}

As discussed, the explicit study of node-averaged complexity was initiated by Feuilloley in \cite{feuilloley2020long}. He in particular proved that the deterministic distributed node-averaged complexity of locally checkable labeling (LCL) problems on $n$-node cycles is asymptotically equal to the worst-case deterministic complexity of the same problems. This implies that the deterministic $\Omega(\log^* n)$-round lower bound for coloring cycles with $O(1)$ colors and for MIS and maximal matching on cycles extends to the node-averaged and to the edge-averaged complexities of those problems. Subsequently, Barenboim and Tzur~\cite{barenboim2019distributed} studied the node-averaged complexity of different variants of the distributed vertex coloring problem. They in particular analyzed the problem as a function of the arboricity of the graph and gave various trade-offs between the achievable number of colors and node-averaged complexity. They are also the first to explicitly observe that the randomized node-averaged complexity of the $(\Delta+1)$-vertex coloring problem is $O(1)$.

One of the practical motivations to look at node-averaged complexity is to optimize the overall energy usage of a distributed system. If we assume that the energy used by a node in a distributed algorithm is proportional to the number of rounds in which the node participates, the node-averaged complexity can be used as a measure for the total energy spent by all nodes (normalized by the total number of nodes). In this context, Chatterjee, Gmyr, and Panduarangan~\cite{chatterjee2020sleeping} introduced the notion of node-averaged awake complexity. In their setting, nodes are allowed to only participate in a subset of the rounds of an algorithm's execution and to sleep in the remaining rounds. The awake complexity of the node is then measured by the number of rounds in which the node is participating in the protocol (i.e., sending and/or receiving messages) and the node-averaged awake complexity denotes the average number of rounds in which nodes are awake during an algorithm. In \cite{chatterjee2020sleeping}, it is shown that there is a randomized a distributed MIS algorithm with node-averaged awake complexity $O(1)$. In \cite{chatterjee2020sleeping}, it is left as an open question whether it is also possible to obtain a distributed MIS algorithm with (regular) node-averaged complexity $O(1)$. As discussed above, we prove that this is not the case. The study of distributed node awake complexity for local graph problems was continued in a recent paper by Barenboim and Maimon~\cite{BarenboimM21}. This paper however studies the worst-case node awake complexity. They show that every decidable problem can be solved by distributed algorithm with node awake complexity $O(\log n)$ and that for a natural family of problems, one can obtain a node  awake complexity of $O(\log\Delta + \log^* n)$. Notions that are closely related to the notion of node awake complexity of \cite{chatterjee2020sleeping,BarenboimM21} have also been studied in the context of radio network (mostly from a worst-case complexity point of view), see e.g., \cite{NakanoO00a,jurdzinski2002efficient,jurdzinski2002energy,KardasKP13,BenderKPY18,chang2019exponential,chang2020energy}.

While there is not a lot of work that explicitly studies the node or edge-averaged complexity of distributed algorithms, many existing distributed algorithms implicitly provide averaged complexity bounds that are stronger than the respective worst-case complexity bounds. This for example leads to the following node-averaged complexities of the $(\Delta+1)$-coloring problem. The randomized algorithms of \cite{luby1993removing,johansson99} are based on the following idea. The nodes pick random colors from the set of available colors (i.e., nodes need to pick a color that has not yet been assigned to a neighbor) and in both cases, it is shown that in each such coloring round, every uncolored node becomes colored with constant probability. This directly implies that the node-averaged complexity of those algorithms is $O(1)$. Further, in a recent paper, Ghaffari and Kuhn~\cite{GK21} give a deterministic distributed algorithm to compute a $(\Delta+1)$-coloring (and more generally a $(\mathit{degree}+1)$-list coloring) in $O(\log^2\Delta\cdot\log n)$ rounds. The core of the algorithm is a method to color a constant fraction of the nodes of a graph in $O(\log^2\Delta+\log^* n)$ rounds, which directly implies that the deterministic node-averaged complexity of $(\Delta+1)$-coloring (and $(\mathit{degree}+1)$-list coloring) is $O(\log^2\Delta + \log^* n)$. Note that an improvement to this bound for $(\mathit{degree}+1)$-list coloring would immediately also improve the best known deterministic worst-case complexity.

In addition, most modern randomized distributed graph algorithms are based on the idea of \emph{graph shattering}, see, e.g., \cite{barenboim2016locality, harris2016distributed, ghaffari2016MIS,fischer2017sublogarithmic, GS17, chang2018optimal, ghaffari2018derandomizing}. In a first shattering phase, one uses a randomized algorithm that succeeds at each node with probability at least $1-1/\poly\Delta$ such that (essentially), the graph induced by the unsolved nodes consists of only small components. Those components are then solved in a second post-shattering phase by using the best known deterministic algorithm to complete the given partial solution. Since the shattering phase solves the given problem for all, except at most a $(1/\poly\Delta)$ fraction of all the nodes, the complexity of the shattering phase is typically an upper bound on the node-averaged complexity. Usually, the round complexity of the shattering phase is expressed as a function of $\Delta$ rather than as a function of $n$ and it is often also much faster than the deterministic post-shattering phase. The randomized sinkless orientation algorithm of ~\cite{GS17} for example  implies that the randomized node-averaged complexity of computing a sinkless orientation is $O(1)$.

\subsection{Organization of the Paper}
In \Cref{sec:model}, we formally define the notions of node and edge averaged complexities that we use in this paper. In \Cref{sec:algorithms}, present our upper bounds on the node and edge averaged complexities of maximal matching, $(2,2)$-ruling sets, and sinkless orientation. For space reasons, some of the arguments in \Cref{sec:algorithms} are only sketched and the full proofs are deferred to \Cref{app:missingAlgProofs}. In \Cref{sec:LB}, we then provide our lower bounds and thus in particular our lower bound on the node-averaged complexity of MIS in general graphs and on the worst-case complexity of MIS in trees. Also in \Cref{sec:LB}, some of the technical arguments appear in the appendix, in \Cref{apx:isomorphism}.

\section{Model and Definitions}
\label{sec:model}

We primarily focus on the \LOCAL model~\cite{linial1987LOCAL,peleg00}, where a network is modeled as an undirected graph $G=(V,E)$ and typically, every node $v\in V$ is equipped with a unique $O(\log n)$-bit identifier. Time is divided into synchronous rounds: In every round, every node of $G$ can send an arbitrary message to each neighbor and receive the messages sent to it by the neighbors. In the closely related \CONGEST model~\cite{peleg00}, messages are required to consist of at most $O(\log n)$ bits.

We study graph problems in which upon terminating, each node and/or
each edge of a graph $G=(V,E)$ must compute some output. We consider
the individual node and edge complexities of a distributed algorithm for a given
graph problem. In particular, we are interested in the time required
for individual nodes or edges to compute and commit to their outputs. For a node
$v\in V$, we say that $v$ has completed its computation as soon as $v$
and all its incident edges have committed to their outputs and we say
that an edge $e=\set{u,v}\in E$ has completed its computation as soon
as $e$ and both its nodes $u$ and $v$ have committed to their
outputs. For example, in a vertex coloring algorithm, a node $v$ has completed its computation as soon as $v$'s color is fixed and an edge $e=\set{u,v}$ has completed its computation as soon as the colors of $u$ and $v$ are fixed. In an edge coloring algorithm, a node $v$ has completed its computation as soon as the colors of all incident edges have been determined and an edge $e$ has completed its computation as soon as the color of $e$ is fixed. Given a distributed algorithm $\calA$ on $G=(V,E)$ and a node
$v\in V$, we define $T_v^G(\calA)$ to be the number of rounds after
which $v$ completes its computation when running the algorithm
$\calA$. Similarly, for an edge $e\in E$, we define $T_e^G(\calA)$ to
be the number of rounds after which $e$ completes its computation. Note that if $\calA$ is a randomized algorithm, then
$T_v^G(\calA)$ and $T_e^G(\calA)$ are random variables. In the
following, for convenience, we generally define $T_v^G(\calA)$ and
$T_e^G(\calA)$ as random variables. In the case of a
deterministic algorithm $\calA$, the variables only take on one
specific value (with probability $1$). We can now define the node and edge averaged complexities of a distributed algorithm $\calA$.

\begin{definition}[Node and Edge Averaged Complexities]\label{def:avg_complexity}
  We define the following average complexity measures for a
  distributed algorithm $\calA$ on a family of graphs $\calG$. We
  define the \emph{node-averaged complexity} ($\nodeavg(\calA)$) and the
  \emph{edge-averaged complexity} ($\edgeavg(\calA)$) as follows.
  \begin{eqnarray*}
    \nodeavg(\calA) & := & \max_{G\in \calG} \,\frac{1}{|V|}\cdot\E\left[\sum_{v\in V(G)}T_v^G(\calA)\right]\ =\
    \max_{G\in \calG} \frac{1}{|V|}\cdot\sum_{v\in V(G)}\E\big[T_v^G(\calA)\big]\\
    \edgeavg(\calA) & := & \max_{G\in \calG} \,\frac{1}{|E|}\cdot\E\left[\sum_{e\in E(G)}T_e^G(\calA)\right]\ =\
    \max_{G\in \calG} \frac{1}{|E|}\cdot\sum_{e\in E(G)}\E\big[T_e^G(\calA)\big]
  \end{eqnarray*}
\end{definition}

The respective complexity of a given graph problem is defined as the corresponding complexity, minimized over all algorithms $\calA$ that solve the given graph problem. We note that there are other complexity notions that are between \Cref{def:avg_complexity} and the standard worst-case complexity notion. We provide a brief discussion of this in \Cref{sec:other}.

\paragraph{Computation vs.\ Termination Time:}  After completing the computation, a node $v$ or edge $e$ might still be involved in communication to help other nodes determine their outputs. We say that a node $v$ has terminated once it has completed its computation and it also does not send any further messages. Similarly, an edge $e$ has terminated once it has completed its computation and there is no more messages sent over the edge. Instead of defining $T_v^G(\calA)$ and $T_e^G(\calA)$ as the number of rounds until $v$ or $e$ finishes its computation, we could also define it as the number of rounds until $v$ or $e$ terminates. In fact, in the literature about averaged complexity of distributed algorithms, both definitions have been used. The initial work by Feuilloley~\cite{feuilloley2020long} uses the definition that we use in the present paper. The subsequent work by Barenboim and Tzur~\cite{barenboim2019distributed} uses the stronger definition, where the complexity of a node/edge is defined as its termination time.

From a practical point of view, both notions of averaged complexity seem relevant. On the one hand, once a node has computed its output, it can continue with any further computation that is based on this output, even if the node still has to continue communicating to help other nodes. On the other hand, node-averaged termination time might be more natural especially if we aim to minimize the total energy spent in a distributed system. From a purely combinatorial / graph-theoretic point of view, when using the \LOCAL model, the node computation time definition has a particularly clean interpretation: An $r$-round \LOCAL algorithm can always equivalently be seen as an algorithm, where every node first collects its complete $r$-hop neighborhood and it then computes its output as a function of this information.\footnote{In the case of randomized algorithms, we have to assume that nodes choose all private random bits at the beginning before sending the first message.} More generally, if the computation time of a node $v$ is $r$, it means that node $v$ can compute its output as a function of its $r$-hop neighborhood in the graph and the node-averaged complexity in the \LOCAL model is therefore equal to the average radius to which the nodes must know the graph in order to compute their outputs. We note that while for algorithms, a termination time bound is stronger than an equal computation time bound, the opposite is true for lower bounds, and the definition we use therefore makes our lower bounds stronger. Further, although we use computation time in our definition, for all our algorithms, it is not hard to see that they also provide the same bounds if using average termination time instead of average computation time. In all our algorithms, nodes also stop participating in the algorithm at most one round after knowing their outputs.

\section{Algorithms}
\label{sec:algorithms}

\subsection{MIS and Ruling Set}
\paragraph{MIS} It is well-known that Luby's randomized MIS algorithm removes $1/2$ of the edges, per iteration~\cite{luby86}. Hence, if we define the MIS problem as a labeling problem, with binary indicator labels for vertices indicating those that are in the selected MIS, and we declare an edge terminated when the label of at least one of its two endpoint nodes is fixed, then Luby's algorithm has edge-averaged complexity $O(1)$.\footnote{Note however that if we use the edge-averaged complexity as defined in \Cref{def:avg_complexity} and require both nodes of an edge to be decided, then this is not true. In fact, in this case, we prove a  lower bound on the edge-averaged complexity in \Cref{thm:complexityMIS}.} In contrast, the node-averaged complexity of MIS had remained elusive for a number of years, and indeed it was mentioned as an open question throughout the literature whether an $O(1)$ node-averaged complexity is possible, see e.g., \cite{chatterjee2020sleeping}. The best known upper bounds are the trivial $O(\log n)$ that follows from Luby's worst-case round complexity analysis and a $O(\log \Delta/\log\log \Delta)$ bound that follows from the work of Bar-Yehuda, Censor-Hillel, Ghaffari, and Schwartzman~\cite{bar2017distributed}. They give a randomized MIS algorithm for which they show that within this time, each node is removed with at least a constant probability (and indeed a better probability that exceeds $1-1/\sqrt{\Delta}$, cf.\ Theorem 3.1 in \cite{bar2017distributed}).

In \Cref{sec:LB}, we show that the node-averaged complexity of MIS cannot be $O(1)$, and indeed the above bound is tight for small $\Delta$. Concretely, we prove that in a certain graph family with maximum degree $\Delta$ and $n$ nodes, the node-averaged complexity of MIS is $\Omega\Big(\min\Big\{\frac{\log \Delta}{\log\log \Delta}, \sqrt{\frac{\log n}{\log\log n}}\Big\}\Big)$. That is, asymptotically the same bounds as the celebrated worst-case lower bound Kuhn, Moscibroda, and Wattenhofer~\cite{kuhn16_jacm} also hold for the node-averaged complexity.

\paragraph{Ruling Set.} Faced with the above strong lower bound for the node-averaged complexity of MIS, it is natural to ask whether any reasonable relaxation of the problem admits better node-averaged complexity. One of the most standard relaxations of MIS is ruling set. An $(\alpha, \beta)$-ruling set asks for a set of nodes $S$ such that any two nodes in $S$ have distance at least $\alpha$ and any node not in $S$ has a node in $S$ within distance $\beta$. Thus, MIS is equivalent to a $(2,1)$-ruling set. Interestingly, we show that the seemingly minimal relaxation to $(2,2)$-ruling set drops the node-averaged complexity to $O(1)$.

\begin{theorem}
    There is a randomized distributed algorithm in the \CONGEST model that computes a $(2,2)$-ruling set and has node-averaged complexity $O(1)$.
\end{theorem}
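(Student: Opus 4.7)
The plan is to give an iterative Luby-style algorithm in which each phase takes $O(1)$ rounds and a constant fraction of the still-undecided nodes decides per phase in expectation. If $U_t$ denotes the set of undecided nodes at the start of phase $t$, and we can show $\E\!\left[|U_{t+1}|\mid U_t\right] \leq \alpha \cdot |U_t|$ for some absolute constant $\alpha<1$, then
\[
\E\!\left[\sum_{v\in V}T_v^G(\calA)\right] \;=\; \E\!\left[\sum_{t\geq 0}|U_t|\right] \;\leq\; \sum_{t\geq 0} \alpha^t\,n \;=\; O(n),
\]
which by \Cref{def:avg_complexity} gives $\nodeavg(\calA)=O(1)$.

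A single phase proceeds as follows. Every active node $v$ draws a fresh random label $r_v$ (an $O(\log n)$-bit string, so communication fits in \CONGEST) and gathers all labels in its $2$-hop ball restricted to $U_t$, which takes $O(1)$ rounds. A subset $S_t\subseteq U_t$ is then selected so that (i) any two nodes in $S_t$ are at graph distance at least $3$, making $S_t$ an independent set, and (ii) membership is determined by the local $2$-hop view. The natural choice for $S_t$ is the set of $2$-hop local minima of $r$ inside $G[U_t]$. In the same $O(1)$ rounds, every active node that sees an $S_t$-node within distance $2$ terminates (outputting ``not in $S$'', hence dominated). The running output $S:=\bigcup_t S_t$ is independent across phases because any previously active node within distance $2$ of a previous $S_{t'}$-node has already terminated before phase $t$, so the new $S_t$-nodes remain at distance $\geq 3$ from all of $S\setminus S_t$. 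Correctness as a $(2,2)$-ruling set is then immediate.

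The heart of the argument is the constant-fraction claim $\E[|U_{t+1}|\mid U_t]\le \alpha\,|U_t|$, equivalently $\E[|B_2(S_t)\cap U_t|]=\Omega(|U_t|)$. The argument exploits the $(2,2)$ relaxation: with domination radius $2$, each $s\in S_t$ can single-handedly remove its entire ball $B_2(s)\cap U_t$. The plan is to use a charging argument in which each undecided $v$ is charged to $w(v):=\arg\min_{u\in B_2(v)\cap U_t} r_u$, and to show that with constant probability $w(v)$ is itself a $2$-hop local minimum, so $w(v)\in S_t$ and $v$ is then dominated. I expect the main obstacle to be the pathological regime in which $B_2(w(v))$ extends substantially beyond $B_2(v)$, so that $w(v)$ may have an even smaller neighbor outside $B_2(v)$ and fail to be in $S_t$. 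To handle this, I plan to partition the nodes of $U_t$ according to the growth ratio $|B_k(v)|/|B_{k-1}(v)|$ for $k\in\{2,3,4\}$: in the well-behaved classes the simple charging $v\mapsto w(v)$ already yields a constant-probability domination via the exchangeability of the random labels on the relevant ball, while in the ``blowing-up'' classes a double-counting/handshake argument over pairs $(v,w)$ with $w\in B_2(v)$ shows that the blowing-up region itself must contain many $2$-hop local minima and so directly produces $\Omega(|U_t|)$ terminations. Combining the cases yields the claimed $\alpha<1$, completing the proof.
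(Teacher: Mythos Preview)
Your algorithm and its correctness are fine, but the heart of the argument --- that a constant fraction of $U_t$ is dominated in each phase --- is not established. You correctly isolate the obstruction: when $w(v)=\arg\min_{u\in B_2(v)}r_u$ has $|B_2(w(v))|\gg|B_2(v)|$, the conditional probability that $w(v)\in S_t$ can be $o(1)$. Your proposed fix, a partition by growth ratios $|B_k(v)|/|B_{k-1}(v)|$ together with a ``double-counting/handshake argument'' for the blowing-up classes, is only named, not carried out. It is not at all clear how that double-counting would go: the natural pairing $(v,w)$ with $w\in B_2(v)$ and $|B_2(w)|>2|B_2(v)|$ is antisymmetric, but one still needs to show that the blowing-up nodes are either few or themselves dominated, and nothing in the sketch pins down why many $2$-hop local minima must appear near them. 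As written, this is the crux of the theorem and it is missing.

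The paper sidesteps this difficulty entirely by \emph{not} using uniform labels. Each node marks itself with probability $p_v=1/(\deg(v)+1)$ and joins $S$ only if it has no marked neighbor of higher degree (ties broken by ID). The analysis calls $v$ \emph{good} if $\sum_{u\in N_2^+(v)}p_u\ge 1/2$; a short charging argument shows at least half of all nodes are good, and for a good $v$ the highest-priority marked node $u\in N_2^+(v)$ enters $S$ with constant probability because any neighbor that could block $u$ has degree at least $\deg(u)$ and hence marking probability at most $1/(\deg(u)+1)$. This degree-aware priority is exactly what neutralizes the blow-up you are fighting: the ``killer'' of a candidate always has small marking probability, regardless of how large $B_2$ grows. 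If you want to rescue the uniform-label route you would need a genuinely new combinatorial lemma about $2$-hop local minima; otherwise, switching to degree-based marking gives a two-paragraph proof.
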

\begin{proof}
The algorithm works as follows. Each node $v$ independently marks itself with probability $p_v:=1/(\deg(v)+1)$. A marked node $v$ joins the ruling set $S$ if and only if it has no marked higher priority neighbor $w$. A neighbor $w$ of $v$ is higher priority if $\deg(w)>\deg(v)$ or if $\deg(w)=\deg(v)$ and $\mathrm{ID}(w) > \mathrm{ID}(v)$. Nodes that are within distance $2$ of nodes in $S$ are deleted and we recurse on the remaining graph.

To prove the theorem, we show that per iteration, in expectation, a constant fraction of the nodes is deleted. Fix one iteration: We call a node $v$ \emph{good} if $\sum_{u\in N_2^+(v)} p_u \geq 1/2$, where we define $N_2^+(v):=\{u\in V : d_G(u,v)\leq 2\}$. To show that in expectation a constant fraction of the nodes is deleted, we show that at least $1/2$ of the nodes are good, and moreover that each good node is deleted with a constant probability.

We next show that at least half of the nodes are good. Let $B$ be the set of bad nodes (i.e., the set of nodes that are not good). To upper bound $|B|$, we do the following for each node $v\in B$. We charge the ``badness" of $v$ to the neighbors of $v$ by assigning value $1/\deg(v)$ to each neighbor $u$ of $v$. Because $v$ is bad, we have $\sum_{w\in N(u)} p_w < 1/2$ for each neighbor $u$ of $v$. Each node $u$ therefore gets charged less than $1/2$ by neighboring bad nodes. Because every bad node distributes a total charge of $1$, this means that at most half of the nodes can be bad.

To finish the proof, we show that a good node $v$ is deleted with constant probability. Let $N_2^+(v)$ be the (inclusive) $2$-hop neighborhood of $v$. We need to show that with constant probability at least one node in $N_2^+(v)$ joins the ruling set $S$. First, note that with constant probability at least one node in $N_2^+(v)$ is marked. This is simply because $v$ is good. Now if $M$ is the set of marked nodes in $N_2^+(v)$, we let $M'\subseteq M$ be the set of nodes in $M$ which have no higher priority neighbor in $M$. Note that if $|M|>0$, then also $|M'|>0$. Now, assume that $|M'|>0$ and consider some node $u\in M'$. Node $u$ enters the ruling set unless a higher priority neighbor $w$  outside $N_2^+(v)$ is marked. There are at most $\deg(u)$ such neighbors and because they must have degree $\deg(w)\geq \deg(u)$, each of them is marked with probability $\leq 1/(\deg(u)+1)$. The probability that none of the higher priority neighbors of $u$ is marked is thus at least a constant.
\end{proof}

\begin{theorem}\label{thm:RulingDet}
    There is a deterministic distributed algorithm in the \CONGEST model that computes a $(2, O(\log \Delta))$-ruling set in any $n$-node graph with maximum degree $\Delta$ and has node-averaged complexity $O(\log^* n)$. The algorithm can also be modified to produce a $(2, O(\log \log n))$-ruling set with the same $O(\log^* n)$ node-averaged complexity.
\end{theorem}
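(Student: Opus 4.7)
The plan is to design a phase-based deterministic algorithm whose worst-case complexity is $O(\log \Delta + \log^* n)$ (matching Schneider-Wattenhofer's $O(\beta\Delta^{2/(\beta+1)} + \log^* n)$ bound~\cite{schneider2013symmetry} at $\beta = \Theta(\log\Delta)$), but whose execution has the following key strengthening: after an initial preprocessing prefix of $O(\log^* n)$ rounds, every subsequent round deterministically finalizes a constant fraction of the still-active nodes. The algorithm consists of two stages. In the \emph{preprocessing stage}, nodes apply Linial's iterative color reduction on the square graph $G^2$ to obtain, in $O(\log^* n)$ rounds, a proper $\poly(\Delta)$-coloring $\chi$ of $G^2$ (i.e., any two distinct nodes at distance $\leq 2$ in $G$ get distinct values of $\chi$). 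In the \emph{peeling stage}, lasting $O(\log\Delta)$ rounds, every still-active node inspects its distance-$2$ neighborhood in the residual graph; if it is the $\chi$-maximum among active distance-$2$ neighbors, it joins the ruling set $S$ and declares all its distance-$2$ neighbors dominated. This is a deterministic iteration of the $(2,2)$-ruling set procedure from the previous theorem.

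The central technical claim, which I expect to be the principal obstacle, is that each round of the peeling stage deterministically eliminates a constant fraction of the currently active nodes. I would prove this by a charging argument in the style of the $(2,2)$-ruling set proof earlier in the paper: any active node $v$ that survives a round must have a strictly higher-priority active ``blocker'' within distance $2$; by bounding how many survivors a single blocker can account for, one concludes that at least a constant fraction of the active set is necessarily removed in each round. Making this argument work deterministically, as opposed to in expectation, is delicate because the residual graph keeps changing from round to round, so the charging must be updated in a way that is robust to these changes. The $G^2$-coloring from the preprocessing stage, which guarantees a unique local maximum in every distance-$2$ neighborhood of the residual graph, is what enables the deterministic version of the argument.

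Given the geometric-decay property, the node-averaged analysis is routine. Let $n_i$ denote the number of active nodes at the start of peeling round $i$, so $n_i \leq (1-c)^{i-1} n$ for some constant $c>0$. Every node becomes inactive at some round $i_v \leq O(\log\Delta)$, so its completion time is at most $O(\log^*n) + i_v$. Summing,
\begin{equation*}
\frac{1}{|V|}\sum_{v\in V}\E\bigl[T_v^G(\calA)\bigr]
\;\leq\; O(\log^* n) + \frac{1}{n}\sum_{i\ge 1} i\cdot(n_{i-1}-n_i)
\;\leq\; O(\log^* n) + \sum_{i\ge 1} i(1-c)^{i-1}
\;=\; O(\log^* n).
\end{equation*}

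For the $(2, O(\log\log n))$-ruling set variant, I would use a two-level recursive application of the same template: first run the $(2, O(\log\Delta))$-algorithm to obtain a ruling set $S_1$, then recurse on a cluster graph whose vertex set is $S_1$ and in which two vertices of $S_1$ are adjacent iff they lie within an $O(\log\Delta)$-neighborhood in $G$. Two levels of this recursion bring the effective $\log\Delta$ down to $\log\log n$, and because each level enjoys the same per-round geometric-decay property, the node-averaged cost remains $O(\log^* n)$. The main obstacle throughout remains the deterministic constant-fraction decay of the peeling stage; once that is in hand, both the worst-case bound and the averaging calculation follow mechanically.
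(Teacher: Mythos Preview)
Your central technical claim—that each peeling round deterministically removes a constant fraction of the active nodes—is false, and this breaks the entire argument. Consider a caterpillar: a spine $p_1-p_2-\cdots-p_m$ with $m=\Delta$, each $p_i$ carrying $\Delta-2$ pendant leaves. Assign the leaves colors in $\{1,\dots,\Delta-2\}$ and set $\chi(p_i)=\Delta-2+i$; this is a valid $G^2$-coloring with $O(\Delta)$ colors. In round~1 the \emph{only} local maximum is $p_m$; it dominates $p_{m-1},p_{m-2}$ and the leaves of $p_m,p_{m-1}$, i.e.\ $\Theta(\Delta)$ nodes out of $n=\Theta(\Delta^2)$. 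Round~2 selects only $p_{m-3}$, and so on: each round peels three spine nodes and their leaves, so the process runs for $\Theta(\Delta)$ rounds, removing a $\Theta(1/\Delta)$ fraction per round. The node-averaged cost on this instance is $\Theta(\Delta)$, not $O(\log^* n)$. Your charging sketch fails because a ``blocker'' can itself be blocked, yielding arbitrarily long blocker chains (here, the whole spine). More fundamentally, your scheme outputs a $(3,2)$-ruling set (hence a $(2,2)$-ruling set); but the paper cites a deterministic worst-case lower bound of $\Omega(\beta\Delta^{1/\beta})$ for $(2,\beta)$-ruling sets, which for $\beta=2$ gives $\Omega(\sqrt{\Delta})$—so no correct version of your peeling can run in $O(\log\Delta+\log^* n)$ rounds.

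The paper takes an entirely different route that crucially exploits the slack in $\beta=O(\log\Delta)$. It invokes the Kutten--Peleg deterministic $O(\log^* n)$-round algorithm that computes a dominating set of size at most $n/2$; nodes outside the dominating set point to a dominator and deactivate. Iterating this $O(\log\Delta)$ times halves the active set each iteration (giving the geometric decay you wanted, but at the granularity of $O(\log^* n)$-round phases, not single rounds), and each iteration only adds $1$ to $\beta$. When $\le n/\Delta$ nodes remain, one runs an $O(\Delta+\log^* n)$-round MIS on them; its contribution to the node average is $O(1)$. For the $(2,O(\log\log n))$ variant the paper simply stops the halving after $O(\log\log n)$ iterations (so $\le n/\poly\log n$ nodes remain) and finishes with the $\poly\log n$-round Rozho\v{n}--Ghaffari MIS; this is not the recursive cluster-graph idea you sketch.
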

\begin{proof}[Proof Sketch]
The proof is deferred to \Cref{app:missingAlgProofs}. The basic idea is to apply $O(\log \Delta)$ iterations as follows: we use a simple dominating set $S$ algorithm that runs in $O(\log^* n)$ rounds and computes a dominating set of size at most $n/2$. All nodes outside $S$ point to $S$ and terminate, and we continue to the next iteration with only nodes $S$. After $O(\log \Delta)$ iterations, at most $n/\Delta$ nodes remain and we can compute an MIS of them in $O(\Delta + \log^* n)$ time using known algorithms~\cite{BEK15}. To get a $(2, O(\log \log n))$-ruling set, we stop after only $O(\log \log n)$ iterations, when the number of remaining nodes has dropped to $n/\poly(\log n)$, and we invoke a $\poly(\log n)$-round MIS algorithm~\cite{rozhonghaffari20} among the remaining nodes.
\end{proof}

\subsection{Maximal Matching}
\begin{theorem}\label[theorem]{thm:matchingRand}
    There is a randomized distributed algorithm in the \CONGEST model that computes a maximal matching, which has edge-averaged complexity $O(1)$, and a worst-case round complexity of $O(\log n)$, with high probability.
\end{theorem}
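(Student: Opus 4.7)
The plan is to run Luby's classical randomized algorithm on the line graph of $G$: in each round, every alive edge $e$ picks a fresh uniform random value $r_e\in[0,1]$, and $e$ joins the matching iff $r_e$ exceeds $r_{e'}$ for every currently-alive edge $e'$ sharing an endpoint with $e$; matched edges and all edges incident to them are then removed. The $O(\log n)$ worst-case high-probability bound is the standard Luby analysis transferred to the line graph: a constant fraction of alive edges dies per round in expectation, and this is upgraded to whp termination in $O(\log n)$ rounds by a standard phased Chernoff argument.

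For the edge-averaged $O(1)$ bound, I would show $\E[\sum_e T_e]=O(m)$. Since an edge $e=\{u,v\}$ completes only once both endpoints have committed, $T_e=\max(T_u,T_v)\le T_u+T_v$, and therefore
\[
  \sum_e T_e \;\le\; \sum_v \deg_G(v)\cdot T_v \;=\; \sum_{t\ge 0}\Phi_t, \qquad
  \Phi_t \;:=\; \sum_{v\in A_t}\deg_G(v),
\]
where $A_t$ is the set of still-undecided nodes at the start of round $t$ (so $\Phi_0=2m$). The theorem then reduces to the claim that this degree-weighted potential contracts by a constant factor per round in expectation: there is an absolute constant $c<1$ such that $\E[\Phi_{t+1}\mid \text{history through round }t]\le c\,\Phi_t$. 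Iterating gives $\E[\sum_t \Phi_t]\le \Phi_0/(1-c)=O(m)$, and hence $\E[\sum_e T_e]=O(m)$.

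Proving this contraction of $\Phi_t$ is where the real work lies, and is the main obstacle. I would decompose $\Phi_t=2|E_t|+B_t$, where $|E_t|$ counts alive edges and $B_t$ counts already-dead edges that still have exactly one undecided endpoint. Luby's good-node argument immediately gives $\E[|E_{t+1}|]\le c_1|E_t|$ for a constant $c_1<1$, taking care of the first summand. The subtle part is the boundary term: a node $v$ can remain undecided for many rounds after its high-degree neighbors have all been matched, while still contributing its full $\deg_G(v)$ to $\Phi_t$. My plan is to extend the good-node case analysis to the degree-weighted potential, splitting on whether an undecided $v$ has a constant fraction of its currently alive neighbors of degree at most a constant times its own current degree, or otherwise. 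In the first case, $v$ itself matches with constant probability, erasing $\deg_G(v)$ from the potential; in the second case, the $\Phi_t$-mass attached to $v$'s neighborhood is dominated by those higher-degree neighbors, which themselves match with constant probability and thereby kill the corresponding incident edges of $v$. A careful charging of this flavor should recover a true constant-factor drop of $\Phi_t$ per round, rather than the much weaker $\Theta(\log\Delta)$-factor drop that a naive per-node bound on $\E[T_v]$ would produce.
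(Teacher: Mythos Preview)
Your algorithm (Luby on the line graph) is the right one and matches the paper's, but you are aiming at a harder target than the theorem asks for. The paper treats maximal matching as an edge-labeling problem---this is precisely the line-graph-MIS viewpoint it invokes---so by its convention (compare the edge-coloring example following Definition~\ref{def:avg_complexity}) an edge $e$ has completed as soon as its own in/out-of-matching label is fixed; there is no separate node output whose commitment must also be awaited. That label is fixed exactly when $e$ leaves the alive graph (either $e$ is matched or one of its endpoints is matched elsewhere), so $\sum_e T_e=\sum_{t\ge 0}|E_t|$, and the classical line-graph Luby analysis---a constant fraction of the alive edges disappears in expectation each round---gives $\E\big[\sum_e T_e\big]=O(m)$ directly. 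This is the entire content of the paper's proof; no potential beyond $|E_t|$ is needed.

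By reading $T_e=\max(T_u,T_v)$ you instead commit to proving the degree-weighted bound $\sum_v\deg_G(v)\,T_v=O(m)$, and you leave its decisive step---a constant-factor contraction of $\Phi_t$, in particular of the boundary term $B_t$---as an unexecuted ``plan'' (``a careful charging of this flavor should recover\ldots''). That is a genuine gap, and the good-node argument you invoke does not close it: that argument controls $|E_t|$ only, whereas a node $v$ with large original degree $\deg_G(v)$ but tiny current alive degree contributes its full $\deg_G(v)$ to $\Phi_t$, and whether $v$ becomes decided in the next round depends solely on the alive graph around $v$, which carries no memory of $\deg_G(v)$. Note also that your target is essentially a degree-weighted $O(1)$ \emph{node}-averaged bound for maximal matching; in light of the paper's own $\Omega\big(\tfrac{\log\Delta}{\log\log\Delta}\big)$ node-averaged lower bound (Theorem~\ref{thm:lb_avgMatching}), a universal per-round constant-factor drop of $\Phi_t$ is at best a delicate statement and certainly not something the sketched case split delivers. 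The fix is simply to use the intended notion of edge completion, after which the argument in your first paragraph already \emph{is} the proof.
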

\begin{proof}[Proof Sketch] The result readily follows from a variant of Luby's algorithm~\cite{luby86}, adapted for maximal matching. The full proof is deferred to \Cref{app:missingAlgProofs}. The algorithm is to mark each edge $e=\{u, v\}$ with probability $\frac{1}{4(d_v+ d_u)}$ and to then add marked edges to the maximal matching if no other incident edge is marked. We then remove the matched vertices and repeat in the remaining graph. We show that per iteration, in expectation a constant fraction of the edges get removed. This part is somewhat analogous to the analysis of Luby's MIS algorithm~\cite{luby86}. In particular, we call each node $v$, with degree $d_v$, \emph{good} if at least $1/3$ of its neighbors have degree at most $d_v$, we show that at least $1/2$ of the edges are incident to good nodes, and each good node is matched with at least a positive constant probability. Hence, for at least $1/2$ of edges, each of them has a constant probability of being removed, which means in expectation a constant fraction of the edges gets removed. See \Cref{app:missingAlgProofs} for the proof.
\end{proof}
We comment that the above statement is also implicit in the classical maximal matching result of Israeli and Itai \cite{IsraelI86}\footnote{We thank an anonymous PODC'22 reviewer for bringing this to our attention.}. 

\begin{theorem}
    There is a deterministic \CONGEST algorithm that computes a maximal matching and has edge-averaged complexity $O(\log^2 \Delta + \log^* n)$, node-averaged complexity $O(\log^3 \Delta + \log^* n)$, and worst-case complexity $O(\log^2 \Delta \cdot \log n)$.
\end{theorem}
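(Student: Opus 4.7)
The plan is to build on the deterministic maximal matching framework of Fischer~\cite{fischer2020improved} (combined with ideas from Ahmadi-Kuhn~\cite{ahmadi2018distributed}) and extract a single quantitative guarantee suitable for averaged analysis. Concretely, after a one-time $O(\log^* n)$ preprocessing step (a Linial-style coloring used to drive the deterministic subroutines in the \CONGEST{} model), the algorithm runs in $O(\log n)$ phases, where each phase takes $O(\log^2\Delta)$ rounds and, crucially, matches a constant fraction of the remaining edges. More precisely, one shows that if $E_i$ denotes the set of edges still present at the start of phase $i$ (with both endpoints unmatched), then $|E_{i+1}|\le (1-c)\,|E_i|$ for some absolute constant $c>0$. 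An unmatched node whose neighbors are all matched commits at the end of the same phase in which its last incident edge is removed.

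For the edge-averaged complexity, each edge $e$ charges $O(\log^2\Delta)$ rounds to every phase in which it is still present. Summing:
\[
\sum_{e\in E}T_e^G(\calA)\ \le\ O(\log^* n)\cdot|E|\ +\ O(\log^2\Delta)\cdot\sum_{i\ge 1}|E_i|\ \le\ O(\log^* n+\log^2\Delta)\cdot|E|,
\]
using $\sum_i |E_i|=O(|E|)$ from the geometric decay. Dividing by $|E|$ yields the claimed $O(\log^2\Delta+\log^* n)$ edge-averaged bound.

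For the node-averaged complexity the argument is more delicate, and this is the key step. A node $v$ is still active in phase $i$ iff it has a remaining incident edge in $E_i$, so the number of active nodes in phase $i$ is at most $\min\{n,\,2|E_i|\}$. Splitting the sum at $i^\star:=\lceil\log_{1/(1-c)}(|E|/n)\rceil=\Theta(\log\Delta)$, where $|E_{i^\star}|$ first drops below $n/2$, gives
\[
\sum_{v\in V}T_v^G(\calA)\ \le\ O(\log^* n)\cdot n\ +\ O(\log^2\Delta)\cdot\Bigl(\sum_{i\le i^\star}n\ +\ \sum_{i>i^\star}2|E_i|\Bigr)\ =\ O\bigl((\log^3\Delta+\log^* n)\cdot n\bigr),
\]
where the first inner sum is $O(n\log\Delta)$ by the trivial bound and the second is $O(n)$ by geometric decay past $i^\star$. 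Dividing by $n$ yields the $O(\log^3\Delta+\log^* n)$ node-averaged bound. The worst-case bound $O(\log^2\Delta\cdot\log n)$ follows since after $O(\log n)$ phases of constant-factor edge decay there are no remaining edges, so every node has committed.

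The main obstacle is the first step: extracting from Fischer's and Ahmadi-Kuhn's machinery a single phase of $O(\log^2\Delta)$ rounds that provably removes a constant fraction of the \emph{remaining} edges in \CONGEST, rather than merely reducing some potential function by a constant factor. This requires recasting their matching subroutine so that the large matching it produces is guaranteed to hit a constant fraction of the current edge set (e.g.\ via a near-maximal fractional matching followed by deterministic rounding), and verifying that all intermediate computations (rounding, conflict resolution, propagation of $\mathrm{(degree{+}1)}$-list colorings inherited from the initial $\log^* n$-round coloring) can be executed with $O(\log n)$-bit messages. Once this guarantee is in hand, the averaging calculations above are routine, and it is straightforward to check that nodes and edges stop participating at most one round after committing, so the same bounds hold for average termination time.
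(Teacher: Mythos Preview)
Your overall plan and your averaging calculations match the paper's approach closely: one phase of $O(\log^2\Delta)$ rounds (after a one-time $O(\log^* n)$ coloring) that deletes a constant fraction of the remaining edges, from which the edge-averaged bound follows by geometric decay and the node-averaged bound follows by grouping $\Theta(\log\Delta)$ phases. Your node-averaged accounting via the split at $i^\star$ is a clean variant of the paper's ``$\Theta(\log\Delta)$ phases halve the number of remaining nodes'' argument; the two are equivalent.

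The gap is exactly the step you flag as ``the main obstacle'': you do not actually supply a phase that removes a constant fraction of the edges. Gesturing at ``near-maximal fractional matching followed by deterministic rounding'' is not enough, because a matching that is merely a constant-factor approximation to the maximum matching (which is what Fischer's and Ahmadi--Kuhn's unweighted results directly give) need not hit a constant fraction of the \emph{edges}. The paper closes this gap with a specific and short trick you are missing: take the fractional matching $f_e=1/(d_u+d_v)$ and assign edge \emph{weights} $w_e=d_u+d_v$, so that the weighted fractional value is exactly $|E|$; then invoke the Ahmadi--Kuhn--Oshman weighted rounding to obtain in $O(\log^2\Delta+\log^*\Delta)$ rounds an integral matching of weight $\Omega(|E|)$. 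Since a matched edge of weight $w_e$ kills at least $w_e/2$ incident edges and each edge is killed at most twice, any integral matching of weight $W$ removes at least $W/4$ edges, yielding the constant-fraction edge deletion. Once you plug in this weighting idea, your averaging argument goes through verbatim.
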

\begin{proof}
Let us describe one iteration of the algorithm, which takes $O(\log^2 \Delta + \log^* n)$ rounds and removes a constant fraction of the nodes. 
Let $E$ be the set of all edges in this iteration and consider the fractional matching where we assign to each edge $e=\{v, u\}\in E$ the fractional value $f_e=\frac{1}{d_v+d_u}$. Notice that this is indeed a valid fractional matching in the sense that for each node $v$ we have $\sum_{e \,\textit{ s.t.}\, v\in e} f_{e} \leq d_{v} \cdot \frac{1}{d_v} =1 $. Let us define for each edge $e=\{v, u\}$ a weight $w_e=(d_v+d_u)$. Then, the aforementioned fractional matching has total weight $\sum_{e \in E} f_e \cdot w_e = \sum_{e \in E} 1 = |E|.$ Using the deterministic rounding algorithm of Ahmadi, Kuhn, and Oshman~\cite{ahmadi2018distributed} for weighted matchings, we can compute an integral matching in $O(\log^2 \Delta + \log^* n)$ rounds whose weight is at least a constant factor of the fractional matching that we start with. That is, we get an integral matching with weight $|E|/10$. We add all the edges of this matching to our output maximal matching, and we then remove all the edges incident to matched nodes, and continue to the next iteration. Notice that for each edge $e=\{v, u\}$ that gets added to the matching, we can say it killed the $d_v + d_v-1 \geq \frac{d_v+d_u}{2} = w_e/2$ edges that share an endpoint with $e$, and this way each edge is killed at most twice, once by each endpoint. Hence, for any integral matching with weight $W$, removing its matched vertices removes at least $W/4$ edges from the graph. Therefore, with our rounding of the fractional matching that had weight $|E|$, we have found an integral matching whose addition removes at least $|E|/40$ edges. Since per iteration we spend $O(\log^2 \Delta)$ rounds and remove a $1/40$ fraction of the edges, we conclude that the edge-averaged complexity is $O(\log^2 \Delta + \log^* n)$.

Note that the $O(\log^* n)$ term is not needed in each repetition and it suffices to have it only in the first repetition. More concretely, in the algorithm of \cite{ahmadi2018distributed}, this term changes to an $O(\log^* \Delta)$ term if we are already given a $\poly(\Delta)$ coloring of the nodes \cite{ahmadi2018distributed} and we can compute that initially before all the repetitions in $O(\log^* n)$ rounds using Linial's classic algorithm\cite{linial1987LOCAL}. Hence, after having spent this $O(\log^* n)$ rounds at the start, each repetition takes $O(\log^2 \Delta + \log^* \Delta) = O(\log^2 \Delta)$ rounds and removes $1/40$ fraction of the edges. This directly also shows that after $O(\log^2 \Delta \cdot\log n + \log^* n) = O(\log^2 \Delta \cdot\log n)$ rounds, all the edges are removed and thus the algorithm has terminated. 

We next argue about the node-averaged complexity. If we repeat the algorithm for $\Theta(\log \Delta)$  iterations, for a total round complexity of $O(\log^3 \Delta + \log^* n)$, then the total number of edges in the graph is reduced by a factor $(1-1/40)^{\Theta(\log \Delta)} \leq 1/(2\Delta)$. Hence, the total number of remaining nodes (which must have degree at least one) is decreased by at least a factor $2$. The reason is that if we had $n$ nodes before, we had at most $n\Delta/2$ edges, and after the reduction of edges by a $1/(2\Delta)$ factor, the number of remaining edges is at most $n/4$ and thus we have at most $n/2$ nodes of degree at least $1$, i.e., at least $n/2$ nodes have degree $0$ and are thus removed. 
This means that in $O(\log^3 \Delta + \log^* n)$ rounds, the number of nodes reduces by a factor $1/2$ and thus the node-averaged complexity is $O(\log^3 \Delta + \log^* n)$.   
\end{proof}

\subsection{Sinkless Orientation}
The randomized sinkless orientation algorithm of Ghaffari and Su~\cite{GS17} already has node-averaged complexity  $O(1)$.\footnote{This statement is directly correct for the algorithm that they provide for graphs with a minimum degree of at least $500$. We believe that their extension to graphs with min-degree in $[3, 500]$ can also be adapted to have this $O(1)$ node-averaged complexity, basically by replacing their deterministic ruling set subroutine with a randomized one.} We next show a deterministic algorithm that achieves an $O(\log^* n)$ node-averaged complexity. Note that the worst-case complexity of this problem has an $\Omega(\log n)$ lower bound even in $3$-regular graphs~\cite{brandt2016lower}.
\begin{theorem}\label{thm:sinkless}
    There is a deterministic distributed \LOCAL model algorithm to compute a sinkless orientation of any $n$-node graph with minimum degree $3$, with node-averaged complexity $O(\log^* n)$ and worst-case complexity $O(\log n)$.
\end{theorem}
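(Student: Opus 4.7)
The plan is to combine an $O(\log^* n)$-round coloring setup with a deterministic sinkless-orientation procedure modeled on the algorithm of \cite{GS17} (or its derandomization via \cite{rozhonghaffari20}), arranged so that the number of still-uncommitted nodes decays geometrically across the rounds of the orientation phase. The setup uses Linial's algorithm to compute a proper $\poly(\Delta)$-coloring (or, when $\Delta$ is unbounded, a Barenboim--Elkin arbdefective coloring) that induces a priority ordering on the vertices; each node contributes $O(\log^* n)$ to the running time for this setup.

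In the orientation phase, the algorithm iteratively applies a Ghaffari--Su-style local rule: each still-uncommitted node attempts to commit an outgoing edge toward a neighbor whose higher-priority orientation has already been resolved. The min-degree $3$ hypothesis ensures enough slack for the rule to make progress at every stage. The key structural claim I would establish is that after $r$ additional rounds of the orientation phase, only $O(n/2^{\Omega(r)})$ nodes remain uncommitted. Given this claim, the node-averaged complexity is bounded by $O(\log^* n) + \frac{1}{n}\sum_{r \geq 0} O(r)\cdot O(n/2^{\Omega(r)}) = O(\log^* n)$, while the worst case is $O(\log^* n) + O(\log n) = O(\log n)$, matching the $\Omega(\log n)$ lower bound of \cite{brandt2016lower}.

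The main obstacle is proving the geometric decay deterministically. The original worst-case analysis of \cite{GS17} only establishes $O(\log n)$ rounds for termination and does not directly yield a per-round constant-fraction commitment rate. To obtain the decay, I expect to need a potential-function argument that exploits min-degree $3$ to show that a constant fraction of currently uncommitted nodes are ``locally ready'' to commit in each round -- parallel to Luby's MIS analysis where a constant fraction of edges die per round despite individual nodes surviving up to $O(\log n)$ rounds. An alternative route would be to derandomize the randomized Ghaffari--Su algorithm (which does have this property and already yields $O(1)$ node-averaged complexity, cf.\ the footnote above) while preserving the per-round decay, for instance via the method of conditional expectations applied locally or via an appropriate low-diameter network decomposition that respects the coloring's priority ordering.
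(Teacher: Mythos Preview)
Your proposal identifies the right \emph{target}---a geometric decay in the number of uncommitted nodes after an $O(\log^* n)$ setup---but it does not supply a mechanism that achieves it, and you yourself flag this as ``the main obstacle.'' A priority ordering from a $\poly(\Delta)$-coloring does not by itself give any reason for a constant fraction of uncommitted nodes to become ``ready'' each round: the residual uncommitted subgraph can be a large connected tree-like structure in which \emph{no} node has a committed neighbor, so your local rule simply stalls. The $\Omega(\log n)$ lower bound of \cite{brandt2016lower} already tells you that some nodes need depth-$\Theta(\log n)$ information; what you need is a structural argument that such nodes are \emph{exponentially rare}, and nothing in a coloring-based priority scheme enforces that. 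Your fallback suggestions (a potential function ``parallel to Luby,'' or derandomizing \cite{GS17} while preserving per-round decay) are placeholders, not arguments.

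The paper's approach is quite different in spirit and worth comparing. Rather than a single-shot coloring followed by local commits, it uses \emph{recursive clustering and contraction}. After disposing of short cycles directly (each node on a short cycle gets an out-edge from the cycle's preferred orientation), the remaining graph is locally tree-like; one computes a $(2r+1)$-ruling set in $O(\log^* n)$ rounds, forms clusters of radius $\Theta(r)$ around its nodes, and observes that in a $3$-regular tree-like neighborhood each cluster has at least $3\cdot 2^{r}$ nodes. Inside each cluster only the three paths to chosen neighboring clusters (at most $O(r)$ nodes) stay alive; all other edges are oriented toward those paths and their nodes terminate. Contracting clusters yields a new min-degree-$3$ instance on which one recurses. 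The key arithmetic is that per level the alive-node count shrinks by a factor $\Theta(r/2^{r})$ while the round-cost dilation is only $\Theta(r)$, so for $r$ a large enough constant the total node-round sum is dominated by the first level's $O(\log^* n)$. Worst-case $O(\log n)$ is obtained by switching to the standard $O(\log n)$ algorithm once only $n/\poly\log n$ nodes remain. The decay here is not a per-round potential drop but a \emph{per-level} count drop engineered by the clustering, which is exactly the missing idea in your outline.
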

\begin{proof}[Proof Sketch]
We sketch the high-level idea for the special class of a high-girth graph $G$ of constant maximum degree. As a first step, we compute an MIS $S$ of $G^{2r+1}$ for a sufficiently large constant $r$ and use this set $S$ to cluster the graph. This produces clusters of diameter $O(r)$ such that the complete $r$-hop neighborhood of any node $v\in S$ is contained in its cluster. Note that this clustering can be computed in $O(\log^* n)$ rounds. We then build a virtual graph between the cluster centers (i.e., the nodes in $S$) such that each node is connected to three other cluster centers through disjoint paths, where those paths form the edges of this virtual graph. This is possible if the graph has a minimum degree $3$ and the girth is at least $c\cdot r$ for a sufficiently large constant $c$. If we compute a sinkless orientation of this virtual graph, we can obtain a sinkless orientation of $G$ by orienting the paths according to the sinkless orientation on the virtual graph and orienting all other edges in $G$ towards the nodes that participate in the virtual graph. In this way, we essentially reduce the problem of computing a sinkless orientation on $G$ to computing a sinkless orientation on a virtual graph with $|S|\lesssim n/3^r$ and where communication between neighbors costs $O(r)$ rounds on $G$. All nodes that are not part of the virtual graph are decided after $O(\log^* n)$ rounds. If the constant $r$ is chosen large enough, $O(r)\ll 3^r$ and we can keep the node-averaged complexity in $O(\log^* n)$. A full proof that also works for general graphs appears in \Cref{app:missingAlgProofs}.
\end{proof}

\section{Lower Bounds}
\label{sec:LB}
In this section, we prove lower bounds for the MIS problem. More precisely, we show a lower bound on the average complexity of MIS. As a key technical tool, we modify the KMW construction \cite{kuhn16_jacm} and obtain also a lower bound on the worst-case complexity of MIS, that holds already on trees.
In fact, the original KMW lower bound applies to the problem of finding a good approximation for the minimum vertex cover, and through a chain of reductions, it is shown that this implies a lower bound for MIS on line graphs. We modify the KMW lower bound construction to directly provide a lower bound for MIS on trees, that holds even for randomized algorithms, and then we show that this result also implies a lower bound for the average complexity of the MIS problem. The technical aspects of our proof follow the simpler version of the proof of the KMW lower bound shown by Coupette and Lenzen \cite{breezing}.

\subsection{Summary of the KMW Lower Bound}
On a high level, the KMW lower bound is obtained by showing that there exists a family of graphs satisfying that:
\begin{itemize}[noitemsep]
    \item There are two sets of nodes $S_0$ and $S_1$ that have the same view up to some distance $k$;
    \item Both $S_0$ and $S_1$ are independent sets;
    \item Every node of $S_0$ has $1$ neighbor in $S_1$, and every node of $S_1$ has $\beta$ neighbors in $S_0$, for some parameter $\beta$;
    \item $S_0$ is much larger than $S_1$, and it contains the majority of the nodes of the graph.
\end{itemize}
In these graphs, since $S_0$ is an independent set that contains the majority of the nodes, if we want to cover all edges, we could just select all nodes except for the ones of $S_0$, and obtain a very small solution to the vertex cover problem. But since nodes of $S_0$ and $S_1$ have the same view up to distance $k$, then, they either spend more than $k$ rounds to understand in which set they are, or they must have the same probability of joining the vertex cover. This probability of being selected should be at least $1/2$ if we want every edge between $S_0$ and $S_1$ to be covered, implying that any algorithm running in at most $k$ rounds fails to produce a small solution, because in expectation $|S_0|/2$ nodes of $S_0$ get selected.

More in detail, in order to obtain a lower bound for vertex cover approximations, \cite{kuhn16_jacm} follows this approach:\\
    \textbf{Define Cluster Tree Skeletons.} These trees define some properties that each graph in the family should satisfy, and they are parametrized by a value $k$. In general, given a cluster tree $CT_k$, there could be many graphs $G_k$ that satisfy the properties required by $CT_k$. The required properties are the following. 
    \begin{itemize}[noitemsep]
        \item Each node of a cluster tree $CT_k$ corresponds to an independent set of nodes in the graph $G_k$.
        \item Each edge $(v_1,v_2)$ in this tree is labeled with two values, $x$ and $y$, that dictate that nodes of the sets corresponding to $v_1$ and $v_2$ in the graph must be connected with a $(x,y)$-biregular graph.
    \end{itemize} 
    \textbf{Show Isomorphism.} Cluster trees are defined in a specific way that allows proving the following. Let $CT_k$ be a cluster tree, and let $G_k$ be a graph derived from $CT_k$. It is shown that, there are two ``special'' nodes in $CT_k$ that correspond to two ``special'' clusters $S_0$ and $S_1$ of $G_k$, such that, if $G_k$ has girth at least $2k+1$, then nodes of $S_0$ and $S_1$ have isomorphic views, up to distance $k$.\\
    \textbf{High-Girth Graphs.} It is shown that it is indeed possible to build a graph $G_k$ that satisfies the requirements of $CT_k$ and that has girth of at least $2k+1$. On a high level, this is shown by first constructing a low girth graph $G_k$, and then taking a high-girth lift of it.\\
    \textbf{Obtain Lower Bounds.} It is then shown that having the same view up to distance $k$ implies that, for a randomized algorithm running in graphs where IDs are assigned uniformly at random, nodes of $S_0$ and $S_1$ must have the same probability of joining a vertex cover, implying that many nodes of $S_0$ (which is shown to contain the majority of the nodes of $G_k$) must join the solution, while there exists a very small solution, not containing any node of $S_0$.
    
For a more detailed summary of the KMW lower bound, we refer the reader to Section 1.1 of \cite{breezing}---in this paper, Coupette and Lenzen provide a new and easier proof for the KMW lower bound.

\subsection{Our Plan}
In this work, we define our cluster tree skeletons in a very similar way as in Kuhn, Moscibroda, and Wattenhofer \cite{kuhn16_jacm}, but with a small and essential difference. Also, we slightly change the properties that a graph derived from a cluster tree should satisfy. On a high level, in our construction, every node of $CT_k$ corresponds to a set of nodes of $G_k$, which, differently from the original construction, is not independent anymore, with the only exception being the ``special'' set $S_0$ of nodes, which remains an independent set. 
We then show that, in each cluster of $G_k$ containing nodes that are neighbors of nodes of $S_0$, no large independent set exists, and that our construction is such that at least half of the nodes of $S_0$ must join the independent set in any solution for the MIS problem. In this way, we obtain that any algorithm running in $k$ rounds, on the one hand, cannot produce a large independent set in $S_1$, because it simply does not exist, and on the other hand, in order to guarantee the maximality constraint, it must produce a large independent set in $S_0$. If nodes of $S_0$ and $S_1$ have the same view up to distance $k$, the above is a contradicting requirement, preventing any $k$-round algorithm to exist.

As already mentioned before, the technical aspect of our proof is heavily based on the simplified version of the proof of the KMW lower bound shown by Coupette and Lenzen \cite{breezing}. In fact, we follow the same approach to prove the isomorphism between nodes of $S_0$ and $S_1$. 
We then deviate from this proof in order to show that high-girth graphs satisfying the required properties exist: we need to make sure that, in each cluster of $G_k$ containing nodes that are neighbors of nodes of $S_0$, no large independent set exists, while also making sure that the girth is at least $2k+1$. Actually, we do not achieve exactly this result: we first build a low girth graph satisfying the independence requirements, and then we make use of known graph-theoretical results \cite{ALM02} to lift it into a graph that also satisfies the independence requirements, and such that each node has a large probability of not seeing any cycle within distance $k$. Hence, the obtained graph may not have a high girth, but we prove that this weaker property is sufficient for our purposes.

\subsection{Cluster Trees}
\label{sec:clustertrees}

A cluster tree skeleton ($CT$) is a tree parametrized by an integer $k$, that we use to compactly describe a family of graphs $\mathcal{G}_k$ called cluster tree graphs. We now show how $CT_k$ is defined. We will later present the family of graphs described by a cluster tree skeleton $CT_k$. 

\begin{figure}[t]
	\centering
	\includegraphics[width=0.80\textwidth]{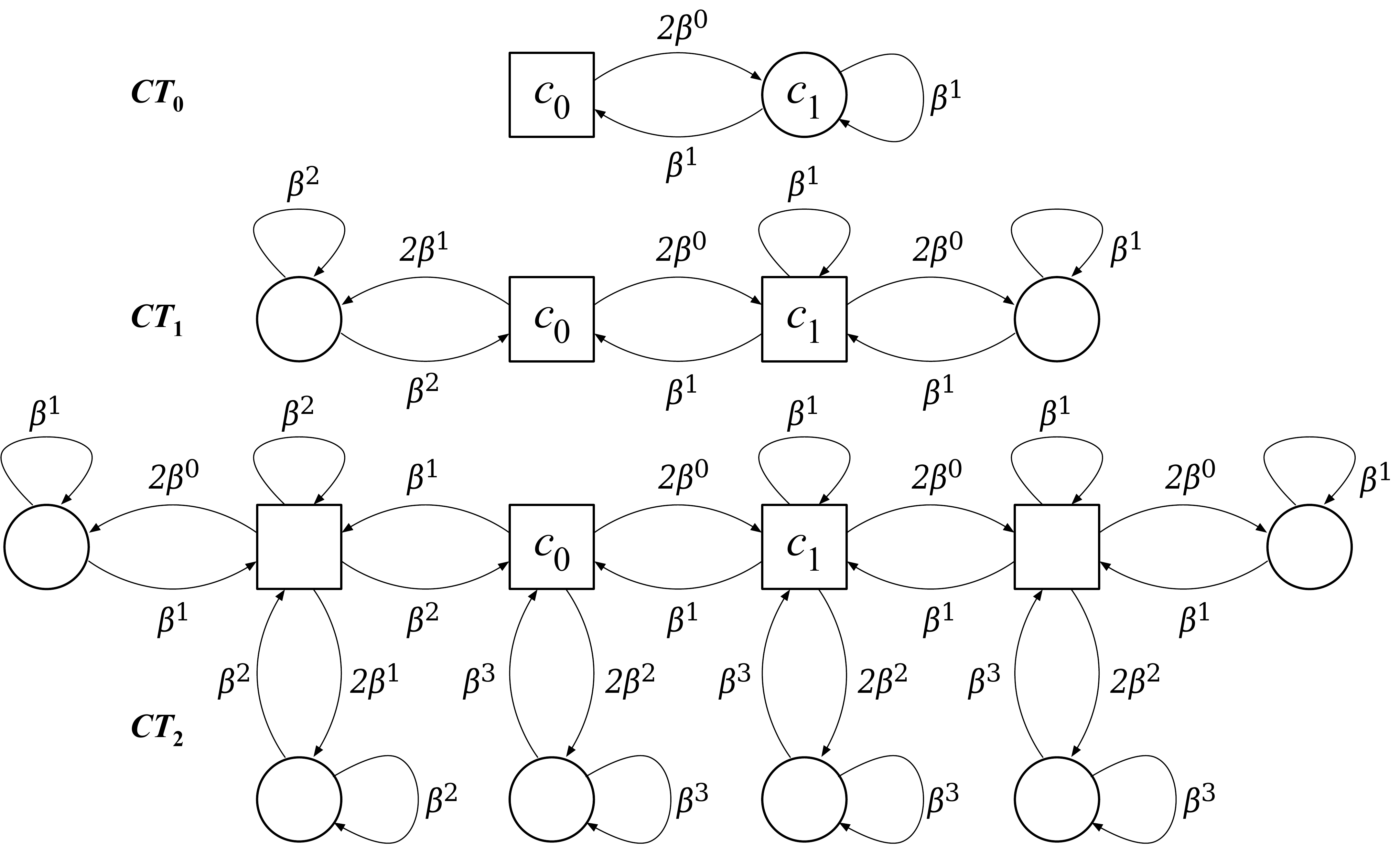}
	\caption{Examples of our cluster tree skeleton for $k\in\{0,1,2\}$; the squares represent internal nodes while circles represent leaves.} 
	\label{fig:cluster-trees}
\end{figure}

\paragraph{The Cluster Tree Skeleton.}
Differently from \cite{breezing}, our cluster tree skeletons $CT_k$ contain self loops, hence they are trees only when ignoring self loops. In fact, in our definition of $CT_k$, every node except one, will have a self loop. We denote with $(u,v,x)$ a directed edge $(u,v)$ labeled with some parameter $x$.
Given a parameter $\beta$, the cluster tree skeleton $CT_k$ is defined inductively as follows (see \Cref{fig:cluster-trees} for an example).
\begin{itemize}
    \item Base case: $CT_0 = (V_0,E_0)$, where $V_0 = \set{c_0,c_1}$, $E_0 = \set{(c_0,c_1,2\beta^0),(c_1,c_0,\beta^1),(c_1,c_1,\beta^1)}$. The node $c_0$ is called \emph{internal}, while $c_1$ is a \emph{leaf}. The node $c_0$ is the \emph{parent} of $c_1$, while $c_0$ has no parent.
    
    \item Inductive step: given $CT_{k-1}$, we define $CT_{k}$ as follows. 
         To each internal node $v$ of $CT_{k-1}$, we connect a leaf $\ell$ by adding the edges $(v,\ell,2\beta^{k})$ and $(\ell,v,\beta^{k+1})$. Moreover, we add the edge $(\ell,\ell,\beta^{k+1})$.
        Then, let $u$ be a leaf of $CT_{k-1}$ that is connected to its parent $p(u)$ through the edge $(u,p(u),\beta^i)$. We connect to each such node $u$ a node $\ell_j$ for each $j \in  \set{0,\ldots,k} \setminus \set{i}$ by adding the edges $(u,\ell_j,2\beta^j)$ and $(\ell_j,u,\beta^{j+1})$. Moreover, we add the edge $(\ell_j,\ell_j,\beta^{j+1})$. Node $u$ is now \emph{internal} in $CT_k$, and it is the \emph{parent} of all the added \emph{leaves} $\ell_j$.
\end{itemize}

\paragraph{The Graph Family.}\label{sec:graphfamily}
Given $CT_k = (V_k, E_k)$, the graphs $G_k \in \mathcal{G}_k$ are all the ones satisfying the following:
\begin{itemize}
    \item For each node $u \in V_k$, there is a set of nodes $S(u)$ in $G_k$;
    \item Let $u,v\in V_k$ be two nodes of $CT_k$, and let $S(u)$, $S(v)$ be two set of nodes in $G_k$ that represent $u$ and $v$, respectively. Also, let $(u,v,x) \in E_k$ be the directed edge from $u$ to $v$ labeled with a parameter $x$. Then, in $G_k$, all nodes in $S(u)$ must have exactly $x$ neighbors in $S(v)$.
    \item There are no additional edges in $G_k$.
\end{itemize}

\paragraph{Observations on $CT_k$ and $G_k$.}
Note that $CT_k$ is defined such that, if two (different) nodes are connected through an edge $(u,v,x)$, then there also exists some edge $(v,u,y)$ for some $y$. This implies that, by fixing the size of a single set $S(u)$, the size of all the other sets, and the maximum degree of the graph, are determined by the labels of the edges of $CT_k$. However, the exact way to connect nodes of different sets is not prescribed by the structure of $CT_k$, and hence there is freedom in realizing those connections. We will later show that it is possible to construct $G_k$ such that most of the nodes do not see cycles within distance $k$, while the maximum degree and the total number of nodes is not too large. 

We now observe some properties on the structure of $CT_k$.
\begin{observation}[Structure of $CT_k$]\label{obs:ctk}
    Each node of $CT_k$ is either internal or leaf. 
\begin{enumerate}[noitemsep]
    \item Every node $v \neq c_0$ has an edge $(v,v,\beta^i)$ for some $i$. We define $\psi(v) = i$, that is, $\psi(v)$ represents the exponent of the self loop of $v$.
    \item Each node $v$, except for $c_0$, has a parent $p(v)$, and has edges $(v,p(v),\beta^{i+1})$, $(p(v),v,2\beta^{i})$, $(v,v,\beta^{i+1})$, for some $i$ that satisfies $0 \le i \le k-1$ for internal nodes, and  $0 \le i \le k$ for leaves. 
    
    \item Let $v \neq c_0$ be an internal node, and let $i = \psi(v)$. Node $v$ has $k$ children that can be reached with edges $(v,u_j,2\beta^j)$ for all $j \in \set{0,\ldots,k} \setminus \set{i}$.
    
    \item Node $c_0$ has $k+1$ children that can be reached with edges $(c_0,u_j,2\beta^j)$ for all $j \in \set{0,\ldots,k}$.
\end{enumerate}
\end{observation}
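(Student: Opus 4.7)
The plan is a straightforward induction on $k$, mirroring the inductive definition of $CT_k$ in \Cref{sec:clustertrees}. I will verify each of the four listed items simultaneously, as they are tightly coupled: the exponent of the self-loop of $v$ (item 1) will always coincide with the exponent on the parent edge from $v$ (item 2), and this is precisely the exponent that is \emph{excluded} when enumerating the children of $v$ (item 3). The strategy is to carry this invariant through the inductive step.

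For the base case $k=0$, I will directly read off the edges of $CT_0 = (\{c_0,c_1\},E_0)$: $c_1$ has the self-loop $(c_1,c_1,\beta^1)$, giving $\psi(c_1)=1$; its parent edges $(c_1,c_0,\beta^1)$ and $(c_0,c_1,2\beta^0)$ fit item 2 with $i=0$ (which lies in the leaf range $0\le i \le 0$ for $k=0$); and $c_0$ has a unique child $c_1$ reached via $(c_0,c_1,2\beta^0)$, matching item 4 for $j=0$. Item 3 is vacuous because the only internal node is $c_0$.

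For the inductive step, assume the observation holds for $CT_{k-1}$ and analyze $CT_k$. There are three kinds of nodes to check:
\begin{enumerate}[noitemsep]
    \item \emph{Newly added leaves $\ell$ attached to internal nodes $v$ of $CT_{k-1}$.} By construction, $\ell$ has the self-loop $(\ell,\ell,\beta^{k+1})$ and parent edges $(\ell,v,\beta^{k+1})$, $(v,\ell,2\beta^k)$. This gives $\psi(\ell)=k+1$ and $i=k$, which lies in the leaf range $0\le i\le k$ of $CT_k$; items 1 and 2 follow.
    \item \emph{Newly added leaves $\ell_j$ attached to former leaves $u$ of $CT_{k-1}$.} By construction $\ell_j$ has self-loop $(\ell_j,\ell_j,\beta^{j+1})$ and parent edges with exponent $j+1$ and $j$ respectively, so $\psi(\ell_j)=j+1$ with $j\in\{0,\dots,k\}\setminus\{\psi(u)\}$; this lies in $\{0,\ldots,k\}$, so items 1 and 2 hold.
    \item \emph{Former leaves $u$ of $CT_{k-1}$ that become internal in $CT_k$.} Here the self-loop and parent edge of $u$ are unchanged, so item 1 still assigns $\psi(u)=i$ where $i=\psi(u)$ was the leaf exponent in $CT_{k-1}$. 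By the inductive hypothesis (item 2 for leaves of $CT_{k-1}$), $i\in\{1,\ldots,k\}$, i.e.\ the ``internal-node'' shift gives $i-1\in\{0,\ldots,k-1\}$, matching the internal range of item 2 in $CT_k$. Then by construction the new children are exactly $\{\ell_j : j\in\{0,\ldots,k\}\setminus\{i\}\}$, giving item 3.
\end{enumerate}
Item 4 for $c_0$ follows by a separate tally: in $CT_0$, $c_0$ has one child; each inductive step adds exactly one new leaf to $c_0$ (the case $v=c_0$ of the ``attach a leaf to each internal node'' rule, at exponent $2\beta^k$), so after $k$ steps $c_0$ has $k+1$ children with edges $(c_0,u_j,2\beta^j)$ for $j\in\{0,\ldots,k\}$.

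I do not anticipate a genuine obstacle; the only thing requiring a bit of care is to keep the index shift straight in item 2, where the parent edge exponent $\beta^{i+1}$ has to be reconciled with $\psi(v)=i+1$ from item 1, and to notice that when a leaf of $CT_{k-1}$ becomes an internal node of $CT_k$ its admissible range of $i$ contracts by one (from $\{0,\ldots,k-1\}$ as a leaf to $\{0,\ldots,k-2\}$ as an internal node one level up), which is automatic because the inductive hypothesis is applied to $CT_{k-1}$.
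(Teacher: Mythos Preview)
The paper states this as an \emph{observation} and gives no proof; it is meant to follow directly by unwinding the inductive definition of $CT_k$. Your induction on $k$ is exactly the natural way to make this explicit, and most of the case analysis is correct.

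There is, however, one genuine gap in your inductive step. You enumerate three kinds of nodes to check (new leaves attached to old internal nodes, new leaves attached to old leaves, and old leaves that become internal), but you omit a fourth: internal nodes $v\neq c_0$ of $CT_{k-1}$ that \emph{remain} internal in $CT_k$. For items 1 and 2 this is harmless, since those edges are unchanged and the admissible range for internal nodes only grows when passing from $CT_{k-1}$ to $CT_k$. But item 3 is not automatic: by the inductive hypothesis such a $v$ has $k-1$ children in $CT_{k-1}$, reached via edges $(v,u_j,2\beta^j)$ for $j\in\{0,\ldots,k-1\}\setminus\{\psi(v)\}$, and you must observe that the inductive step attaches to $v$ exactly one additional leaf via $(v,\ell,2\beta^k)$, so that in $CT_k$ the children of $v$ are indexed by $\{0,\ldots,k\}\setminus\{\psi(v)\}$ as claimed. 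This is easy but it is the content of item 3 for these nodes, and it is currently missing from your write-up.

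A minor slip: your closing paragraph asserts that the admissible range ``contracts by one (from $\{0,\ldots,k-1\}$ as a leaf to $\{0,\ldots,k-2\}$ as an internal node one level up).'' In fact the leaf range in $CT_{k-1}$ and the internal range in $CT_k$ are both $\{0,\ldots,k-1\}$, so there is no contraction at all; your case~3 already handles this correctly, so this is only a cosmetic error in the commentary.
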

We denote with $S^{-1}(v)$, for $v \in V(G_k)$, the node $v' \in V_k$ satisfying $v \in S(v')$.
We now label the edges $(u,v)$ of $G_k$, with a label that depends on the label of the edge $(u',v')$ of $CT_k$ from which $(u,v)$ comes from. 
\begin{definition}[Labeling of the Edges of $G_k$]\label{def:labeling}
    For every edge $(u,v)$ of $G_k$, let $u' = S^{-1}(u)$, and $v' = S^{-1}(v)$. Let $(u',v',x)$ be the edge of $CT_k$ connecting $u'$ and $v'$. We mark the edge $(u,v)$ with $\beta^i$ if $x = \beta^i$ or $x = 2 \beta^i$. Similarly as in the case of $CT_k$, we may refer to this edge of $G_k$ with $(u,v,\beta^i)$. If $u' = v'$, we additionally mark each edge with the label $\mathsf{self}$.
\end{definition}
\begin{observation}[Number of Neighbors with a Specific Label in $G_k$]\label{obs:numberofedges}
    Every node of $G_k$ that corresponds to an internal node of $CT_k$ has exactly $2 \beta^i$ outgoing edges marked $\beta^i$, for all $i \in \set{0,\ldots,k}$.
    Every node of $G_k$ that corresponds to a leaf of $CT_k$ gets exactly $2\beta^i$ outgoing edges labeled $\beta^i$, for exactly one $i \in \set{0,\ldots,k+1}$.
\end{observation}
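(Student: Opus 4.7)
The plan is to do a direct case analysis over the three kinds of nodes in $CT_k$ (the root $c_0$, non-root internal nodes, and leaves), since the statement is essentially a bookkeeping consequence of Observation~\ref{obs:ctk} together with the labeling rule of Definition~\ref{def:labeling}. The key ingredient I will repeatedly use is that whenever $CT_k$ contains a directed edge $(u', v', x)$, the graph family requirement forces every node of $S(u')$ to have exactly $x$ neighbors in $S(v')$ in $G_k$, and by Definition~\ref{def:labeling} each of those $G_k$-edges carries the label $\beta^i$ when $x \in \set{\beta^i, 2\beta^i}$.

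I would first dispose of the root. By Observation~\ref{obs:ctk}(4), $c_0$ has no parent and no self-loop, and its outgoing edges in $CT_k$ are exactly $(c_0, u_j, 2\beta^j)$ for $j \in \set{0, \ldots, k}$. Each such edge contributes precisely $2\beta^j$ outgoing $G_k$-edges labeled $\beta^j$ from any fixed node in $S(c_0)$, which gives the internal-node claim for $v = c_0$. Next I would handle a non-root internal node $v$ with $\psi(v) = i$. Observation~\ref{obs:ctk}(2)–(3) lists the outgoing edges of $v$ in $CT_k$ as the parent edge $(v, p(v), \beta^{i})$, the self-loop $(v, v, \beta^{i})$, and the $k$ child edges $(v, u_j, 2\beta^j)$ for $j \in \set{0, \ldots, k} \setminus \set{i}$. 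Summing: the parent edge contributes $\beta^i$ and the self-loop another $\beta^i$ outgoing $G_k$-edges labeled $\beta^i$, for a total of $2\beta^i$ with label $\beta^i$; each child edge contributes exactly $2\beta^j$ outgoing edges labeled $\beta^j$ for the corresponding $j \neq i$. Thus for every $j \in \set{0, \ldots, k}$ the count is $2\beta^j$, as claimed.

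Finally, for a leaf $v$ with $\psi(v) = i$, Observation~\ref{obs:ctk}(1)–(2) tells us that $v$ has exactly the parent edge $(v, p(v), \beta^{i})$ and the self-loop $(v, v, \beta^{i})$ as its outgoing $CT_k$-edges, and no children. These two edges together produce $2\beta^i$ outgoing $G_k$-edges labeled $\beta^i$ from any node in $S(v)$, and no $G_k$-edges with any other label. Since $\psi$ takes a single value on $v$, this gives exactly one index $i$ in the admissible range for which a leaf-node of $G_k$ has $2\beta^i$ outgoing edges labeled $\beta^i$.

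I do not expect a substantive obstacle: the only thing to be careful about is matching the indexing conventions in Observation~\ref{obs:ctk} (in particular that $\psi(v)$ is the exponent of the self-loop, and that by parts (2)–(3) the parent edge of $v$ carries the same exponent as the self-loop), and confirming that the labeling of $G_k$ in Definition~\ref{def:labeling} identifies $\beta^i$-edges and $2\beta^i$-edges of $CT_k$ under the same $G_k$-label $\beta^i$ so that the parent edge and self-loop contributions combine cleanly. Once these conventions are in place, the observation reduces to summing the edge multiplicities listed in Observation~\ref{obs:ctk} for each of the three node types.
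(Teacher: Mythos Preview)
Your proposal is correct and is exactly the natural verification of the observation. The paper itself states this result as an observation without proof, treating it as an immediate consequence of Observation~\ref{obs:ctk} and Definition~\ref{def:labeling}; your case analysis over the root, non-root internal nodes, and leaves is precisely how one spells out that consequence, and your handling of the indexing (with $\psi(v)$ equal to the common exponent of the self-loop and the parent edge, and the children covering the remaining indices in $\{0,\ldots,k\}$) is accurate.
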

\begin{observation}\label{obs:noescape}
    Let $v$ be a node of $G_k$ that corresponds to an internal node of $CT_d$. Let $(v,u,\beta^i)$ be an arbitrary edge incident to $v$. Then $u$ corresponds to a node in $CT_d$ if and only if $i \le d$.
\end{observation}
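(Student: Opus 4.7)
The plan is to prove \Cref{obs:noescape} by directly analyzing the neighborhood of $v':=S^{-1}(v)$ inside $CT_k$ and tracking when each such neighbor is first introduced by the inductive construction of the cluster tree skeletons. Writing $u':=S^{-1}(u)$, my first step is to translate the $G_k$-edge $(v,u,\beta^i)$ into either an edge of $CT_k$ between $v'$ and $u'$ or a self loop at $v'$: by \Cref{def:labeling}, the $CT_k$-label of this edge is of the form $\beta^i$ or $2\beta^i$, so it suffices to determine, for each neighbor or self loop of $v'$ in $CT_k$, the exponent of the connecting edge and whether the other endpoint lies in $V(CT_d)$.

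I then split on whether $v'=c_0$. If $v'=c_0$, then by \Cref{obs:ctk}(4) node $v'$ has exactly one child per $j\in\{0,\ldots,k\}$, connected through an edge of exponent $j$. A straightforward induction on the construction shows that this $j$-th child is first introduced in step $j$: the child at $j=0$ is $c_1$ from the base case, and each later step $r$ adds exactly one new-type leaf to $c_0$ through the edge $(c_0,\ell,2\beta^r)$. Hence $c_0$'s child at exponent $j$ lies in $V(CT_d)$ if and only if $j\le d$, which yields both directions of the equivalence in this case.

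For $v'\neq c_0$, the neighbors of $v'$ in $CT_k$ come in the three families described in \Cref{obs:ctk}(2,3). The self loop at $v'$ produces edges inside $S(v')$ with exponent $\psi(v')$, and since $v'$ is internal in $CT_d$, \Cref{obs:ctk}(2) forces $\psi(v')\le d$; the target $u$ trivially corresponds to $v'\in V(CT_d)$. The edge to $p(v')$ also has exponent $\psi(v')\le d$, and $p(v')$ belongs to $V(CT_d)$ because $v'$ does. Finally, the children of $v'$ in $CT_k$ are indexed by $j\in\{0,\ldots,k\}\setminus\{\psi(v')\}$ with connecting edge $2\beta^j$. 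Letting $r+1$ denote the step at which $v'$ became internal, the branch-type leaves created at that step supply children for $j\in\{0,\ldots,r+1\}\setminus\{\psi(v')\}$, while each later step $s>r+1$ adds a single new-type leaf at $j=s$. Since $r+1\le d$ (again because $v'$ is internal in $CT_d$), the child at exponent $j$ has ``birth step'' $\max(j,r+1)$, and therefore lies in $V(CT_d)$ iff $j\le d$. Combining the three sub-cases closes the argument.

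The main obstacle I expect is bookkeeping: the inductive step and \Cref{obs:ctk}(2) use two overlapping conventions for the index ``$i$'' (in the construction it denotes the exponent of the parent edge of a leaf, whereas in \Cref{obs:ctk}(2) the same name is shifted by one so that the self-loop label reads $\beta^{i+1}$), and one must avoid conflating the child-edge exponent $j$ (which matches the label $\beta^j$ used in $G_k$) with the self-loop exponent $\psi(\cdot)$. Once a consistent indexing is fixed, every step reduces to reading off labels from the inductive rule.
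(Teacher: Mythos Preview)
Your proposal is correct. The paper states this result as an \emph{observation} without proof; your case analysis---tracking, for each neighbor of $v'=S^{-1}(v)$ in $CT_k$, the step at which it is first introduced in the inductive construction and comparing that step to $d$---is exactly the kind of direct verification the observation invites, and your handling of the indexing (in particular the identity ``birth step $=\max(j,r+1)$'' for children of $v'\neq c_0$, together with $\psi(v')\le d$ from \Cref{obs:ctk}(2) read in $CT_d$) is accurate.
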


\subsection{Isomorphism Assuming No Short Cycles}
We now prove that for all graphs $G_k \in \mathcal{G}_k$, for all pair of nodes $(u,v) \in S(c_0) \times S(c_1)$, if their radius-$k$ neighborhood does not contain any cycle, then $u$ and $v$ have the same radius-$k$ view.
\begin{theorem}[$k$-hop Indistinguishability of Nodes Corresponding to $c_0$ and $c_1$]\label{thm:sameview}
Let $G_k \in \mathcal{G}_k$ be a cluster tree graph, and consider two nodes $v_0 \in S(c_0)$ and $v_1 \in S(c_1)$ that satisfy that $G^k_k(v_0)$ and $G^k_k(v_1)$ are trees. Then $v_0$ and $v_1$ have the same view up to distance $k$.
\end{theorem}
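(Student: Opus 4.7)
The plan is to reduce the theorem to a purely combinatorial statement on the labelled multigraph $CT_k$ and then prove it by induction on $k$, following the template of Coupette and Lenzen. The tree hypothesis on $G_k^k(v_0)$ and $G_k^k(v_1)$ plays a central role: whenever BFS reaches a node $w\in S(u')$ and next follows an edge of type $(u',v',x)\in E_k$, the $x$ out-neighbours of $w$ in $S(v')$ are all previously unvisited, and a self-loop $(v',v',x)$ similarly contributes $x$ fresh children inside $S(v')$. Consequently, up to relabelling of uniformly random identifiers, $G_k^k(v_0)$ is a function only of the pair $(CT_k,c_0)$; call it $\mathcal{U}_k(c_0)$ and define $\mathcal{U}_k(c_1)$ analogously. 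The theorem reduces to exhibiting a labelled-tree isomorphism $\mathcal{U}_k(c_0)\cong\mathcal{U}_k(c_1)$.

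To prove this isomorphism I would package the recursion in terms of subtree gadgets indexed by an incoming label. For each $v'\in V_k$, each admissible label $\beta^i$, and each depth $d$, let $\mathcal{T}_d(v',i)$ denote the depth-$d$ labelled subtree hanging below $v'$ in the unfolding once we have arrived along a $\beta^i$-edge. By \Cref{obs:numberofedges}, the outgoing multiset is $\{(\beta^j,2\beta^j):j=0,\ldots,k\}$ for every internal $v'$ and $\{(\beta^i,2\beta^i)\}$ for every leaf, so the outgoing-label multiset at one unfolding step depends only on whether $v'$ is internal or a leaf. The heart of the argument is an abstraction lemma, proved by induction on $d$: if $v'$ and $v''$ are two internal nodes of $CT_k$ that first became internal at the same inductive step, then $\mathcal{T}_d(v',i)\cong\mathcal{T}_d(v'',i)$ for every admissible $i$ and $d$. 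The delicate point is that the outward $\beta^i$-edges split as $\beta^i$ to parent and $\beta^i$ to self (one of them minus one for the incoming edge) when $i=\psi(v')$, while for $i\neq\psi(v')$ they all go to a single child cluster $u_i^{v'}$; using \Cref{obs:ctk} one checks that in each case the resulting children have matching types and introduction-ages, so the inductive hypothesis applies.

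With the abstraction lemma in hand, the root-level comparison goes through cleanly. From $c_0$, for each $j\in\{0,\ldots,k\}$ the root has $2\beta^j$ children along $\beta^j$ leading to $c_0$'s $j$-th child, which is internal in $CT_k$ for $j<k$ and a leaf for $j=k$. From $c_1$, by \Cref{obs:numberofedges} the same $2\beta^j$ count holds per label: the $\beta^{\psi(c_1)}$-bundle splits between parent $c_0$ and self $c_1$ (both internal); the $\beta^j$-bundles for $j\neq\psi(c_1)$, $j<k$ lead to internal children of $c_1$; and the $\beta^k$-bundle leads to the single leaf child of $c_1$ introduced at step $k$. By the abstraction lemma the subtrees match bundle-by-bundle between the two roots, so the rooted labelled trees are isomorphic; combined with the random-identifier reduction of the first paragraph, this yields the theorem. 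The main technical obstacle is the abstraction lemma itself, where the introduction-age bookkeeping in the split $i=\psi(v')$ case is the most delicate piece of the inductive step.
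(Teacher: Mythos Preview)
Your reduction to the combinatorics of $CT_k$ and the overall plan of matching the two unfoldings is correct and is essentially what the paper does. The gap is in the execution: you aim for a \emph{labelled} (bundle-by-bundle) isomorphism $\mathcal{U}_k(c_0)\cong\mathcal{U}_k(c_1)$, but no such isomorphism exists. Look at the $\beta^1$-bundle at depth~$1$. From $c_0$ it reaches $u_1$, the leaf attached to $c_0$ at step~$1$, which has $\psi(u_1)=2$ and hence history $\beta^2$; from $c_1$ the same bundle splits between $c_0$ (history $\beta^0$) and $c_1$ itself (history $\beta^1$). After deleting the incoming edge, these three targets have labelled degree profiles with the ``$-1$'' in positions $2$, $0$, and $1$ respectively, so they are pairwise non-isomorphic as labelled rooted trees already at their roots. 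Your abstraction lemma cannot rescue this: it is keyed on introduction age, but $u_1$, $c_0$, and $c_1$ became internal at steps $2$, $0$, and $1$, all different; and for $j=0$ the targets $c_1$ (from $c_0$) and $c_1$'s $0$-child (from $c_1$) likewise became internal at steps $1$ and $2$. So neither ``same introduction age'' nor ``same bundle'' gives you matching subtrees.

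This is precisely why the paper does \emph{not} decompose bundle-by-bundle. It runs an explicit matching procedure (Algorithm~\ref{algo:isomorphism}) whose \textsc{Map} step, when the two current nodes $v,w$ have different histories $\beta^x$ and $\beta^y$, cross-matches the one surplus neighbour in $N_v[y]$ with the one surplus neighbour in $N_w[x]$. The invariant that makes this work (\Cref{lem:maininvariant}) is not ``same introduction age'' but a two-case statement: at depth $d<k$, either both cluster-tree images lie in $V_d$ with histories $\le\beta^{d+1}$ (possibly different), or both lie in $V_i\setminus V_{i-1}$ for the same $i>d$ with equal history and equal attachment label. The isomorphism one obtains is only an \emph{unlabelled} graph isomorphism---which is all the theorem asserts---and it necessarily mixes bundles at every level where the paired histories disagree. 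Any fix of your approach must abandon the per-label decomposition and build in this cross-bundle re-matching.
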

The proof is deferred to \Cref{apx:isomorphism}. We will follow the same approach of \cite{breezing}, that is, we provide an algorithm that defines an isomorphism, and then we prove that the algorithm is correct.

\subsection{Lifting}

We next show how to obtain a graph $\tilde{G}_k \in \mathcal{G}_k$ such that most nodes in $\tilde{G}_k$ see no short cycles and such that the graph induced by any of the clusters (except cluster $S(c_0)$) has no large independent set. The high-level idea of this construction is as follows. We start from some base graph $G_k\in\mathcal{G}_k$ and then we show that $\tilde{G}_k$ can be computed as a random lift of $G_k$. We first introduce the necessary graph-theoretic terminology.

For a graph $G=(V_G,E_G)$ and a graph $H=(V_H,E_H)$, a covering map $\varphi$ from $G$ to $H$ is a graph homomorphism $\varphi:V(G)\to V(H)$ such that for every node $v\in V(G)$, the neighbors of $v$ are bijectively mapped to the neighbors of $\varphi(v)$. That is, if $v$ has $d$ neighbors $u_1,\dots,u_d$ in $G$, then $\varphi(v)$ has $d$ (different) neighbors $\varphi(u_1),\dots,\varphi(u_d)$ in $H$. For a graph $G$, a graph $\tilde{G}=(\tilde{V},\tilde{E})$ for which a covering map from $\tilde{G}$ to $G$ exists is called a \emph{lift} of $G$. We will refer to the set of preimages $\varphi^{-1}(v)$ of a node $v\in V$ as the \emph{fiber} of $v$ in $\tilde{G}$. If $G$ is connected, then all fibers have the same cardinality, i.e., for every $v\in V$, there is the same number of nodes $\tilde{v}\in\tilde{V}$ for which $\varphi(\tilde{v})=v$. Even if $G$ is not connected, a lift $\tilde{G}$ of $G$ is typically constructed such that all fibers have the same cardinality. If all fibers of a lift $\tilde{G}$ of $G$ have the same cardinality $q\geq 1$, then $q$ is called the \emph{order} of the lift $\tilde{G}$. Note that if we have a graph $G_k\in \mathcal{G}_k$, then any lift $\tilde{G}_k$ of $G_k$ is also a graph in $\mathcal{G}_k$.

There are different ways to construct lifts with desirable properties for a given base graph. As discussed, our lower bound proof is based on the minimum vertex cover lower bound of \cite{kuhn16_jacm}. In \cite{kuhn16_jacm}, the lift $\tilde{G}_k$ of a base graph $G_k\in \mathcal{G}_k$ (for the family of graphs used in \cite{kuhn16_jacm}) is computed in a deterministic way such that $\tilde{G}_k$ has girth at least $2k+1$ (i.e., such that any $k$-hop neighborhood is tree-like). We cannot use the same deterministic lift construction because we also need to ensure that the induced graphs of the neighboring clusters of $S(c_0)$ in $\tilde{G}_k$ have no large independent sets. We instead use a random lift construction that was described in \cite{ALM02}. Given a base graph $G=(V,E)$ and an integer $q\geq 1$, we can obtain a random lift $\tilde{G}=(\tilde{V},\tilde{E})$ of order $q$ of $G$ as follows. For every $v\in V$, $\tilde{V}$ contains $q$ nodes $\tilde{v}_1,\dots,\tilde{v}_q$ and for every edge $\set{u,v}\in E$, the two fibers $\set{\tilde{u}_1,\dots,\tilde{u}_q}$ and $\set{\tilde{v}_1,\dots,\tilde{v}_q}$ are connected by a uniformly random perfect matching (which is chosen independently for different edges). We obtain the following, the proof of which is deferred to \Cref{apx:isomorphism}.

\begin{lemma}\label{lemma:randomlift}
    Let $G=(V,E)$ be a base graph with maximum degree $\Delta\geq 2$, let $q\geq 1$ be a positive integer and let $\tilde{G}=(\tilde{V},\tilde{E})$ be a random lift of order $q$ of $G$ as described above. Then, for every integer $\ell\geq 3$ and every node $\tilde{v}\in \tilde{V}$, the probability that $\tilde{v}$ is contained in a cycle of length at most $\ell$ is upper bounded by $\Delta^{\ell}/q$. Further, for a set of nodes $C\subseteq V$, let $\tilde{C}\subseteq \tilde{V}$ be the set of nodes consisting of the union of all fibers of the nodes in $C$. If $G[C]$ is a complete graph of size $|C|\geq 2$, then for every integer $s\geq 8\ln|C|$, $\alpha(\tilde{G}[\tilde{C}])\leq s\cdot q$ with probability at least $1-e^{-s^2q/8}$, where $\alpha(G)$ is the independence number of the graph $G$.
\end{lemma}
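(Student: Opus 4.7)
The lemma decomposes into two essentially independent parts: (i) a girth-style bound showing that few vertices of the random lift lie on short cycles, and (ii) a concentration statement on the size of the largest independent set in the lift of a $G$-clique. Throughout, I would represent the random lift via a family of independent uniformly random permutations $(\pi_{uv})_{\{u,v\}\in E}$ in $S_q$, where $\pi_{uv}$ encodes the perfect matching between the fibers $\varphi^{-1}(u)$ and $\varphi^{-1}(v)$; this is equivalent to the construction described before the lemma but is more convenient for the probability calculations.

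For the cycle bound I would fix $\tilde v$ with $v=\varphi(\tilde v)$, and for each $t\in\{3,\dots,\ell\}$ bound the expected number of length-$t$ cycles through $\tilde v$ in $\tilde G$ via Markov's inequality. Every such cycle projects to a non-backtracking closed walk $v=v_0,v_1,\dots,v_t=v$ in $G$ (it is non-backtracking because a backtracking step would force two distinct fiber-nodes to be matched to the same vertex, contradicting distinctness of the cycle's vertices), and the number of such walks is at most $\Delta(\Delta-1)^{t-1}\leq \Delta^t$. Given the walk, its lift starting at $\tilde v$ closes precisely when $\pi_{v_{t-1}v_t}\circ\cdots\circ\pi_{v_0v_1}(\tilde v)=\tilde v$; when the walk uses pairwise distinct $G$-edges, the composing permutations are independent, so by left-invariance of the uniform measure on $S_q$ their composition is itself uniform on $S_q$ and the event has probability exactly $1/q$. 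Dividing by the two orientations of each cycle and summing the resulting geometric series, which is dominated by its $t=\ell$ term since $\Delta-1\geq \Delta/2$ for $\Delta\geq 2$, yields the claimed $\Delta^\ell/q$ upper bound.

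For the independence-number statement, I would fix a candidate set $I\subseteq \tilde C$ of size $sq$ with parts $I_u:=I\cap\varphi^{-1}(u)$ and $i_u:=|I_u|$. Because $G[C]$ is complete, $I$ is independent in $\tilde G[\tilde C]$ iff the matching $M_{uv}$ avoids $I_u\times I_v$ for every pair $u\neq v$ in $C$. A direct count gives $\Pr[M_{uv}\text{ avoids }I_u\times I_v]=\binom{q-i_u}{i_v}/\binom{q}{i_v}\leq (1-i_u/q)^{i_v}\leq e^{-i_u i_v/q}$, and by independence of matchings across different $G$-edges,
\begin{equation*}
\Pr[I\text{ independent}]\;\leq\;\exp\!\Bigl(-\tfrac{1}{q}\!\sum_{u<v}i_ui_v\Bigr)\;\leq\;\exp\!\Bigl(-\tfrac{|I|^2-|I|q}{2q}\Bigr)\;\leq\; e^{-|I|^2/(4q)},
\end{equation*}
where the middle step uses $\sum_u i_u^2\leq |I|q$ and the last uses $|I|=sq\geq 2q$ (which follows from $s\geq 8\ln|C|\geq 8\ln 2>2$). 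Union-bounding over the $\binom{|C|q}{sq}\leq |C|^{sq}$ candidate sets of size $sq$ gives $\Pr[\alpha(\tilde G[\tilde C])\geq sq]\leq\exp(sq\ln|C|-s^2q/4)\leq e^{-s^2q/8}$, where the final inequality is exactly the hypothesis $s\geq 8\ln|C|$, which forces $sq\ln|C|\leq s^2q/8$.

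The only delicate point is in the cycle bound: non-backtracking closed $G$-walks obtained as projections of cycles in $\tilde G$ may revisit a single $G$-edge (distinct $\tilde G$-edges can project to the same $G$-edge), and on such walks the naive ``composition of independent permutations'' argument breaks. I would handle this by a short case analysis on the pattern of repeated $G$-edges, showing that the lift probability remains $O(1/q)$ with the absolute constant absorbable into the summation over $t$. The independence-number argument, by contrast, is essentially a textbook union bound once the per-pair matching probability is in hand.
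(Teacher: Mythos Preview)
Your proposal follows essentially the same route as the paper in both parts: a first-moment argument over closed walks for the cycle bound, and a per-pair matching-avoidance estimate plus a union bound over size-$sq$ subsets for the independence number. Your Part~2 computation is the paper's with only cosmetic differences (you bound $\binom{|C|q}{sq}$ by $|C|^{sq}$ and use the algebraic identity for $\sum_{u<v}i_ui_v$, where the paper uses $(e|C|/s)^{sq}$ and a direct cross-fiber-pair count; both reach $e^{-s^2q/4}$ and finish identically from there).

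On the repeated-edge issue you are in fact more careful than the paper, which simply asserts that the last lifted edge is present with probability ``exactly $1/q$'' for \emph{every} closed walk and does not justify this when the last $G$-edge was already used. Your flag is warranted, but your proposed patch is slightly mis-stated: the closing probability of the lifted walk is \emph{not} $O(1)/q$ with a uniform constant---for instance, the $k$-fold traversal of a short $G$-cycle closes with probability $d(k)/q$, where $d(k)$ is the number of divisors of $k$. The clean way through is either (i) to bound directly the expected number of \emph{simple} lifted cycles projecting to a fixed walk $W$, which equals $\tfrac1q\prod_u(q)_{n_u}\big/\prod_e(q)_{m_e}$ (with $n_u,m_e$ the vertex and edge multiplicities of $W$); the distinctness constraint is precisely what tames the repeated-edge blow-up; or (ii) to observe that the slack between $\sum_{t=3}^{\ell}\Delta^{t-1}$ and $\Delta^{\ell}$ is already a factor $\Theta(\Delta)$, which absorbs the mild overcount coming from walks with repeated edges. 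Either way the stated bound survives, so this is a point of rigor rather than a genuine gap in your plan.
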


\subsection{(Almost) High-Girth Graphs in \boldmath$\mathcal{G}_k$}
We prove that there are graphs that belong to $\mathcal{G}_k$ such that, for each node, the probability that it sees a cycle within distance $k$ is small. We first show how to construct a low girth graph $G_k \in \mathcal{G}_k$, and then we use \Cref{lemma:randomlift} to show that we can use it to construct graphs $G'_k \in \mathcal{G}_k$ that satisfy the above. 
\paragraph{A Base Graph.}
We now construct a family of low girth graph $G_k$, parametrized by an even integer $\beta$, and we prove that these graphs belong to $\mathcal{G}_k$. The graph $G_k$, as a function of $\beta$, is defined as follows.
\begin{itemize}
    \item Let $d(v)$ be the hop distance of a node $v$ of $CT_k$ from $v_0$. Note that $0 \le d(v) \le k+1$.
 For each $v \neq v_0$, let $i=\psi(v) \le k+1$, be the exponent of the label of its self loop. We define $S(v)$ to be a set of nodes of size $z = 2 \beta^{k+1} (\beta/2)^{k+1-d(v)}$ nodes, connected as follows. Start from $t = z / \beta^{i}$ disjoint cliques of size $\beta^{i}$, $\set{C_1,\ldots,C_t}$. Note that $t$ is an even integer. For every $1 \le j \le t/2$, add to $G_k$ the edges of a perfect matching between the nodes of $C_j$ and $C_{t/2+j}$. In this way, we obtain that every node in $S(v)$ has exactly $\beta^{i}$ neighbors inside $S(v)$, and hence the requirements of the graph family defined in \Cref{sec:graphfamily} are satisfied for the edge $(v,v,\beta^i)$.
    \item $S(v_0)$ contains  $z = 2 \beta^{k+1} (\beta/2)^{k+1}$ nodes, and they form an independent set.
    \item Let $v$ be a node at distance $j$ from $v_0$, and $u$ be a neighbor of $v$ at distance $j+1$ from $v_0$. By construction, they are connected to each other with edges $(u,v,2\beta^i)$ and $(v,u,\beta^{i+1})$, for some $0 \le i \le k$. Observe that $S(v)$ has size  $2 \beta^{k+1} (\beta/2)^{k+1-j}$ and $S(u)$ has size $2 \beta^{k+1} (\beta/2)^{k-j}$. 
    We group the nodes of $S(v)$ into groups of size $\beta^{i+1}$, and the nodes of $S(u)$ into groups of size $2\beta^{i}$. We obtain that $S(v)$ and $S(u)$ are both split into $t = 2 \beta^{k+1} (\beta/2)^{k+1-j} / \beta^{i+1} = 2 \beta^{k+1} (\beta/2)^{k-j} / (2\beta^i)$ groups. Note that $t$ is an integer. We take a perfect matching between the groups, and for each matched pair of groups, we connect its nodes as $K_{\beta^{i+1},2\beta^i}$. We obtain that every node in $S(v)$ has $2\beta^i$ neighbors in $S(u)$, and every node in $S(u)$ has $\beta^{i+1}$ neighbors in $S(v)$, and hence the requirements of the graph family defined in \Cref{sec:graphfamily} are satisfied for the edges $(u,v,2\beta^i)$ and $(v,u,\beta^{i+1})$.
\end{itemize}

We summarize our result as follows. The proofs of the following statements appear in \Cref{app:missingHighGirth}
\begin{lemma}\label{lem:smallgirth}
    For all integers $k$ and even integers $\beta$ satisfying $2(k+1)/\beta < 1/2$, there exists a graph $G_k \in \mathcal{G}_k$, satisfying:
    \begin{itemize}[noitemsep]
        \item For all $v \neq v_0$, let $i=\psi(v)$. The graph $G_k$ satisfies that $\alpha(G_k[S(v)]) \le |S(v)| / \beta^i$.
        \item The total number of nodes is $O(\beta^{2k+2})$ and the maximum degree is at most $2\beta^{k+1}$.
    \end{itemize}
\end{lemma}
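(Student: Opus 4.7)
The construction of $G_k$ is given explicitly in the statement, so my plan is to verify four things in turn: \emph{(i)} well-definedness (all prescribed sizes are integers and all matchings exist), \emph{(ii)} membership $G_k \in \mathcal{G}_k$, i.e., the biregular degree constraints imposed by the edge labels of $CT_k$ are satisfied, \emph{(iii)} the independence-number bound on each non-$v_0$ cluster, and \emph{(iv)} the claimed upper bounds on the total node count and on the maximum degree.

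For \emph{(i)} and \emph{(ii)}, I would rewrite $|S(v)| = 2\beta^{k+1}(\beta/2)^{k+1-d(v)} = 2^{d(v)-k}\beta^{2k+2-d(v)}$, and, using that $\beta$ is even (write $\beta = 2\gamma$), conclude $|S(v)| = 2^{k+2}\gamma^{2k+2-d(v)}$, a positive integer for every depth $d(v) \in \{0,\ldots,k+1\}$. Between a parent $v$ and a child $u$ connected via edges $(v,u,\beta^{i+1})$ and $(u,v,2\beta^i)$, one has $|S(v)|/\beta^{i+1} = |S(u)|/(2\beta^i)$, so the two group partitions agree in their number of groups and the prescribed group-matching is well-defined; connecting each matched pair as $K_{\beta^{i+1},2\beta^i}$ then realizes exactly the biregular degrees demanded by the labels. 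Inside $S(v)$ for $v \ne v_0$, each node lies in exactly one clique of size $\beta^i$ with $i = \psi(v)$ and has one matching partner in the paired clique, giving $\beta^{\psi(v)}$ intra-cluster neighbors in agreement with the self-loop label.

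For \emph{(iii)}, the graph $G_k[S(v)]$ with $v \ne v_0$ is a disjoint union of $t := |S(v)|/\beta^i$ cliques of size $\beta^i$, together with additional matching edges between paired cliques that can only shrink the independence number. Any independent set contains at most one vertex per clique, so $\alpha(G_k[S(v)]) \le t = |S(v)|/\beta^{\psi(v)}$.

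For \emph{(iv)}, let $a_{k,d}$ be the number of nodes of $CT_k$ at depth $d$, split into internal and leaf counts $q_{k,d},\,p_{k,d}$. The inductive construction of $CT_k$ yields $q_{k,d} = a_{k-1,d}$ and $p_{k,d} = q_{k-1,d-1} + k\,p_{k-1,d-1}$, whence $a_{k,d} \le a_{k-1,d} + k\,a_{k-1,d-1}$. Setting $f_k := \sum_d a_{k,d}(2/\beta)^d$, one obtains $f_k \le (1 + 2k/\beta)\,f_{k-1}$; the hypothesis $2(k+1)/\beta < 1/2$ forces $f_k \le 2(3/2)^k$, and hence
\[
\sum_{v \in V_k} |S(v)| \;=\; \frac{\beta^{2k+2}}{2^k}\,f_k \;\le\; 2\beta^{2k+2}(3/4)^k \;=\; O(\beta^{2k+2}).
\]
For the maximum degree, \Cref{obs:numberofedges} gives an internal node total degree $\sum_{j=0}^k 2\beta^j \le 4\beta^k \le 2\beta^{k+1}$, while a leaf has $2\beta^j$ edges of a single label with $j \le k+1$, hence degree at most $2\beta^{k+1}$.

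The main conceptual obstacle is the node-count bound: a priori $CT_k$ has on the order of $k!$ nodes, so one needs a geometric cancellation between the growth of $a_{k,d}$ in $d$ and the factor-$(\beta/2)$ shrinkage of $|S(v)|$ in $d$. The hypothesis $2(k+1)/\beta < 1/2$ is exactly what keeps the multiplier in the recursion for $f_k$ strictly below $3/2$, so that the $2^{-k}$ factor in $|S(v_0)|$ dominates the depth-wise growth. Everything else is a routine but bookkeeping-heavy check of degrees, integrality, and the clique-plus-matching structure of each cluster.
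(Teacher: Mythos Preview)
Your proposal is correct and covers the same ground as the paper's proof for items~\emph{(iii)} and the degree bound in~\emph{(iv)} (disjoint cliques bound the independence number; \Cref{obs:numberofedges} bounds the degree). Your well-definedness checks in~\emph{(i)} and~\emph{(ii)} are also fine and in fact more explicit than the paper, which simply asserts ``Note that $t$ is an integer'' inline in the construction.

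The one place where you take a genuinely different route is the node-count bound. You recurse on the cluster-tree index~$k$, tracking $a_{k,d}$ and the generating-function-like quantity $f_k=\sum_d a_{k,d}(2/\beta)^d$, and propagate the inequality $f_k\le(1+2k/\beta)f_{k-1}$. This is correct, but it is more machinery than needed. The paper instead fixes $k$ and sums over \emph{depth}: letting $T_d$ be the set of cluster-tree nodes at depth~$d$ and $S_d=\bigcup_{v\in T_d}S(v)$, one has $|T_{d+1}|\le(k+1)|T_d|$ (every node has at most $k+1$ children) while cluster sizes shrink by exactly $\beta/2$ with depth, so $|S_{d+1}|/|S_d|\le 2(k+1)/\beta<1/2$. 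The total node count is then bounded by $|S(c_0)|\sum_{d\ge0}(1/2)^d=O(\beta^{2k+2})$ directly. Your recursion in~$k$ rederives this geometric decay, but the depth-wise argument gets there in two lines without any induction on the tree index.
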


\paragraph{Applying the Lifting.}
We apply \Cref{lemma:randomlift} to the graph of \Cref{lem:smallgirth} to produce an almost high-girth version of it.
\begin{lemma}\label{lem:lifted}
    For any integer $q \ge 1$, $k$ and even $\beta$ satisfying $2(k+1)/\beta < 1/2$ there is a graph $\tilde{G}_k \in \mathcal{G}_k$ satisfying that:
    \begin{itemize}[noitemsep]
        \item The number of nodes is $O(q \beta^{2k+2})$ and the maximum degree is at most $2\beta^{k+1}$.
        \item For all $v$, for all $\ell \ge 3$, the probability that $v$ is contained in a cycle of length at most $\ell$ is upper bounded by $\Delta^\ell / q$.
        \item For all $v \in N(v_0)$, let $i=\psi(v)$, and let $t = 2 \beta^{k-i+1} (\beta/2)^{k}$. The graph $\tilde{G}_k$ satisfies that for every integer $s \ge 8 \ln \beta^i$, $\alpha(G_k[S(v)]) \le s q t$ with probability at least $(1-e^{-s^2 q /8})^{t}$.
    \end{itemize}
\end{lemma}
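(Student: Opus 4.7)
My plan is to obtain $\tilde{G}_k$ by applying the random lift construction of order $q$ (described just before Lemma~\ref{lemma:randomlift}) to the base graph $G_k$ furnished by Lemma~\ref{lem:smallgirth}, and then to verify the three claimed properties in turn. A preliminary sanity check is that $\tilde{G}_k$ indeed belongs to $\mathcal{G}_k$: if we define $\tilde{S}(u')$ to be the union of the fibers of the nodes in $S(u')$, then because a covering map sends the neighborhood of a node bijectively onto the neighborhood of its image, every lifted node has in $\tilde{S}(v')$ the same number of neighbors that its base image has in $S(v')$. Consequently the biregular requirements imposed by every labeled edge of $CT_k$ on the base graph are inherited by $\tilde{G}_k$, and no extra edges are created.

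For the first bullet, the size bound follows immediately because $|\tilde{V}|=q\cdot|V(G_k)|=O(q\beta^{2k+2})$ by Lemma~\ref{lem:smallgirth}, and the maximum degree is preserved by the covering map, so it is still at most $2\beta^{k+1}$. The second bullet is literally the first statement of Lemma~\ref{lemma:randomlift} applied to the base graph $G_k$, so nothing further needs to be done there.

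The substantive step is the third bullet. For $v\in N(v_0)$ we have $d(v)=1$, so from the base construction $S(v)$ has size $2\beta^{k+1}(\beta/2)^k$ and, crucially, is built out of $t=|S(v)|/\beta^i=2\beta^{k-i+1}(\beta/2)^k$ vertex-disjoint cliques $C_1,\dots,C_t$ of size $\beta^i$ (plus a perfect matching that only adds edges and hence cannot increase any independence number). Let $\tilde{C}_j$ denote the union of the fibers of the nodes in $C_j$. Since the random perfect matchings used by the lift are chosen independently for distinct edges of $G_k$, and since the edge sets $E(C_1),\dots,E(C_t)$ are pairwise disjoint, the induced random subgraphs $\tilde{G}_k[\tilde{C}_j]$ are mutually independent. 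Each $C_j$ is a complete graph of size $\beta^i\geq 2$, so the second statement of Lemma~\ref{lemma:randomlift} applies with $|C|=\beta^i$: for every $s\geq 8\ln\beta^i$ we get $\alpha(\tilde{G}_k[\tilde{C}_j])\leq sq$ with probability at least $1-e^{-s^2q/8}$. Because any independent set $I\subseteq\tilde{S}(v)$ satisfies $|I|=\sum_j|I\cap\tilde{C}_j|$ and each $I\cap\tilde{C}_j$ is independent in $\tilde{G}_k[\tilde{C}_j]$, the bound $\alpha(\tilde{G}_k[S(v)])\leq\sum_j\alpha(\tilde{G}_k[\tilde{C}_j])\leq sqt$ holds whenever all $t$ per-clique events occur, which by independence happens with probability at least $(1-e^{-s^2q/8})^{t}$.

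The only real obstacle is this last piece of bookkeeping: isolating the source of randomness that controls each $\tilde{G}_k[\tilde{C}_j]$ and confirming it is shared between no two of them. Once one observes that the per-edge independence of the lift construction, combined with the vertex-disjoint clique decomposition of $S(v)$ in the base graph, makes the $t$ per-clique events independent, the lemma follows by chaining Lemma~\ref{lem:smallgirth} with the two conclusions of Lemma~\ref{lemma:randomlift}.
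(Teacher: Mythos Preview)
Your proposal is correct and follows essentially the same approach as the paper: take the random lift of order $q$ of the base graph from Lemma~\ref{lem:smallgirth}, read off the size and degree bounds and the cycle probability from Lemma~\ref{lemma:randomlift}, and for the independence-number bound use the clique decomposition of $S(v)$ together with the per-clique bound of Lemma~\ref{lemma:randomlift}. If anything, your write-up is more careful than the paper's, since you explicitly verify $\tilde{G}_k\in\mathcal{G}_k$, compute $t$ from $d(v)=1$, and justify why the $t$ per-clique events are independent (the paper simply multiplies the probabilities without spelling this out).
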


We now fix the parameters $q$ and $\beta$ to obtain a friendlier version of \Cref{lem:lifted}.
\begin{corollary}\label{cor:family}
    Let $k$ be an integer and and $\beta \ge 4$ be an even integer satisfying $2(k+1)/\beta < 1/2$. For infinitely many values of $n$, there is a graph $\tilde{G}_k$ in $\mathcal{G}_k$ satisfying the following:
    \begin{itemize}[noitemsep]
        \item The number of nodes is $n = \beta^{O(k^2)}$ and the maximum degree is at most $\Delta = 2\beta^{k+1}$.
        \item The probability that a node $v$ is contained in a cycle of length at most $2k+1$ is at most $1/\beta$.
        \item For each $v \in N(v_0)$ satisfying that $i=\psi(v)$, $\alpha(\tilde{G}_k[S(v)])\le \gamma |S(v)| \frac{\log \beta^i}{\beta^i}$ for some constant $\gamma > 0$.
    \end{itemize}
\end{corollary}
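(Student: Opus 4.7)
The plan is to apply Lemma \ref{lem:lifted} to the base graph from Lemma \ref{lem:smallgirth} with a sufficiently large order $q$, and to argue via an averaging-plus-union-bound argument that a random lift of that order realises a graph satisfying all three required properties simultaneously. Since $q$ can be any positive integer above a certain threshold, this yields infinitely many valid values of the total node count $n$.

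First, I would fix $q = 2\beta\cdot\Delta^{2k+1} = \beta^{\Theta(k^2)}$. By the cycle part of Lemma \ref{lem:lifted}, each node of the random lift lies in a cycle of length at most $2k+1$ with probability at most $\Delta^{2k+1}/q = 1/(2\beta)$, so the expected fraction of such nodes in $\tilde{G}_k$ is at most $1/(2\beta)$. By Markov's inequality, with probability at least $1/2$ over the random lift this fraction is bounded by $1/\beta$, which is exactly the second bullet read as ``a uniformly chosen node lies in a short cycle with probability at most $1/\beta$''. With this choice, the total number of nodes is $O(q\beta^{2k+2}) = \beta^{O(k^2)}$ and the maximum degree stays at $\Delta = 2\beta^{k+1}$, giving the first bullet.

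For the independence-number bound, I would invoke the third part of Lemma \ref{lem:lifted} separately for each $v\in N(v_0)$ with $s = 8\ln\beta^i$, where $i = \psi(v)$. Since the lift has order $q$ and $S(v)$ in the base graph is partitioned into exactly $t$ cliques of size $\beta^i$, the lifted cluster satisfies $|S(v)| = qt\beta^i$, so $sqt = 8\ln(\beta^i)\cdot|S(v)|/\beta^i$, which has the required form with a universal constant $\gamma$. The failure probability for a single cluster is at most $1-(1-e^{-s^2q/8})^t \le t\,e^{-s^2q/8}$, and for our choice of $q$ this is doubly-exponentially small in $k\log\beta$. By Observation \ref{obs:ctk}, $v_0$ has only $k+1$ children in $CT_k$, so a union bound over those $O(k)$ clusters leaves the total failure probability negligible. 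A final union bound with the cycle event shows that with strictly positive probability a single random lift satisfies all three bullets at once, yielding the existence of the required $\tilde{G}_k$.

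The main obstacle, and the one place one needs to be careful, is the bookkeeping that collapses the raw bound $sqt$ coming from Lemma \ref{lem:lifted} into the clean form $\gamma|S(v)|\log(\beta^i)/\beta^i$ of the corollary. This hinges on two facts that must be cross-checked: the base-graph construction in Lemma \ref{lem:smallgirth} really does decompose $S(v)$ into $t$ disjoint cliques of size $\beta^i$ for $v \in N(v_0)$, and the random lift multiplies every cluster's cardinality by exactly $q$. Once this identification is made, the remaining arithmetic (verifying $q = \beta^{O(k^2)}$ and $n = \beta^{O(k^2)}$) is routine.
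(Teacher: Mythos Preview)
Your proposal is correct and follows essentially the same route as the paper: apply Lemma~\ref{lem:lifted} with an order $q=\beta^{\Theta(k^2)}$, take $s=\Theta(\ln\beta^i)$, identify $sqt$ with $\gamma|S(v)|\log(\beta^i)/\beta^i$ via the clique decomposition of the base cluster, and union-bound over the $k+1$ children of $c_0$. The paper picks $q=\beta^{ck^2}$ for a large constant $c$, while you pick $q=2\beta\Delta^{2k+1}$; both are $\beta^{\Theta(k^2)}$ and both make $\Delta^{2k+1}/q\le 1/\beta$ and $n=\beta^{O(k^2)}$.

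The one genuine difference is your extra Markov step. The paper's proof simply records the per-node bound $\Pr[v\text{ in a short cycle}]\le 1/\beta$ over the random lift and, separately, shows the independence bound holds with positive probability; it does not explicitly merge the two into a statement about one fixed graph. You instead read bullet~2 as ``a uniformly random node of a fixed $\tilde G_k$ is in a short cycle with probability $\le 1/\beta$'' and use Markov plus a union bound to pin down a single deterministic lift satisfying all three bullets. This is a legitimate (and arguably cleaner) resolution of the ambiguity in the statement, and it does not change the substance of the argument.
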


\subsection{Lower Bounds for MIS}
In the remaining, we prove the following theorem.
\begin{theorem}\label{thm:complexityMIS}
The randomized node and edge averaged complexities of the MIS problem in general graphs and the randomized worst-case complexity of MIS on trees are both $\Omega\Big(\min\Big\{\frac{\log\Delta}{\log\log\Delta},\sqrt{\frac{\log n}{\log\log n}}\Big\}\Big)$.
\end{theorem}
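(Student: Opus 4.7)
The plan is to combine Corollary~\ref{cor:family} with the isomorphism result (Theorem~\ref{thm:sameview}) via a coverage-versus-independence dichotomy tailored to MIS. I fix parameters $k$ and $\beta = \Theta(k^3)$, chosen so that $2(k+1)/\beta < 1/2$, $\gamma(k+1)(k+2)\log\beta/\beta = o(1)$, and $\gamma\log\beta/\beta = o(1)$. Applying Corollary~\ref{cor:family} produces a graph $\tilde{G}_k \in \mathcal{G}_k$ with $n = \beta^{O(k^2)}$ vertices, $\Delta = 2\beta^{k+1}$, each vertex on a cycle of length at most $2k+1$ with probability at most $1/\beta$, and $\alpha(\tilde{G}_k[S(u_j)]) \le \gamma|S(u_j)|(j+1)\log\beta/\beta^{j+1}$ for each child $u_j$ of $c_0$. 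Using $|S(u_j)| = 2|S(c_0)|/\beta$ and summing depth contributions in $CT_k$ gives $|S(c_0)| \ge (1-o(1))n$. Translating, $\log\Delta = \Theta(k\log k)$ and $\log n = O(k^2\log k)$, so $k$ can be taken as large as the minimum of $\Theta(\log\Delta/\log\log\Delta)$ and $\Theta(\sqrt{\log n/\log\log n})$, matching the target bound.

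The combinatorial core is a coverage argument. In any MIS $M$ of $\tilde{G}_k$, each vertex of $S(c_0)$ not in $M$ must have a neighbor in $\bigcup_j (S(u_j)\cap M)$; since each $u_j$-vertex has exactly $\beta^{j+1}$ neighbors in $S(c_0)$, the total dominance capacity contributed by layer $j$ is at most $\alpha(\tilde{G}_k[S(u_j)])\cdot\beta^{j+1} \le 2\gamma|S(c_0)|(j+1)\log\beta/\beta$. Summing over $j=0,\dots,k$ bounds the total capacity by $O(|S(c_0)|k^2\log\beta/\beta) = o(|S(c_0)|)$, so every valid MIS satisfies $|M\cap S(c_0)| \ge (1-o(1))|S(c_0)|$ pointwise, and hence $\E[|M\cap S(c_0)|] \ge (1-o(1))|S(c_0)|$. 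Dually, $|M\cap S(c_1)| \le \alpha(\tilde{G}_k[S(c_1)]) = O(|S(c_1)|\log\beta/\beta) = o(|S(c_1)|)$ in every run.

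To close the averaged-complexity bound, I assume a randomized MIS algorithm $\mathcal{A}$ together with uniformly random IDs; by cluster symmetry every $v\in S(c)$ has the same expected commit time $T_c$. The event ``$v$ outputs into $M$ and commits by round $k$'' is a deterministic function of $v$'s $k$-hop view and private random bits, so by Theorem~\ref{thm:sameview} it has the same conditional probability $q$, given a tree-shaped $k$-neighborhood, for $v\in S(c_0)$ and $v\in S(c_1)$. Combining $q(1-1/\beta)|S(c_1)| \le \E[|M\cap S(c_1)|] \le \alpha(\tilde{G}_k[S(c_1)])$ gives $q = O(\log\beta/\beta)$, so $\Pr[v_0\in M \text{ and commits by round } k] \le q + 1/\beta = o(1)$ for every $v_0\in S(c_0)$. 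Together with $\Pr[v_0\in M] \ge 1-o(1)$ from the coverage bound, this forces $\Pr[T_{v_0}>k]\ge 1-o(1)$, hence $\E[T_{v_0}] \ge (1-o(1))k = \Omega(k)$. Since $|S(c_0)| = (1-o(1))n$, this yields $\nodeavg(\mathcal{A}) = \Omega(k)$; and since each $v_0\in S(c_0)$ has degree $\Theta(\beta^k)$, matching the typical degree in $\tilde{G}_k$, the edges incident to $S(c_0)$ make up a constant fraction of $|E(\tilde{G}_k)|$, so $\edgeavg(\mathcal{A}) = \Omega(k)$ by the same reasoning. The worst-case tree bound is obtained by running the same coverage/isomorphism argument on a tree realizing the local view structure guaranteed by Corollary~\ref{cor:family}, which exists because $(1-o(1))$ of $\tilde{G}_k$ is already locally tree-shaped. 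The main obstacle is this averaged-complexity step: Theorem~\ref{thm:sameview} equates only probabilities of events decided by the $k$-hop view, while the MIS constraint is global, and bridging this requires the explicit split of $\Pr[v_0\in M]$ into a ``committed by round $k$'' part (controlled by isomorphism and the $S(c_1)$ bound) and a ``not committed by round $k$'' part (converted to a lower bound on $\E[T_{v_0}]$ via Markov).
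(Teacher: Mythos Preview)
Your approach is essentially the paper's: Corollary~\ref{cor:family} supplies the graph, the coverage argument forces many $S(c_0)$-nodes into any MIS, and Theorem~\ref{thm:sameview} pins the probability of an early ``IN'' commitment to the tiny independence number of $S(c_1)$. Your explicit split of $\Pr[v_0\in M]$ into the ``committed by round $k$'' and ``not committed'' parts is in fact a cleaner phrasing than the paper's.

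Two points deserve attention. First, the per-node claim $\Pr[v_0\in M]\ge 1-o(1)$ does not follow from the coverage bound, which only gives $\sum_{v_0}\Pr[v_0\in M]\ge (1-o(1))|S(c_0)|$; the event $v_0\in M$ depends on more than the $k$-hop view, so the invoked ``cluster symmetry'' does not equalize these probabilities across $S(c_0)$. The fix is immediate: carry out the whole inequality chain as a sum over $v_0\in S(c_0)$ rather than per node, obtaining $\sum_{v_0}\Pr[T_{v_0}>k]\ge \sum_{v_0}\Pr[v_0\in M]-\sum_{v_0}\Pr[v_0\in M,\,T_{v_0}\le k]\ge (1-o(1))|S(c_0)|$, which is all you need for $\nodeavg=\Omega(k)$.

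Second, the tree lower bound is not just ``the same argument on a tree realizing the local view''. The paper builds a single tree $T$ whose root $v_0'$ has the tree-like $k$-view of $S(c_0)$, then \emph{simulates} any $(k{-}1)$-round tree algorithm on the tree-view nodes of $\tilde{G}_k$; the independence-number bounds on the clusters $S(v_i)$ then cap the selection probability $p_i$ of each neighbor $v_i'$ in $T$, and a union bound over $i=0,\dots,k{+}1$ shows $v_0'$ is covered with probability $o(1)$. Your one-line sketch omits this simulation step, which is where the hard graph's small-independence property is transferred to a statement about a single tree; without it there is no argument.
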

\begin{proof}
Consider the family of graphs described in \Cref{cor:family}, for an arbitrary parameter $k$, and $\beta = \Omega(k^2 \log k)$. Let $v_1,\ldots, v_{k+1}$ be the neighbors of $v_0$, where $v_i$ is the one reached with the edge $(v_0,v_i,2\beta^{i-1})$. Note that $\psi(v_i)=i$. Let $S_i = S(v_i)$. For each $S_i$ it holds that at most $s_i = \gamma |S_i| \frac{\log \beta^i}{\beta^i} $ nodes can join the independent set, and these nodes can cover at most $s_i \beta^{i} = \gamma |S_i| i \log \beta = \gamma 2i|S_0| \log \beta /\beta$ nodes of $S_0$, where the last equality holds because the ratio between the sizes of $S_i$ and $S_0$ is $\beta/2$. Hence, in total, nodes in $S_1,\ldots,S_{k+1}$ cover at most $|S_0| \cdot 2 \gamma \frac{(k+1)^2}{\beta} \log \beta$ nodes of $S_0$. For $\beta = \Omega( k^2 \log k)$, we obtain that at most $|S_0|/2$ nodes of $S_0$ are covered. Note that nodes of $S_0$ cannot cover other nodes of $S_0$ (since $S_0$ is an independent set), and hence if a node of $S_0$ is not covered by any node of $S_1,\ldots,S_{k+1}$ then it must join the independent set. Hence, at least $|S_0|/2$ nodes of $S_0$ must join the independent set.

Every node of $S_0$ has probability at least $1 - 1/\beta$ to see, in $k$ rounds, a tree-like neighborhood, and in that case, by \Cref{thm:sameview}, they have the same view of those nodes of $S_1$ that also see a tree-like neighborhood within distance $k$, and since those nodes must have probability at most $\gamma\log \beta / \beta$ of joining the independent set, then we obtain that, in expectation, a fraction $1 - 1/\beta$ of the nodes of $S_0$ have probability at most $\gamma\log \beta / \beta$ of joining the independent set. Even if all the nodes of $S_0$ that do not have a tree-like neighborhood join the independent set, we still obtain that, in expectation, at most a fraction $(1/ \beta + \gamma\log \beta / \beta)$ of nodes of $S_0$ join the independent set, which is $< 1/2$ (for a large enough $\beta$), implying that the obtained independent set is not maximal.

Hence, at least half of the nodes of $S_0$, which is the majority of the nodes of the graph, cannot decide within their first $k$ rounds, if they join the MIS or not, implying a lower bound of $\Omega(k)$ on the node-averaged complexity of the MIS problem. For the edge-averaged complexity, note that a $(1-o(1))$-fraction of the edges are incident to nodes in $S_0$ and for those edges to be decided, their nodes in $S_0$ must be decided. The half of $S_0$ that cannot decide within the first $k$ rounds covers half of the edges incident to nodes in $S_0$ and therefore, also the edge-averaged complexity of MIS is $\Omega(k)$.

Let us now focus on trees, and determine a lower bound on the worst-case randomized complexity of MIS. Let us consider a tree obtained by taking the radius-$k$ view of a node $v'_0$ in $S_0$ that does not see any cycle within distance $k$, such that all its neighbors do not see any cycle within their distance $k-1$. We then add some nodes at distance at least $k$ from $v'_0$ to make the total number of nodes the same as in $G_k$, matching also the maximum degree in $G_k$. In this way, we obtain a tree, where $v'_0$ has the same $(k-1)$-radius neighborhood of nodes in $S_0$ of $\tilde{G}_k$ that do not see any cycle within distance $k-1$. Moreover, node $v'_0$ has $k+1$ neighbors $v'_1,\ldots,v'_{k+1}$ satisfying that $v'_i$ has the same $(k-1)$-radius neighborhood of nodes in $S_i$ that do not see any cycle within distance $k-1$.
We claim that, for any randomized algorithm for trees that runs in at most $k-1$ rounds, for all $1 \le i \le k+1$, $v'_i$ must have probability $p_i$ of joining the MIS that is at most a factor $1 /(1-1/\beta)$ larger than the one of nodes in $S_i$ that do not see any cycle within distance $k$.
In fact, assume that $v'_i$ joins the MIS with probability larger than $p_i$. 
If we simulate the $(k-1)$-rounds algorithm for trees on the nodes of $\tilde{G}_k$ that do not see any cycle within distance $k$, which are, in expectation, a fraction $(1-1/\beta)$ of the total nodes, then we obtain that the expected number of nodes in $S_i$ joining the MIS is larger than $\alpha(\tilde{G}_k[S_i])$, which cannot happen. Also, since $v'_0$ has the same $(k-1)$-radius neighborhood of $v'_1$, then $p_0 = p_1$.
Hence, the $(k-1)$-algorithm for trees, by a union bound, covers $v'_0$ with probability at most 
\[
\sum_{i=0}^{k+1} p_i \le \frac{1}{1-1/\beta} \cdot (k+2)\frac{\log \beta}{\beta} \le \frac{1}{k},
\]
where the first inequality holds because nodes in $S_i$ have probability of joining the MIS at most $\gamma \log \beta^i / \beta^i$, which is maximized for $i=1$, and the last inequality holds because $\beta = \Omega(k^2 \log k)$. Hence, the algorithm fails to cover the node $v'_0$ with high probability, implying a lower bound of $\Omega(k)$ on trees as well.

Finally, we argue that an $\Omega(k)$ lower bound implies the lower bounds claimed in the theorem. Since the degree is $\Delta = 2\beta^{k+1}$, by taking $\beta = \Theta( k^2 \log k)$, we obtain 
$ \Delta =  \Theta((k^2 \log k)^{k+1}) = k^{\Theta(k)}$,
and hence $\Omega(k) = \Omega(\log \Delta / \log \log \Delta)$. Then, the total number of nodes is
$ n = \beta^{O(k^2)} = k^{O(k^2)}$,
and we thus get that $k$ is as large as $\Omega(\sqrt{\log n / \log \log n})$.
\end{proof}

\subsection{Lower Bound on Node-Averaged Complexity of Maximal Matching}
\begin{theorem}\label{thm:lb_avgMatching}
The randomized node-averaged complexity of the maximal matching problem in general graphs is $\Omega\Big(\min\Big\{\frac{\log\Delta}{\log\log\Delta},\sqrt{\frac{\log n}{\log\log n}}\Big\}\Big)$.
\end{theorem}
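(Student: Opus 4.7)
The plan is to reuse the KMW construction $\tilde G_k$ from \Cref{cor:family} (with $\beta = \Theta(k^2\log k)$) together with the indistinguishability theorem \Cref{thm:sameview}, but to replace the independence-number step of \Cref{thm:complexityMIS} with a simpler capacity-plus-vertex-cover argument specific to maximal matching. Since $|S_0|\ge n/2$, the goal reduces to showing $\sum_{u\in S_0}\E[T_u^{\tilde G_k}(A)] = \Omega(k\,|S_0|)$, from which $\nodeavg(A) = \Omega(k)$ follows.

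Fix any randomized maximal matching algorithm $A$ running on $\tilde G_k$ with uniformly random IDs, and set $q_0 := \E_{u\in S_0}[\Pr[u\in V(M)]]$ and $q_1 := \E_{v\in S_1}[\Pr[v\in V(M)]]$. Because $S_0$ is independent, each matched $u\in S_0$ requires a distinct partner outside $S_0$, so $|S_0|q_0 \le |V\setminus S_0|$, giving $q_0 \le O(k/\beta)$. The vertex-cover property of $V(M)$ (inherent to any maximal matching) applied to the $S_0$--$S_1$ edges and averaged yields $q_0 + q_1 \ge 1$, and hence $q_1 \ge 1 - O(k/\beta)$. Now \Cref{thm:sameview} asserts that a tree-like $u\in S_0$ and a tree-like $v\in S_1$ have identical $k$-hop views, so the joint distribution of ``has committed by round $k$'' and ``is matched'' is the same at $u$ and $v$ conditional on tree-likeness; combined with $\Pr[\text{tree-like}] \ge 1 - 1/\beta$ from \Cref{cor:family}, this bounds $\E_{v\in S_1}[\Pr[v \text{ is matched by round } k]] \le O(k/\beta)$. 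Subtracting from $q_1$ forces $\E_{v\in S_1}[\Pr[\tau_v > k]] \ge 1 - O(k/\beta) \ge 1/2$, where $\tau_v$ denotes $v$'s decision time; hence $\sum_{v\in S_1}\E[\tau_v] \ge |S_1|\cdot \Omega(k)$.

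The final step, which converts the delay on the small set $S_1$ into a node-averaged bound over the much larger set $S_0$ and is where I expect most care to be needed, uses the definition of completion time in \Cref{sec:model}: an edge $(u,v)$ commits only after both endpoints decide, so for any $u\in S_0$ with $S_1$-neighbors $w_1,w_2$ we have $T_u^{\tilde G_k}(A) \ge \max(\tau_{w_1},\tau_{w_2}) \ge \frac{1}{2}(\tau_{w_1}+\tau_{w_2})$. Summing over $u\in S_0$ and using that each $v\in S_1$ is a neighbor of exactly $\beta$ vertices in $S_0$, we get
\[
\sum_{u\in S_0}\E[T_u^{\tilde G_k}(A)] \;\ge\; \tfrac{\beta}{2}\sum_{v\in S_1}\E[\tau_v] \;\ge\; \tfrac{\beta}{2}\cdot\Omega(k\,|S_1|) \;=\; \Omega(k\,|S_0|),
\]
using $\beta|S_1|=2|S_0|$. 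Dividing by $n \le 2|S_0|$ yields $\nodeavg(A) = \Omega(k)$, and plugging $\beta=\Theta(k^2\log k)$ into \Cref{cor:family} gives $\Delta = O(\beta^{k+1}) = k^{O(k)}$ and $n = \beta^{O(k^2)} = k^{O(k^2)}$, so $k = \Omega(\log\Delta/\log\log\Delta)$ and $k = \Omega(\sqrt{\log n/\log\log n})$, as claimed.
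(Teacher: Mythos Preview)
Your capacity/vertex-cover argument (steps 1--3) and the indistinguishability step bounding $\E_{v\in S_1}[\Pr[v\text{ matched by round }k]]$ are sound, and you correctly conclude $\E_{v\in S_1}[\Pr[\tau_v>k]]\ge 1/2$. The gap is in the final transfer from $S_1$ to $S_0$. Your claimed inequality $T_u\ge\tau_{w_i}$ does not follow from the model in \Cref{sec:model}: for an edge-output problem such as maximal matching, a node $u$ completes when all of $u$'s incident edges have committed their outputs, and the same for $w_i$. The only common term between $T_u$ and $T_{w_i}$ is the commitment time of the single shared edge $(u,w_i)$; the remaining edges of $w_i$ (to $S_0$, to $S_1$ via the self-loop, and to children of $c_1$) can commit arbitrarily late without affecting $T_u$ at all. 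The sentence ``an edge $(u,v)$ commits only after both endpoints decide'' is not what the model says --- compare the edge-coloring example in \Cref{sec:model}, where an edge completes as soon as \emph{its own} color is fixed.

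The fix is immediate and actually simpler than your transfer: the event $\{T_v\le k\}$ is itself (up to an off-by-one in $k$) a function of the local view, so by \Cref{thm:sameview} tree-like $u\in S_0$ and tree-like $v\in S_1$ satisfy $\Pr[T_u\le k]=\Pr[T_v\le k]$. You have just shown this common value is at most $1/2+O(1/\beta)$, hence $\E_{u\in S_0}[\Pr[T_u>k]]\ge (1-1/\beta)(1/2-O(1/\beta))\ge 1/4$, and since $|S_0|\ge n/2$ you get $\nodeavg(A)=\Omega(k)$ directly. No $\beta$-fold amplification through neighbors is needed.

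For comparison, the paper does not reuse the graph of \Cref{cor:family}; it invokes the two-copies-plus-perfect-matching variant of the KMW construction from \cite{kuhn16_jacm}. There a $(1-o(1))$-fraction of all nodes lie in $C_0\cup C_0'$, any maximal matching must match almost all of them via the perfect-matching edges, and indistinguishability with $C_1\cup C_1'$ forces each such edge to be in the matching with probability $o(1)$ after $k$ rounds --- so most nodes are undecided after $k$ rounds with no vertex-cover detour. Your route has the merit of recycling the same hard instance as the MIS bound; once patched as above it is a valid alternative proof.
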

\begin{proof}
    The proof follows almost immediately from the maximum lower bound construction in \cite{kuhn16_jacm}, the details are deferred to \Cref{app:lb_avgMatching}.
\end{proof}

\urlstyle{same}
\bibliographystyle{alpha}
\bibliography{ref}

\newcommand{\etalchar}[1]{$^{#1}$}
\begin{thebibliography}{BYCHGS17}

\bibitem[ABI86]{alon86}
N.~Alon, L.~Babai, and A.~Itai.
\newblock A fast and simple randomized parallel algorithm for the maximal
  independent set problem.
\newblock {\em Journal of Algorithms}, 7(4):567--583, 1986.

\bibitem[AGLP89]{awerbuch89}
B.~Awerbuch, A.~V. Goldberg, M.~Luby, and S.~A. Plotkin.
\newblock Network decomposition and locality in distributed computation.
\newblock In {\em Proc.\ 30th IEEE Symp. on Foundations of Computer Science
  (FOCS)}, pages 364--369, 1989.

\bibitem[AKO18]{ahmadi2018distributed}
Mohamad Ahmadi, Fabian Kuhn, and Rotem Oshman.
\newblock Distributed approximate maximum matching in the congest model.
\newblock In {\em Proceedings of the International Symposium on Distributed
  Computing (DISC)}. Schloss Dagstuhl-Leibniz-Zentrum fuer Informatik, 2018.

\bibitem[ALM02]{ALM02}
Alon Amit, Nathan Linial, and Jir{\'{\i}} Matousek.
\newblock Random lifts of graphs: Independence and chromatic number.
\newblock {\em Random Struct. Algorithms}, 20(1):1--22, 2002.

\bibitem[Bar15]{barenboim15}
L.~Barenboim.
\newblock Deterministic ({$\Delta$} + 1)-coloring in sublinear (in {$\Delta$})
  time in static, dynamic and faulty networks.
\newblock In {\em Proc.~34th ACM Symp.\ on Principles of Distributed Computing
  (PODC)}, pages 345--354, 2015.

\bibitem[BBKO22]{balliu2021hideandseek}
Alkida Balliu, Sebastian Brandt, Fabian Kuhn, and Dennis Oliveti.
\newblock Deterministic $\delta$-coloring plays hide-and-seek.
\newblock In {\em Proc.\ 54th Annual ACM SIGACT Symposium on Theory of
  Computing (STOC)}, 2022.

\bibitem[BEK15]{BEK15}
L.~Barenboim, M.~Elkin, and F.~Kuhn.
\newblock Distributed {$(\Delta+1)$}-coloring in linear (in {$\Delta$}) time.
\newblock {\em SIAM Journal on Computing}, 43(1):72--95, 2015.

\bibitem[BEPS16]{barenboim2016locality}
L.~Barenboim, M.~Elkin, S.~Pettie, and J.~Schneider.
\newblock The locality of distributed symmetry breaking.
\newblock {\em Journal of the ACM}, 63(3):20, 2016.

\bibitem[BFH{\etalchar{+}}16]{brandt2016lower}
Sebastian Brandt, Orr Fischer, Juho Hirvonen, Barbara Keller, Tuomo
  Lempi{\"a}inen, Joel Rybicki, Jukka Suomela, and Jara Uitto.
\newblock A lower bound for the distributed lov{\'a}sz local lemma.
\newblock In {\em Proceedings of the Symposium on Theory of Computing (STOC)},
  pages 479--488, 2016.

\bibitem[BKPY18]{BenderKPY18}
Michael~A. Bender, Tsvi Kopelowitz, Seth Pettie, and Maxwell Young.
\newblock Contention resolution with constant throughput and log-logstar
  channel accesses.
\newblock {\em {SIAM} J. Comput.}, 47(5):1735--1754, 2018.

\bibitem[BM21]{BarenboimM21}
Leonid Barenboim and Tzalik Maimon.
\newblock Deterministic logarithmic completeness in the distributed sleeping
  model.
\newblock In {\em Proc.\ 35th Int.\ Symp.\ on Distributed Computing (DISC)},
  volume 209 of {\em LIPIcs}, pages 10:1--10:19, 2021.

\bibitem[BT19]{barenboim2019distributed}
Leonid Barenboim and Yaniv Tzur.
\newblock Distributed symmetry-breaking with improved vertex-averaged
  complexity.
\newblock In {\em Proc.\ 20th Int.\ Conf.\ on Distributed Computing and
  Networking (ICDCN)}, pages 31--40, 2019.

\bibitem[BYCHGS17]{bar2017distributed}
Reuven Bar-Yehuda, Keren Censor-Hillel, Mohsen Ghaffari, and Gregory
  Schwartzman.
\newblock Distributed approximation of maximum independent set and maximum
  matching.
\newblock In {\em Proceedings of the ACM Symposium on Principles of Distributed
  Computing}, pages 165--174, 2017.

\bibitem[CDHP20]{chang2020energy}
Yi-Jun Chang, Varsha Dani, Thomas~P Hayes, and Seth Pettie.
\newblock The energy complexity of bfs in radio networks.
\newblock In {\em Proceedings of the 39th Symposium on Principles of
  Distributed Computing}, pages 273--282, 2020.

\bibitem[CGP20]{chatterjee2020sleeping}
Soumyottam Chatterjee, Robert Gmyr, and Gopal Pandurangan.
\newblock Sleeping is efficient: Mis in ${O}(1)$-rounds node-averaged awake
  complexity.
\newblock In {\em Proceedings of the 39th Symposium on Principles of
  Distributed Computing}, pages 99--108, 2020.

\bibitem[CKP16]{chang16}
Y.-J. Chang, T.~Kopelowitz, and S.~Pettie.
\newblock An exponential separation between randomized and deterministic
  complexity in the {LOCAL} model.
\newblock In {\em Proc.\ 57th IEEE Symp.\ on Foundations of Computer Science
  (FOCS)}, 2016.

\bibitem[CKP{\etalchar{+}}19]{chang2019exponential}
Yi-Jun Chang, Tsvi Kopelowitz, Seth Pettie, Ruosong Wang, and Wei Zhan.
\newblock Exponential separations in the energy complexity of leader election.
\newblock {\em ACM Transactions on Algorithms (TALG)}, 15(4):1--31, 2019.

\bibitem[CL21]{breezing}
Corinna Coupette and Christoph Lenzen.
\newblock A breezing proof of the {KMW} bound.
\newblock In {\em 4th Symposium on Simplicity in Algorithms, {SOSA} 2021,
  Virtual Conference, January 11-12, 2021}, pages 184--195. {SIAM}, 2021.

\bibitem[CLP18]{chang2018optimal}
Y.-J. Chang, W.~Li, and S.~Pettie.
\newblock An optimal distributed {$(\Delta + 1) $}-coloring algorithm?
\newblock In {\em Proc.\ 50th ACM Symp.\ on Theory of Computing (STOC)}, 2018.

\bibitem[Feu20]{feuilloley2020long}
Laurent Feuilloley.
\newblock How long it takes for an ordinary node with an ordinary id to output?
\newblock {\em Theoretical Computer Science}, 811:42--55, 2020.

\bibitem[FG17]{fischer2017sublogarithmic}
M.~Fischer and M.~Ghaffari.
\newblock Sublogarithmic distributed algorithms for lov$\backslash$'asz local
  lemma, and the complexity hierarchy.
\newblock {\em arXiv preprint arXiv:1705.04840}, 2017.

\bibitem[Fis20]{fischer2020improved}
M.~Fischer.
\newblock Improved deterministic distributed matching via rounding.
\newblock {\em Distributed Computing}, 33(3):279--291, 2020.

\bibitem[Gha16]{ghaffari2016MIS}
M.~Ghaffari.
\newblock An improved distributed algorithm for maximal independent set.
\newblock In {\em Proc.\ 27th ACM-SIAM Symp.\ on Discrete Algorithms (SODA)},
  pages 270--277, 2016.

\bibitem[GHK18]{ghaffari2018derandomizing}
M.~Ghaffari, D.~Harris, and F.~Kuhn.
\newblock On derandomizing local distributed algorithms.
\newblock In {\em Proceedings of the Symposium on Foundations of Computer
  Science (FOCS)}, pages 662--673, 2018.

\bibitem[GK21]{GK21}
Mohsen Ghaffari and Fabian Kuhn.
\newblock Deterministic distributed vertex coloring: Simpler, faster, and
  without network decomposition.
\newblock In {\em Proc.\ 62nd IEEE Symp.\ on Foundations of Computer Science
  (FOCS)}, 2021.

\bibitem[GS17a]{GS17}
M.~Ghaffari and {H.-H.} Su.
\newblock Distributed degree splitting, edge coloring, and orientations.
\newblock In {\em Proc.\ 28th ACM-SIAM Symp.\ on Discrete Algorithms (SODA)},
  pages 2505--2523, 2017.

\bibitem[GS17b]{ghaffari2017orinetation}
Mohsen Ghaffari and Hsin-Hao Su.
\newblock Distributed degree splitting, edge coloring, and orientations.
\newblock In {\em Proceedings of the Twenty-Eighth Annual ACM-SIAM Symposium on
  Discrete Algorithms}, pages 2505--2523. Society for Industrial and Applied
  Mathematics, 2017.

\bibitem[HSS16]{harris2016distributed}
David~G Harris, Johannes Schneider, and Hsin-Hao Su.
\newblock Distributed ($\delta$+ 1)-coloring in sublogarithmic rounds.
\newblock In {\em Proceedings of the forty-eighth annual ACM symposium on
  Theory of Computing}, pages 465--478, 2016.

\bibitem[II86]{IsraelI86}
Amos Israeli and Alon Itai.
\newblock A fast and simple randomized parallel algorithm for maximal matching.
\newblock {\em Inf. Process. Lett.}, 22(2):77--80, 1986.

\bibitem[JKZ02a]{jurdzinski2002efficient}
Tomasz Jurdzi{\'n}ski, Miros{\l}aw Kuty{\l}owski, and Jan Zatopia{\'n}ski.
\newblock Efficient algorithms for leader election in radio networks.
\newblock In {\em Proceedings of the twenty-first annual symposium on
  Principles of distributed computing}, pages 51--57, 2002.

\bibitem[JKZ02b]{jurdzinski2002energy}
Tomasz Jurdzi{\'n}ski, Miros{\l}aw Kuty{\l}owski, and Jan Zatopia{\'n}ski.
\newblock Energy-efficient size approximation of radio networks with no
  collision detection.
\newblock In {\em International Computing and Combinatorics Conference}, pages
  279--289. Springer, 2002.

\bibitem[Joh99]{johansson99}
{\"{O}}jvind Johansson.
\newblock Simple distributed {$\Delta+1$}-coloring of graphs.
\newblock {\em Inf. Process. Lett.}, 70(5):229--232, 1999.

\bibitem[KKP13]{KardasKP13}
Marcin Kardas, Marek Klonowski, and Dominik Pajak.
\newblock Energy-efficient leader election protocols for single-hop radio
  networks.
\newblock In {\em Proc.\ 42nd Int.\ Conference on Parallel Processing (ICPP)},
  pages 399--408, 2013.

\bibitem[KMW04]{lowerbound}
F.~Kuhn, T.~Moscibroda, and R.~Wattenhofer.
\newblock What cannot be computed locally!
\newblock In {\em Proc.\ 23rd ACM Symp. on Principles of Distributed Computing
  (PODC)}, pages 300--309, 2004.

\bibitem[KMW16]{kuhn16_jacm}
F.~Kuhn, T.~Moscibroda, and R.~Wattenhofer.
\newblock Local computation: Lower and upper bounds.
\newblock {\em Journal of the ACM}, 63(2), 2016.

\bibitem[KP98]{kutten1998fast}
Shay Kutten and David Peleg.
\newblock Fast distributed construction of smallk-dominating sets and
  applications.
\newblock {\em Journal of Algorithms}, 28(1):40--66, 1998.

\bibitem[Lin87]{linial1987LOCAL}
N.~Linial.
\newblock Distributive graph algorithms -- global solutions from local data.
\newblock In {\em Proc.\ 28th IEEE Symp.\ on Foundations of Computer Science
  (FOCS)}, pages 331--335, 1987.

\bibitem[Lub86]{luby86}
M.~Luby.
\newblock A simple parallel algorithm for the maximal independent set problem.
\newblock {\em SIAM Journal on Computing}, 15:1036--1053, 1986.

\bibitem[Lub93]{luby1993removing}
M.~Luby.
\newblock Removing randomness in parallel computation without a processor
  penalty.
\newblock {\em J.\ of Computer and System Sciences}, 47(2):250--286, 1993.

\bibitem[Nao91]{Naor91}
Moni Naor.
\newblock A lower bound on probabilistic algorithms for distributive ring
  coloring.
\newblock {\em {SIAM} J. Discret. Math.}, 4(3):409--412, 1991.

\bibitem[NO00]{NakanoO00a}
Koji Nakano and Stephan Olariu.
\newblock Energy-efficient initialization protocols for single-hop radio
  networks with no collision detection.
\newblock {\em {IEEE} Trans. Parallel Distributed Syst.}, 11(8):851--863, 2000.

\bibitem[Pel00]{peleg00}
D.~Peleg.
\newblock {\em Distributed Computing: A Locality-Sensitive Approach}.
\newblock SIAM, 2000.

\bibitem[RG20]{rozhonghaffari20}
V.~Rozho\v{n} and M.~Ghaffari.
\newblock Polylogarithmic-time deterministic network decomposition and
  distributed derandomization.
\newblock In {\em Proc.\ 52nd {ACM} Symp.\ on Theory of Computing (STOC)},
  pages 350--363, 2020.

\bibitem[SEW13]{schneider2013symmetry}
Johannes Schneider, Michael Elkin, and Roger Wattenhofer.
\newblock Symmetry breaking depending on the chromatic number or the
  neighborhood growth.
\newblock {\em Theoretical Computer Science}, 509:40--50, 2013.

\end{thebibliography}

\appendix

\section{Other Notions of Averaged Complexity}
\label{sec:other}

We conclude this paper by discussing possible stronger notions of node and edge averaged complexity. In some cases, we might want to have stronger guarantees. For example, if we want to use an algorithm with a low node or edge-averaged complexity as a building block in another algorithm, the individual node or edge complexities might be particularly important for some of the nodes or edges and not as important for others. In some cases, we might even need all nodes or edges to have a low complexity ``on average", when averaging over different runs of the algorithm. For this purpose, we also define the \emph{node/edge weighted averaged complexity} and the \emph{node/edge expected complexity}. Recall that $T_v^G(\calA)$ and $T_e^G(\calA)$ are the individual complexities of node $v$ and edge $e$ when running algorithm $\calA$ on graph $G$. For the node weighted averaged complexity, the nodes in $V$ are given positive weights $w: V\to\mathbb{R}_{>0}$, which are given to an algorithm as input. The weighted node-averaged complexity of an algorithm is then defined as the weighted average node complexities according to the given weights. The weighted edge-averaged complexity is defined accordingly with edge weights. For a given problem $\calP$, the weighted node/edge-averaged complexity is defined as the weighted node/edge-averaged complexity of best possible algorithm, but for a worst-case weight distribution. 

We want to point out that the concept of weighted node-averaged complexity has (implicitly) proven useful in a recent distributed graph coloring paper by Ghaffari and Kuhn~\cite{GK21}. For given weights on the nodes, they give an efficient algorithm to color a subset of nodes corresponding to a constant fraction of the overall weight and they use this to implement some pipelining, when coloring layered graphs. As a result, they obtained faster deterministic algorithms for computing $\Delta$-colorings of general graphs and for coloring low-arboricity graphs.

The node expected complexity of an algorithm $\calA$ on a graph $G=(V,E)$ is defined as $\max_{v\in V}\E[T_v^G(\calA)]$ and again the edge expected complexity is defined accordingly. Note that for a deterministic algorithm $\calA$, the node/edge expected complexity is equal to the worst-case complexity. This is however not true for randomized algorithms, where it can be that every node/edge has a small individual complexity in expectation, but that with high probability there are some nodes/edges for which the individual complexity in an actual execution is significantly larger (this is in fact the case for many existing randomized graph algorithms that are based on graph shattering~\cite{barenboim2016locality, harris2016distributed, ghaffari2016MIS,fischer2017sublogarithmic, GS17, chang2018optimal, ghaffari2018derandomizing}). We note that Feuilloley~\cite{feuilloley2020long} defines a notion of averaged complexity that is essentially equivalent to the node expected complexity as defined in this paper. Feuilloley considers the maximal (over the nodes) individual complexity of the best possible deterministic algorithm if the node identifiers are chosen uniformly at random.

In the following, for a graph problem $\calP$, let us use $\nodeavg(\calP)$, $\wnodeavg(\calP)$, $\nodeexp(\calP)$, and $\nodeworst(\calP)$ to denote the node-averaged complexity of $\calP$, the weighted node-averaged complexity of $\calP$, the node expected complexity of $\calP$, and the worst-case complexity of $\calP$. We have
\[
\nodeavg(\calP)\ \leq\ \wnodeavg(\calP)\ \leq\ \nodeexp(\calP)\ \leq\ \nodeworst(\calP).
\]
For deterministic algorithms, the last inequality is an equality. For the edge-based complexity notions, the definitions and relations hold accordingly. 

Let us briefly observe what we can say about the weighted averaged and expected complexities of MIS and maximal matching, two of the main problems studied in this paper. For MIS and for maximal matching, we showed that the node-averaged complexity is lower bounded by the KMW bound (i.e., by $\Omega\Big(\min\Big\{\frac{\log\Delta}{\log\log\Delta},\sqrt{\frac{\log n}{\log\log n}}\Big\}\Big)$). The same lower bound therefore also holds for the stronger node-averaged complexity notions. For maximal matching, the edge-averaged complexity is $O(1)$. Note however that, as described in the proof of \Cref{thm:lb_avgMatching}, the KMW lower bound construction that is used for maximum matching in \cite{kuhn16_jacm} contains a perfect matching such that every maximal matching contains almost all edges of this perfect matching, and within the KMW time bound, only a $o(1)$-fraction of those edges can join the matching. By putting most of the weight on those edges, we can therefore see that the KMW lower bound also holds for the weighted edge-averaged complexity of maximal matching (and thus also for the edge expected complexity). We mentioned in \Cref{sec:intro} that the node-averaged complexity of MIS is $O\big(\frac{\log\Delta}{\log\log\Delta}\big)$, by using a randomized MIS algorithm of \cite{bar2017distributed}. This algorithm in fact has the stronger property that for every node $v\in V$, the probability that $v$ is undecided after $O\big(\frac{\log\Delta}{\log\log\Delta}\big)$ rounds is $o(1)$. Cf.\ Theorem 3.1 in \cite{bar2017distributed}. This implies that the node expected complexity of MIS and the edge expected complexity of maximal matching are both $O\big(\frac{\log\Delta}{\log\log\Delta}\big)$. We hope that further investigating the discussed and possibly other notions of node or edge-averaged complexity will prove useful for obtaining a more thorough understanding of the complexities of distributed graph problems.

\section{Missing Proofs from Section 3}
\label{app:missingAlgProofs}

\begin{proof}[\bf Proof of \Cref{thm:matchingRand}]
Per iteration, we mark each edge $e=\{u, v\}$ with probability $\frac{1}{4(d_v+ d_u)}$ where $d_u$ and $d_v$ denote the degrees of $u$ and $v$. Then, any marked edge that is not incident on any other marked edge is added to the maximal matching, and we remove all nodes that have a matching edge (and all the edges incident on any such node). We then continue to the next iteration.

We use some of Luby's standard terminology and ideas~\cite{luby86}, though our analysis is somewhat different as we show that the number of edges remaining in the maximal matching problem shrinks by a constant factor\footnote{Notice that the counterpart statement of number of nodes remaining in the MIS problem shrinking by a constant factor is not correct, for Luby's algorithm, or any other MIS algorithm, as we formally prove in our lower bound.}. Let us call a node $v$ good if at least $d_v/3$ of the neighbors of $v$ have degree at most $d_v$. We call an edge good if it is incident on at least one good node.
We can use a standard argument to show that at least $1/2$ of the edges of the graph are good.\footnote{To keep the proof self-contained, here is the argument: direct each edge to the higher degree endpoint (breaking ties arbitrarily). For each bad node, the number of outgoing edges is greater than twice the number of incoming edges. Hence, each bad edge, which enters a node $v$, can be matched to a distinct set of two edges that exit the node $v$. Therefore, the total number of edges is at least $2$ times the number of bad edges.} We next show that and each good edge is removed with a constant probability.

Consider a good edge $e=\{u, v\}$ and suppose that $v$ is the (or one of the two) good endpoint(s). Let us examine the set of at least $d_v/3$ edges of node $v$ that go to neighbors with degree at most $d_v$. Each such edge $e'=\{v, w\}$ is marked with probability at least $\frac{1}{8d_{v}}$. Moreover, the probability that any other edge incident on $e'$ is marked is at most $\frac{d_{v}}{4(d_v)} + \frac{d_w}{4(d_w)} = 1/2$. Hence, with probability at least $\frac{1}{32d_{v}}$, the edge $e'$ is added to the matching. Since these events are mutually exclusive for the at least $d_v/3$ edge of $v$ going to lower or equal degree neighbors, we conclude that with probability at least $\frac{d_v}{3} \cdot \frac{1}{32 d_v} > \frac{1}{100}$, node $v$ gets a matching edge and hence edge $e$ gets removed.

Hence, per iteration, at least $1/2$ of edges are good and each good edge is removed with probability at least $1/100$. Thus, the edge-averaged complexity of the algorithm is $O(1)$. Moreover, it is easy to argue that the worst case complexity is $O(\log n)$ with high probability:  per iteration the expected number of edges shrinks by a constant factor. Thus, if we repeat the algorithm for $O(\log n)$ iterations, the number of remaining edges is expected to be below $\frac{1}{n^2}$. By Markov's inequality, this implies that with probability $1-1/n^{2}$ no edge remains and the algorithm has terminated.
\end{proof}

\begin{proof}[\bf Proof of \Cref{thm:RulingDet}]
We make use of a known\footnote{We can also get this result through the following very simple algorithm. First, every node picks one arbitrary outgoing edge. We then get an oriented pseudo-forest that contains all nodes. In this pseudo-forest, we add all the parents of the leaves to the MIS, and remove them along with their neighbors. Then, we augment the MIS by adding to it a maximal independent set of the nodes in the remaining pseudo-forest.} deterministic dominating set algorithm of Kutten and Peleg~\cite{kutten1998fast} that runs in $O(\log^* n)$ rounds and computes a dominating set of size at most $n/2$. To get a deterministic $(2,O(\log\Delta))$-ruling set algorithm with average complexity $O(\log^* n)$, we do as follows. At the beginning all nodes are active. We then run $O(\log\Delta)$ phases in which in each phase, we compute a dominating set $S$ of the graph induced by the active nodes such that $S$ contains at most half of the active nodes. All nodes not in $S$ point to a neighbor in $S$. At the end, we have at most $n/\Delta$ active nodes and we can then afford to compute an MIS of the remaining active nodes in time $O(\Delta + \log^* n)$ using known algorithms~\cite{BEK15}.

To compute a $(2,O(\log\log n))$-ruling set in deterministic average complexity $O(\log^* n)$. We stop, when we have $n/\mathrm{poly}\log n$ nodes and use the network decomposition based $\poly(\log n)$ round deterministic MIS algorithm of Rozhon and Ghaffari~\cite{rozhonghaffari20} to compute an MIS of the remaining nodes.
\end{proof}

\begin{proof}[\bf Proof of \Cref{thm:sinkless}]
We first describe the process for one iteration, and afterward discuss the repetitions and how they affect the overall (averaged) complexity. Let $r$ be a large enough constant to be fixed later. We call each cycle short if it has length at most $6r$. First, we perform a partial orientation to ensure that any node that has an edge in a short cycle has out-degree at least $1$. 

\paragraph{Partial orientation, edges in short cycles} For each short cycle $C$, take the edge $e_C$ with the smallest identifier in $C$ and define the \emph{preferred orientation} of $C$ as the orientation that goes in $e_C$ from the smaller identifier endpoint of $e_C$ to the other endpoint, and goes through the other edges of $C$ accordingly. We call an edge short if it is in at least one short cycle. We orient each short edge $e$ consistent with the preferred orientation of the smallest id short cycle that contains $e$, where the id of a short cycle is a concatenation of the ids of its edges (thus any two different cycles have different ids).

We next argue that with this orientation, each node that has a short edge gets out-degree at least $1$. Consider a node $v$ and let $C$ be the smallest id short cycle that contains one $v$. Suppose $C$ has edges $e_1 = \{v, u_1\}$ and
$e_2 = \{v, u_2\}$ incident on $v$. Both $e_1$ and $e_2$ are oriented according to the preferred orientation of $C$, as they are oriented according to the smallest id short cycle that contains them, and that is $C$ by its definition. Hence, either $e_1$ is oriented as $v \rightarrow u_1$, in which case we are done, or $e_1$ is oriented as $u_1 \rightarrow v$ but $e_2$ is oriented as $v \rightarrow u_2$. In either case, node $v$ has out-degree at least $1$. 

\paragraph{Nodes with no short edge, max-degree reduction.} Notice that what remains are nodes that do not have any edge in a short cycle. None of the edges of these nodes is oriented. To continue the work on these, we would like to switch to a graph where the maximum degree is also upper bounded by a constant. For that, let each such node choose $3$ of its edges arbitrarily. For any edge $e=\{v, u\}$, if the edge $e$ is chosen by only one endpoint $v$ (either because $u$ already had an outgoing edge or because it chose other edges), then we consider $e$ a self-loop incident on node $v$. In particular, node $u$ will not rely on this edge $e$ to have an outgoing edge, and node $v$ can safely assume that this edge will be oriented from $v$ to $u$. We also remove all nodes that had a short edge incident on them, and thus already have an outgoing edge. At this point, we have a graph on the remaining nodes where each node is incident on exactly $3$ edges or self-loops, all edges remain unoriented, and there is no cycle of length $6r$ in the graph (with the exception of self-loops).

\paragraph{Clustering of remaining nodes.} Let $S$ be the set of nodes that have distance at least $2r+1$ from any self-loop. Let $S'$ be a maximal $(2r+1)$-independent subset of $S$, which can be computed in $O(\log^* n)$ rounds as $r=O(1)$~\cite{linial1987LOCAL}. Let $S''$ be the union of $S'$ and the set of nodes that have a self-loop incident on them. We cluster all nodes where each node is clustered with the closest node in $S''$ (breaking ties by ID). Each node has within its distance $2r+1$ either a self-loop or a node from $S'$ (as otherwise it would be added to $S'$ and thus $S''$). Hence the radius of each cluster is at most $2r+1$. Moreover, any two clusters that include vertices that are adjacent have at most one edge connecting them, as two edges would imply a cycle of length at most $2(2r+1)+2 = 4r+4\leq 6r$.

For any cluster that is centered on a self-loop, orient all the edges with both endpoints in the cycle toward the self-loop. Any other cluster has radius at least $r$ as all nodes within distance $r$ of the cluster center are in its cluster (this is because nodes of $S'$ are $(2r+1)$-independent and have distance at least $2r+1$ from self-loops). For any of these other clusters centered at $S'$-nodes, let the cluster center choose three of its neighboring clusters arbitrarily: we then keep alive only the nodes on the paths to these three neighboring clusters. For all other nodes of the cluster, we orient the edges toward (the closest) one of these three paths. Hence, we keep alive at most $3(2r+1) \leq 7r$ nodes in the cluster, while the cluster had at least $3(2)^{r}$ vertices. 

\paragraph{Contraction, and repetitions.} To prepare for the next iteration, we now think of a contracted graph where each of these clusters and its three paths are simply one node and its three edges. Notice that since the cluster radius is at most $2r+1$, the round complexity of running any algorithm on this contracted graph is at most a $4r+4$ factor higher. However, this cost is only incurred by the alive nodes which are at most a $\frac{7r}{(2)^{r}}$ fraction of nodes. Hence, the total round complexity contribution of the next iteration is smaller by a factor of  $\frac{(7r) (4r+4)}{(2)^{r}} \leq \frac{1}{2}$, where the last inequality follows as $r\geq 15$. Hence, per iteration, the total added round complexity (added over all nodes that remain alive) shrinks by a $1/2$ factor. This implies that the total round complexity summation of all nodes is dominated by the first iteration and is thus $O(\log^* n)$.   

\paragraph{Overall analysis, and round complexity improvement} Repeating this algorithm, per iteration, the number of remaining nodes shrinks by a constant factor and thus all nodes would terminate within $O(\log n)$ repetitions. That would make for a total round complexity of $O(\log n \cdot \log^* n)$. To improve this to $O(\log n)$, after $O(\log\log n)$ iterations, by which time the number of remaining nodes has dropped below $\frac{n}{\log^2 n}$, we switch to the standard $O(\log n)$ algorithm to finish the remaining nodes. Since only $\frac{n}{\log^2 n}$ participate in that $O(\log n)$ round algorithm, the contribution to the node-averaged complexity is $o(1)$ and we can conclude that this modified algorithm still has node-averaged complexity of $O(\log^* n)$ and worst-case round complexity of $O(\log n)$.
\end{proof}

\section{Missing Proofs from Section \ref{sec:LB}}\label{apx:isomorphism}

\subsection{Isomorphism Assuming No Short Cycles}
\paragraph{The Algorithm.}
The algorithm used to prove the isomorphism is almost the same as the one used in \cite{breezing}, and it is shown in Algorithm \ref{algo:isomorphism}. The only difference between our algorithm and the one in \cite{breezing} lies in the lines \ref{code:nvi} and \ref{code:nwi}, which refer to the labels assigned to the edges of $G_k$ in \Cref{def:labeling}.
This will slightly alter the size of the lists $N_v[i]$ used in the algorithm, compared to those produced in the original case of \cite{breezing}, but we will prove that the algorithm still works as expected.

On a high level, the algorithm takes in input two nodes $v_0 \in S(c_0)$ and $v_1 \in S(c_1)$, and does the following. With $G^i(v)$ we denote the subgraph induced by nodes at distance at most $i$ from $v$ in $G$, excluding the edges between nodes at distance exactly $i$. Assume that the views of $v_0$ and $v_1$ are tree-like up to distance $k$, that is, $G^k_k(v_0)$ and $G^k_k(v_1)$ are trees. The algorithm visits these views recursively and builds an isomorphism $\phi$ between nodes in the view of $v_0$ and nodes in the view of $v_1$. In order to build such an isomorphism, since the views are tree-like, it is enough to pair nodes such that paired nodes have the same degree. Every time the algorithm recurses on two paired nodes $v$ and $w$, it groups the neighbors of $v$ (resp.\ $w$) into $k+2$ lists $N_v[i]$ (resp.\ $N_w[i]$), where in $N_v[i]$ (resp.\ $N_w[i]$) are contained nodes reachable through edges labeled $\beta^i$. If for all $i$, the size of the lists $N_v[i]$ and $N_w[i]$ are the same, then it is trivial to produce an isomorphism, but unfortunately, this is not always the case. However, we will prove that the sizes are almost always the same, and the function $\textsc{Map}$ takes care of handling the remaining cases.

We follow the same strategy of \cite{breezing} to prove that the algorithm is correct. We start with defining the \emph{history} of a node. Informally, the history of a node is the label of the edge (of $G_k$) connecting the current node to the one that we used to reach it.
\begin{definition}[Node History]
    Let $z \in \set{u,v}$, be a node that is reached by the algorithm through the call of the function $\textsc{Walk}(u,v,\mathrm{prev},\mathrm{depth})$ satisfying $\mathrm{prev} \neq \bot$. If $z=u$, let $p = \mathrm{prev}$, otherwise let $p = \phi(\mathrm{prev})$. Let $(z,p,\beta^i)$ be the edge connecting $z$ to $p$. The history of $z$ is $\beta^i$.
\end{definition}
Note that, in general, the history of a node may differ from the label used to reach that node. In fact, if $u$ is reached through the edge $(p,u,\beta^j)$, and $u$ can reach $p$ through the edge $(u,p,\beta^i)$, then $i=j$ if and only if $p$ is in the same cluster of $v$ and the edge $(p,u)$ corresponds to a self loop in $CT_k$, while otherwise $i$ is either $j+1$ or $j-1$.

In the following, we will use the terms \emph{internal} and \emph{leaf} when talking about nodes of $G_k$: a node is internal (resp.\ leaf) if it corresponds to an internal node (resp.\ leaf) of $CT_k$.
We prove that the size of the variables $N_v[i]$ and $N_w[i]$ only depends on the \emph{position} of a node (internal or leaf) and its history. 
\begin{lemma}[Variables Determining Node Neighborhoods]\label{lem:listsize}
    For every call of $\textsc{Map}(N_v,N_w)$ it holds that the content of $N_v$ and $N_w$ only depends on the position and the history of $v$ and $w$. In particular, if $v$ and $w$ have the same position and history, then $\mathrm{len}(N_v[i]) = \mathrm{len}(N_w[i])$ for all $i \in \set{0,\ldots,k+1}$. If $v$ and $w$ are both internal, but they have different histories $\beta^x$ and $\beta^y$, then $\mathrm{len}(N_v[i]) = \mathrm{len}(N_w[i])$ for all $i \in \set{0,\ldots,k+1} \setminus \set{x,y}$, $\mathrm{len}(N_v[x]) = \mathrm{len}(N_w[x]) - 1$, and $\mathrm{len}(N_v[y]) - 1 = \mathrm{len}(N_w[y])$.
\end{lemma}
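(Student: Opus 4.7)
The plan is to reduce the analysis of $N_v[i]$ to a simple count: the total number of neighbors of $v$ in $G_k$ labeled $\beta^i$, minus a correction of at most one for the $\mathrm{prev}$ node. The first step is to observe that, by \Cref{obs:numberofedges}, the number of edges incident to $v$ labeled $\beta^i$ is completely determined by the position of $v$: if $v$ is internal, it has exactly $2\beta^i$ neighbors at each $i \in \set{0,\ldots,k}$, independently of $\psi(S^{-1}(v))$; if $v$ is a leaf with $\psi(S^{-1}(v)) = j$, it has $2\beta^j$ neighbors labeled $\beta^j$ and zero neighbors labeled $\beta^i$ for all other $i$. Hence, if one ignores the $\mathrm{prev}$ node, the multiset of sizes $(|N_v[0]|,\ldots,|N_v[k+1]|)$ is a function of the position alone (for internal nodes) or of the position together with $\psi$ (for leaves).

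Next I would account for $\mathrm{prev}$. By inspection of $\textsc{Walk}$, when it is invoked with $\mathrm{prev}\neq\bot$ the list $N_v[i]$ is obtained by listing the neighbors of $v$ reached through an edge labeled $\beta^i$ and then removing $\mathrm{prev}$. The edge used to reach $\mathrm{prev}$ from $v$ is exactly the edge whose label is the history of $v$, say $\beta^h$, so removing $\mathrm{prev}$ decreases $|N_v[h]|$ by one and leaves all other sizes unchanged. Thus, in full generality, for every node $v$ encountered in the recursion, the vector of sizes of $N_v$ is determined by its position and its history (for a leaf the history also forces $\psi(S^{-1}(v))$, because a leaf's only incident label is $\beta^{\psi(S^{-1}(v))}$, so this is already recorded in the history).

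With this, the three sub-claims of the lemma follow mechanically. If $v$ and $w$ agree on position and history then the base counts and the $-1$ correction are at the same index, so $|N_v[i]|=|N_w[i]|$ for every $i$. If $v,w$ are both internal with histories $\beta^x\neq\beta^y$, then the base count at every index is $2\beta^i$ for $i\in\set{0,\ldots,k}$ and zero for $i=k+1$; the two corrections land at index $x$ for $v$ and at index $y$ for $w$, giving $|N_v[x]|=2\beta^x-1=|N_w[x]|-1$, $|N_w[y]|=2\beta^y-1=|N_v[y]|-1$, and equality at every other index.

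The only real subtlety to handle carefully is the case $\mathrm{prev}=\bot$, which occurs only at the initial call $\textsc{Walk}(v_0,v_1,\bot,0)$; there no history is defined, but the positions of $v_0$ and $v_1$ are already distinct (internal versus leaf), so this call is not covered by either of the two size-equality statements in the lemma and needs no comparison beyond the basic observation that its $N$-lists are determined by position. I also want to check that the algorithm maintains the invariant that in every recursive call $\textsc{Walk}(u,v,\mathrm{prev},\mathrm{depth})$ with $\mathrm{prev}\neq\bot$, the value of $\mathrm{prev}$ is indeed a neighbor of $u$ in $G_k$ whose edge label matches the declared history; this follows from the way $\textsc{Map}$ pairs nodes across $N_u[i]$ and $N_{v}[i]$ and from the fact that $G^k_k(v_0)$ and $G^k_k(v_1)$ are trees, so no neighbor is ever encountered via two different edges. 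With this invariant in place, the counting argument above closes the proof.
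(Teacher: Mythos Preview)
Your approach is essentially the paper's: use \Cref{obs:numberofedges} to read off the base count of $2\beta^i$ edges at each label for internal nodes (and a single nonzero count for leaves), then subtract one at the index given by the history to account for excluding $\mathrm{prev}$ (respectively $\phi(\mathrm{prev})$). The paper's proof is terser but identical in substance, and your observation that a leaf's history already pins down $\psi(S^{-1}(v))$ is exactly why the leaf case needs no separate bookkeeping.

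There is one factual slip in your side discussion of the initial call. You assert that $v_0$ and $v_1$ have distinct positions (internal versus leaf), but this is false: for $k\ge 1$, both $c_0$ and $c_1$ are internal in $CT_k$. The node $c_1$ is a leaf only in $CT_0$; it becomes internal as soon as children are attached to it in the construction of $CT_1$. Hence both $v_0\in S(c_0)$ and $v_1\in S(c_1)$ are internal, and since $\mathrm{prev}=\bot$ neither side incurs the $-1$ correction, so all list sizes agree trivially. (Also, the initial call has depth $k$, not $0$; with depth $0$ the recursion would return immediately.) This error sits in a remark that the paper does not even address inside the lemma---the $\mathrm{prev}=\bot$ case is handled separately in the proof of \Cref{cor:sufficient}---so your core counting argument is unaffected, but you should correct the claim about $c_1$'s position.
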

\begin{proof}
    By \Cref{obs:numberofedges}, all internal nodes have exactly $2\beta^i$ outgoing edges labeled $\beta^i$, for each $i \in \set{0,\ldots,k}$, and $0$ edges labeled $\beta^{k+1}$. Hence, if two internal nodes have also the same history, then all the lists have the same size. If two internal nodes $v$ and $w$ have different histories $\beta^x$ and $\beta^y$, then $N_v[x]$ contains exactly one less element than $N_w[x]$, because the algorithm excludes only $\mathrm{prev}$ from being part of $N_v(x)$. Similarly, $N_w[y]$ contains exactly one less element than $N_v[y]$, because $\phi(\mathrm{prev})$ is not part of the list.
    
    By \Cref{obs:numberofedges}, leaf nodes that have the same history $\beta^x$ have exactly $2\beta^x$ outgoing edges labeled $\beta^x$, and they do not have any other outgoing edge with different labels. Hence, for these nodes, $\mathrm{len}(N_v[x]) = \mathrm{len}(N_w[x]) = 2 \beta^x - 1$ (where the $-1$ comes from the fact that $\mathrm{prev}$ and $\phi(\mathrm{prev}$) are excluded from the lists of $v$ and $w$, respectively), and all the other lists have length $0$.
\end{proof}

By using a characterization corresponding to the one of \Cref{lem:listsize}, authors in \cite{breezing} showed that, in order to prove that their version of Algorithm \ref{algo:isomorphism} is correct, it is sufficient to prove that similar requirements as in the statement of \Cref{cor:sufficient} are satisfied. We will follow the same proof idea and adapt it to our case. 
\begin{corollary}[Sufficient Conditions for the Correctness of Algorithm \ref{algo:isomorphism}]\label{cor:sufficient}
    Given a graph $G_k$ and two nodes $v_0 \in S(c_0)$ and $v_1 \in S(c_1)$ that satisfy that $G^k_k(v_0)$ and $G^k_k(v_1)$ are trees, if all pairs of nodes mapped by Algorithm \ref{algo:isomorphism} either have the same position and history, or they are both internal, then Algorithm \ref{algo:isomorphism} produces an isomorphism between $G^k_k(v_0)$ and $G^k_k(v_1)$.
\end{corollary}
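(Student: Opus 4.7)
The approach is to show by induction on the depth of $\textsc{Walk}$-invocations that every pair of nodes ever passed to $\textsc{Walk}$ satisfies the hypothesised invariant---same position and same history, or else both internal---and that, under this invariant, $\textsc{Map}$ can always produce a bijection whose output pairs again satisfy the invariant. At the root invocation $\textsc{Walk}(v_0,v_1,\bot,0)$, $v_0\in S(c_0)$ and $v_1\in S(c_1)$ are both internal in $CT_k$ (assuming $k\ge 1$), and since nothing is excluded from the lists, \Cref{obs:numberofedges} gives $\mathrm{len}(N_{v_0}[i])=\mathrm{len}(N_{v_1}[i])=2\beta^i$ for every $i$, so $\textsc{Map}$ has a trivial index-by-index pairing available.

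\textbf{Inductive step via \Cref{lem:listsize}.} For any subsequent recursive call on a pair $(v,w)$ satisfying the invariant, \Cref{lem:listsize} reduces us to two possible list-size patterns. In the same-position-and-history case, every $N_v[i]$ and $N_w[i]$ have equal length and $\textsc{Map}$ pairs them index-by-index; since edges in $CT_k$ are labeled such that each label from a given position uniquely determines the target position (\Cref{obs:ctk}), each output pair inherits the same position, and the history $\beta^i$ is read directly off the label used. In the both-internal, different-histories case with histories $\beta^x,\beta^y$, the only imbalances are a single extra entry in $N_w[x]$ and a single extra entry in $N_v[y]$; I would have $\textsc{Map}$ pair the equal-length portions directly and then pair the two leftovers with each other. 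The corollary's assumption forces this cross-pair to be either same-position-and-history (impossible since histories differ) or both internal, which is exactly what we need to propagate.

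\textbf{From invariant to isomorphism.} Once the invariant is known to propagate throughout the recursion, $\phi$ is well-defined on the entire depth-$k$ view: because $G^k_k(v_0)$ and $G^k_k(v_1)$ are trees, every vertex is reached along a unique path from the root of the recursion, so $\phi$ is single-valued, and the symmetric argument on the $w$-side yields injectivity. Every edge of $G^k_k(v_0)$ is traversed by $\textsc{Walk}$ in exactly one direction and matched to an edge of $G^k_k(v_1)$ by construction; conversely, since both views are trees, no further edges exist among the image vertices that could break adjacency preservation. Hence $\phi$ is a bijection of vertex sets that also bijects the edges, i.e., a graph isomorphism between the two $k$-hop neighborhoods.

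\textbf{Main obstacle.} The delicate step is the both-internal, different-histories case: one must verify that the two leftover entries of $N_v[y]$ and $N_w[x]$ are in fact internal in $CT_k$, so that pairing them keeps the invariant alive. I expect this to come out of a short case analysis using \Cref{obs:ctk} together with \Cref{def:labeling}---the leftovers correspond precisely to the parent-direction neighbors of $v$ and $w$ in the cluster tree, and the parents of non-root internal nodes are themselves internal in $CT_k$. Any less careful handling of the leftovers (e.g.\ pairing a within-cluster neighbor of $v$ with a cross-cluster neighbor of $w$) risks breaking position consistency downstream, so this is where one must be precise; every other step is either a routine index-by-index matching or a direct invocation of the tree hypothesis on the $k$-hop views.
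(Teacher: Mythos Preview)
You have misread the logical structure of the corollary. Its hypothesis already grants you, for \emph{every} pair $(v,w)$ that the algorithm maps, that either they have the same position and history or they are both internal. There is nothing to ``propagate'' and no induction on the invariant is needed; the hard work of establishing that invariant is done elsewhere (in \Cref{lem:maininvariant}), and the corollary is deliberately phrased so that it only has to consume that invariant, not produce it. Your ``Main obstacle'' paragraph---verifying that the two leftover entries in the cross case are themselves internal---is precisely the content of \Cref{lem:maininvariant}, not of this corollary. Indeed, you notice this yourself when you write that ``the corollary's assumption forces this cross-pair to be \dots\ both internal'': once you invoke the assumption there, your induction becomes vacuous.

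What remains, and what the paper's proof actually does, is short: given the hypothesis, for every call $\textsc{Map}(N_v,N_w)$ one checks via \Cref{lem:listsize} that the list lengths either all agree (same position and history) or agree except for a $\pm 1$ mismatch at two indices (both internal, different histories), and in either case the code of $\textsc{Map}$ produces a bijection between the children of $v$ and those of $w$. Because $G_k^k(v_0)$ and $G_k^k(v_1)$ are trees, a level-by-level bijection of children yields a graph isomorphism; this last part you handle correctly. Two smaller points: the root call is $\textsc{Walk}(v_0,v_1,\bot,k)$, not depth $0$; and your claim that ``each label from a given position uniquely determines the target position'' is neither needed here nor true without further care (self-loop edges and parent edges share a label, which is why the algorithm sorts by $\mathsf{self}$), so it should be dropped.
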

\begin{proof}
    Since the $k$-radius neighborhoods of $v_0$ and $v_1$ are trees, then the visit produced by Algorithm \ref{algo:isomorphism} creates two rooted trees of depth $k$, and hence it is sufficient to show that, at every step, the degrees of the visited nodes are the same, and that the algorithm produces a bijection between the neighbors of the current nodes.
    
    We first show that $\mathrm{Map(N_{v_0},N_{v_1})}$ succeeds.
    Nodes $v_0$ and $v_1$ are both internal, and since $\mathrm{prev} = \bot$, by \Cref{obs:numberofedges} all lists have the same size, and the claim trivially holds.
    
    For nodes $v,w$ that have the same position and history, \Cref{lem:listsize} ensures that all lists have the same size, and hence $\mathrm{Map(N_v,N_w)}$ clearly succeeds.
    
    For nodes $v,w$ that have the same position internal but have different histories $\beta^x$ and $\beta^y$, by \Cref{lem:listsize}, all the lists at position different from $x$ and $y$ have the same size, and thus the function $\mathrm{Map(N_v,N_w)}$ clearly produces a bijection between nodes in these lists. Also, by \Cref{lem:listsize}, lists at position $x$ and $y$ only differ by $1$, and the function $\mathrm{Map(N_v,N_w)}$ maps the elements of $N_v[x]$ to the first $\mathrm{len}(N_w[x])-1$ elements of $N_w[x]$, the elements of $N_w[y]$ to the first $\mathrm{len}(N_v[y])-1$ elements of $N_v[y]$, and the last $2$ remained unmatched elements of $N_w[x]$ and $N_v[y]$ between each other. Hence, also in this case a bijection is produced.
\end{proof}

\begin{figure}[t]
	\centering
	\includegraphics[width=0.9\textwidth]{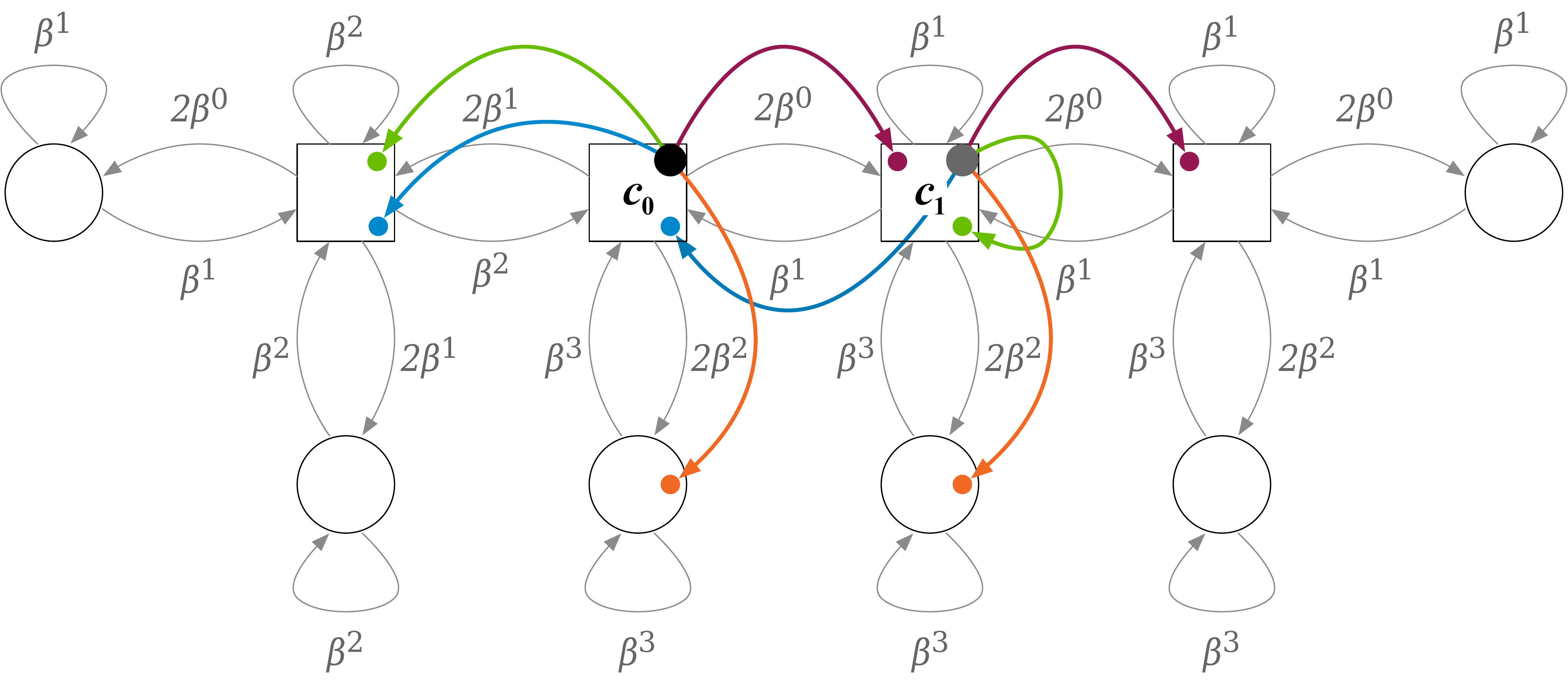}
	\caption{An example of the base case $d=1$ applied on $CT_2$. The black node in $c_0$ represents $v_0$, while the gray node in $c_1$ represents $v_1$.} 
	\label{fig:lem11-base}
\end{figure}

As in \cite{breezing}, we now prove an invariant satisfied by the algorithm. The invariant is exactly the same as the one proved in \cite{breezing}, while its proof differs. In fact, the additional thing that we need to handle is the case in which the algorithm follows an edge that corresponds to a self loop of $CT_k$, which turns out to be an easy case to handle. 
Recall that $V_k$ are the nodes of $CT_k$. 
\begin{lemma}[Main Invariant of Algorithm \ref{algo:isomorphism}]\label{lem:maininvariant}
    For $0 <  d < k$, suppose that $v$ and $w = \phi(v)$ lie at distance $d$ from $v_0$ and $v_1$, respectively. Let $v' = S^{-1}(v)$ and $w' = S^{-1}(w)$. Then exactly one of the following holds:
    \begin{enumerate}
        \item $v', w' \in V_d$, and $v$ and $w$ have the same history or both have history $\le \beta^{d+1}$.
        \item There is some $i$ with $d < i \le k$ such that $v', w' \in V_i \setminus V_{i-1}$, $v$ and $w$ have the same history, and $v'$ and $w'$ are connected to $CT_{i-1}$ with edges $(z,v',2\beta^j)$, $(v',z,\beta^{j+1})$, $(z',w',2\beta^j)$, $(w',z',\beta^{j+1})$, for some $z, z'$ and some $j \in \set{0,\ldots,i}$.
    \end{enumerate}
\end{lemma}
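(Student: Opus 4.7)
I will prove the invariant by induction on the depth $d$, following the general structure of the corresponding argument in \cite{breezing} but adapted to our cluster tree skeleton, which differs by the presence of self loops and by the fact that only $S(c_0)$ is independent. Throughout, I combine \Cref{obs:ctk} and \Cref{def:labeling} to translate a $G_k$-edge label $\beta^j$ at $v$ into information about $S^{-1}(v)$ and its neighbor in $CT_k$: the same label $\beta^j$ on a $G_k$-edge can correspond to going to the parent (label $\beta^{i+1}$ with $i+1=j$), to a child (label $2\beta^i$ with $i=j$), or to a self loop (label $\beta^i$ with $i=j$), which then determines the \emph{history} of the neighbor as $\beta^{j-1}$, $\beta^{j+1}$, or $\beta^j$ respectively.

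\textbf{Base case $d=1$.} I directly verify the invariant for every pair $(v,w)$ produced by the first call $\textsc{Map}(N_{v_0},N_{v_1})$. By \Cref{obs:numberofedges}, $v_0\in S(c_0)$ has $2\beta^j$ neighbors with label $\beta^j$ for each $j\in\{0,\dots,k\}$; by \Cref{obs:ctk}(4) these fall inside children of $c_0$ in $CT_k$ that are first introduced in $V_j$. For $v_1\in S(c_1)$ the same total count $2\beta^j$ per label $\beta^j$ holds, but the neighbors are now split among the parent cluster $S(c_0)$ (label $\beta^1$, $\beta$ of them), the self loop in $S(c_1)$ (label $\beta^1$, $\beta$ more), the single $2\beta^0$-child of $c_1$, and the $2\beta^j$-children of $c_1$ for $j\in\{2,\dots,k\}$. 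I then split the analysis into the three cases handled by $\textsc{Map}$: (i) label $\beta^0$ pairs lie in $V_1$ with common history $\beta^1$, falling under case 1; (ii) label $\beta^1$ pairs have $v$-side history $\beta^2$ and $w$-side history $\beta^0$ or $\beta^1$, all at most $\beta^{d+1}=\beta^2$ and all inside $V_1$, again case 1; (iii) label $\beta^j$ pairs for $j\geq 2$ lie in $V_j\setminus V_{j-1}$ with common history $\beta^{j+1}$ and the required parent edges $(z,\cdot,2\beta^j)$, $(\cdot,z,\beta^{j+1})$ into $CT_{j-1}$, falling under case 2 with $i=j$.

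\textbf{Inductive step.} Fix $d$ with $1<d<k$ and assume the invariant at depth $d-1$. Take any pair $(v^-,w^-)=\big(v^-,\phi(v^-)\big)$ at depth $d-1$, and any pair $(v,w)$ produced by $\textsc{Map}(N_{v^-},N_{w^-})$ at depth $d$. I split according to which case of the invariant $(v^-,w^-)$ satisfies. If $(v^-,w^-)$ is in case 2, then by \Cref{lem:listsize} all lists match in size, so $\textsc{Map}$ only pairs neighbors reached via a common label $\beta^x$. Using \Cref{obs:ctk}(2)--(3) I enumerate the three possible moves out of a cluster in $V_i\setminus V_{i-1}$: upwards to the parent $z\in V_{i-1}\subseteq V_d$, across the self loop (staying in $V_i\subseteq V_d$), or downwards to a child in $V_j\setminus V_{j-1}$ with $j\geq i+1>d$. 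In the first two cases the new pair sits in $V_d$ with histories tied by the common label, giving case 1; in the last case both endpoints share history $\beta^{j+1}$ and exhibit the $2\beta^j/\beta^{j+1}$ parent edges required by case 2 with $i'=j$.

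If $(v^-,w^-)$ is in case 1, \Cref{lem:listsize} says the list sizes match except possibly at two positions $\beta^x,\beta^y$ where they differ by one, and this happens only when $v^-,w^-$ are both internal with distinct histories. All matched pairs under a common label $\beta^j$ are analyzed exactly as in the case 2 analysis above: either the pair stays in $V_{d-1}\subseteq V_d$, or both descend into $V_j\setminus V_{j-1}$ for some $j\geq d$ with matching history $\beta^{j+1}$, reestablishing either case 1 or case 2 at depth $d$. The delicate point is the single ``leftover'' pair produced by $\textsc{Map}$, in which an excess neighbor of $w^-$ under label $\beta^x$ is matched to an excess neighbor of $v^-$ under label $\beta^y$; these two endpoints are reached via different $G_k$-labels and may have different histories. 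Here I argue using \Cref{obs:noescape} that because $(v^-)',(w^-)'\in V_{d-1}$, every label $\beta^\ell$ available at these clusters satisfies $\ell\leq d$, so any resulting history is in $\{\beta^{\ell-1},\beta^\ell,\beta^{\ell+1}\}\subseteq\{\beta^0,\dots,\beta^{d+1}\}$. Hence both endpoints of the leftover pair land in $V_d$ with histories bounded by $\beta^{d+1}$, matching case 1 at depth $d$.

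\textbf{Expected main obstacle.} The conceptually routine part is the ``aligned'' pairings under a shared label, where \Cref{obs:ctk} directly yields the required structure. The main obstacle is the leftover mismatch pairing that arises when $(v^-,w^-)$ is in case 1 with distinct histories: here the two paired neighbors are reached by different $G_k$-labels, so one must carefully verify that (a) both histories lie in $[\beta^0,\beta^{d+1}]$ and (b) both clusters lie in $V_d$, ruling out an escape into $V_j\setminus V_{j-1}$ for $j>d$, which would destroy the invariant since case 2 requires equal histories. The bound from \Cref{obs:noescape} on the label exponents available at clusters inside $V_{d-1}$ is precisely the tool that closes this case.
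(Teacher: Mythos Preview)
Your base case is fine and matches the paper. The inductive step, however, has a genuine gap in the handling of case~2.

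When $(v^-,w^-)$ is in case~2 at depth $d-1$, you have $(v^-)',(w^-)'\in V_i\setminus V_{i-1}$ for some $i$ with $d-1<i\le k$, i.e.\ $i\ge d$. You then write ``upwards to the parent $z\in V_{i-1}\subseteq V_d$'' and ``across the self loop (staying in $V_i\subseteq V_d$)''. Both inclusions are backwards: since $i\ge d$ we have $V_d\subseteq V_i$ and $V_d\subseteq V_{i-1}$ only when $i=d$. For any $i>d$ the self-loop move keeps you in $V_i\setminus V_{i-1}$, which is \emph{not} in $V_d$; this must land in case~2 again (with the same $i$), not in case~1 as you claim. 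More seriously, the ``upwards'' move lands in the parent cluster, which lies somewhere in $V_{i-1}$ but could sit in $V_m\setminus V_{m-1}$ for any $m\le i-1$; whether case~1 or case~2 applies at depth $d$ depends on where exactly that parent cluster lives, and this cannot be read off from the current step alone. The paper resolves this by a ``look back'' argument: it observes that since $CT_k$ is a tree and the walk started at $c_0\in V_0$, the path to $v$ (resp.\ $w$) must already have visited $S^{-1}(v)$ (resp.\ $S^{-1}(w)$) at some earlier depth $d-h$, and then applies the inductive invariant at that earlier depth to pin down which case holds now. Your three-way enumeration does not capture this and the proof does not go through without it.

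There is also a smaller slip in your case~1 analysis of the leftover pair: you claim that ``every label $\beta^\ell$ available at these clusters satisfies $\ell\le d$'', but $(v^-)',(w^-)'$ are internal in $CT_k$ and therefore carry all labels $\beta^0,\dots,\beta^k$ (Observation on the number of neighbors). The correct reason the leftover positions $x,y$ are at most $d$ is that they are precisely the two (distinct) histories of $v^-$ and $w^-$, which by case~1 at depth $d-1$ are both $\le\beta^d$; from there your conclusion that the leftover pair lands in $V_d$ with histories $\le\beta^{d+1}$ does follow via \Cref{obs:noescape}.
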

\begin{proof}
    We assume that $k$ is fixed, and we prove the claim by induction on $d$.
    \paragraph{Base Case ($d=1$).} 
    Let $v_0 \in S(c_0)$ and $v_1 \in S(c_1)$ be the two nodes on which the $\mathrm{FindIsomorphism}$ function is called, and let $v$ be an arbitrary neighbor of $v_0$ that is matched to $w$, a neighbor of $v_1$. That is, $w = \phi(v)$, the distance of $v$ from $v_0$ is $1$, and the distance of $w$ from $v_1$ is $1$.
    
    Since, by \Cref{obs:numberofedges} and the fact that $\mathrm{prev} = \bot$, for all $i$, $\mathrm{len}(N_{v_0}[i]) = \mathrm{len}(N_{v_1}[i])$, then only neighbors of $v_0$ and $v_1$ that are reachable from edges labeled the same are matched, that is, both $(v_0,v)$ and $(v_1,w)$ are labeled $\beta^j$ for the same $j$. 
    
    If $j = 0$, then, by the definition of cluster tree skeleton, and by the labels assigned in \Cref{def:labeling}, $v$ must be in $S(c_1)$, while $w$ must be in $S(\ell)$, where $\ell$ is the leaf attached to $c_1$ when constructing $CT_1=(V_1,E_1)$ (purple edges in \Cref{fig:lem11-base}). Thus both $v'$ and $w'$ are part of $V_1$, and moreover, by \Cref{obs:ctk} and \Cref{def:labeling}, $v$ and $w$ have the same history, hence the first case applies.
    
    If $j = 1$, then $v$ must be in $S(\ell)$, where $\ell$ is the leaf attached to $c_0$ when constructing $CT_1$ (green and blue edges that start from $v_0$ in \Cref{fig:lem11-base}), while $w$ must either be in $S(c_0)$ (by following an edge of $G_k$ corresponding to $(c_1,c_0,\beta^1)$; blue edge that starts from $v_1$ in \Cref{fig:lem11-base}), or in $S(c_1)$ (by following the self loop $(c_1,c_1,\beta^1)$; green edge that starts from $v_1$ in \Cref{fig:lem11-base}). In both cases, we always reach nodes of $G_k$ corresponding to nodes of $CT_1$, and hence $v'$ and $w'$ are both in $V_1$. Moreover, by the definition of cluster tree skeleton, all edges of $CT_1$ have parameter at most $2\beta^2$, which implies that both $v$ and $w$ have histories at most $\beta^2 = \beta^{d+1}$, and hence the first case applies.
    
    If $j > 1$, then $v$ and $w$ both lie in clusters $S(\ell_v)$ and $S(\ell_w)$ corresponding to nodes $\ell_v$ and $\ell_w$ that have been added as leaves when constructing $CT_j$ (orange edges in \Cref{fig:lem11-base}), which means that both $v'$ and $w'$ are part of $V_j \setminus V_{j-1}$. By \Cref{obs:ctk}, this implies the existence of the edges $(c_0,\ell_v,2\beta^{j})$, $(\ell_v,c_0,\beta^{j+1})$, $(c_1,\ell_w,2\beta^{j})$, $(\ell_w,c_1,\beta^{j+1})$ in $CT_k$, and hence, $v$ and $w$ also have the same history. The existence of the above edges and the fact that these nodes have the same history, implies that the second case applies.
    
    \paragraph{Inductive Case.} Assume that the claim holds for $0 < d < k-1$, we prove that the claim holds for $d+1$. Assume that $v$ and $w$ are such that $w = \phi(v)$, $v$ is at distance $d+1$ from $v_0$, and $w$ is at distance $d+1$ from $v_1$. Let $p(v)$ (resp.\ $p(w)$) be the node at distance $d$ from $v_0$ (resp.\ $v_1$) that is a neighbor of $v$ (resp.\ $w$), that is, the node that has been used to reach $v$ (resp.\ $w$). By applying the inductive hypothesis, we know that the claim holds for $p(v)$ and $p(w)$. Hence, we distinguish between the two possible cases.
    \begin{itemize}
        \item The first case of the invariant holds for $p(v)$ and $p(w)$, that is, $S^{-1}(p(v)),S^{-1}(p(w)) \in V_d$, and $p(v)$ and $p(w)$ have the same history, or both have history $\le \beta^{d+1}$. We consider two cases separately, either $S^{-1}(v) \in V_{d+1}$, or $S^{-1}(v) \not\in V_{d+1}$.
        \begin{itemize}
            \item {\boldmath $S^{-1}(v) \in V_{d+1}$}. Note that, by the definition of cluster tree skeleton, all edges that start from a node of $V_d$ and reach a node of $V_{d+1}$ are labeled with parameter $2\beta^j$ for some $j \le d+1$. Hence, by \Cref{def:labeling}, the edge $(p(v),v)$ is labeled with $\beta^j$ for some $j \le d+1$, implying that $v \in N_{p(v)}[j]$ for some $j \le d+1$. By \Cref{lem:listsize}, node $v$ is then matched either with a node in $N_{p(w)}[j]$, if $p(v)$ and $p(w)$ have the same history, or with a node in $N_{p(w)}[j']$ for $j' \le d+1$, if  $p(v)$ and $p(w)$ have different history but both $\le \beta^{d+1}$. Hence, $w \in N_{p(w)}[j']$ for some $j' \le d+1$, and since $S^{-1}(p(w))$ is internal in $CT_{d+1}$ because it already exists in $CT_{d}$ (by assumption), then by \Cref{obs:noescape}, $S^{-1}(w) \in V_{d+1}$. Then, since all edges of $CT_{d+1}$ are labeled either  $\beta^j$ or $2\beta^j$, for $j \le d+2$, then $v$ and $w$ have history at most $\beta^{d+2}$, and hence the first case applies.
            \item {\boldmath $S^{-1}(v) \not\in V_{d+1}$}. In this case, we have that $S^{-1}(v) \in V_{i} \setminus V_{i-1}$ for some $i > d+1$. Since $S^{-1}(p(v))$ is internal in $CT_{d+1}$ (because by assumption it already exists in $CT_d$), then the edge connecting $S^{-1}(p(v))$ to $S^{-1}(v)$ has been added to $CT_k$ when constructing $CT_i$ and adding $S^{-1}(v)$ as leaf, and hence the edge $(p(v),v)$ is labeled $\beta^i$, implying that $v \in N_{p(v)}[i]$. By \Cref{lem:listsize}, the length of the lists of $p(v)$ and $p(w)$ can only differ in positions $\le d+1$ (because, by assumption, either they have the same history, or they have history $\le \beta^{d+1}$), and since $i > d+1$, then $v$ is matched with a node in $N_{p(w)}[i]$, and hence $w \in N_{p(w)}[i]$. 
            Note that  $S^{-1}(p(w))$ is internal in $CT_{d+1}$, in $CT_i$, and in $CT_{i-1}$, because it already exists in $CT_{d}$ and $i-1 \ge d+1$. Then by \Cref{obs:noescape}, $S^{-1}(w) \in V_i$. Moreover, since $S^{-1}(p(w))$ is internal in $CT_{i-1}$ by following edges labeled $\beta^{i}$, by \Cref{obs:noescape}, we cannot reach nodes of $G_k$ corresponding to nodes of $V_{i-1}$, and hence $S^{-1}(w) \not\in V_{i-1}$. Hence, $S^{-1}(w) \in V_{i} \setminus V_{i-1}$, implying that it is a leaf in $CT_i$. This implies, by \Cref{obs:ctk} and \Cref{def:labeling}, that the edges $(v,p(v))$ and $(w,p(w))$ are both labeled $\beta^{i+1}$, and hence $v$ and $w$ have the same history.
            Hence, $S^{-1}(v)$ and $S^{-1}(w)$ are both connected as leaves when constructing $CT_i$, and they are connected with their parents with edges labeled the same. By \Cref{obs:ctk} the edges required by the second case of the invariant are present, for $j=i$. Thus, the second case of the invariant applies.
        \end{itemize}
        \item The second case of the invariant holds for $p(v)$ and $p(w)$, that is, let $v' = S^{-1}(p(v))$ and $w' = S^{-1}(p(w))$. There is some $i$ with $d < i \le k$ such that $v', w' \in V_i \setminus V_{i-1}$, $p(v)$ and $p(w)$ have the same history, and $v', w'$ are connected to $CT_{i-1}$ with edges $(z,v',2\beta^j)$, $(v',z,\beta^{j+1})$, $(z',w',2\beta^j)$, $(w',z',\beta^{j+1})$, for some $z, z'$ and some $j \in \set{0,\ldots,i}$. Hence, $v'$ and $w'$ are leaves in $CT_i$, and thus $p(v)$ and $p(w)$ not only have the same history, but also the same position. Notice that, the connection of $v'$ and $w'$ with $CT_{i-1}$ implies that they satisfy $\psi(v') = \psi(w') = j+1$, that is, they have the same self loop.
        Also, \Cref{lem:listsize} implies that $v$ and $w$ are contained, respectively, in $N_{p(v)}[j']$ and $N_{p(w)}[j']$ for the same $j'$, and that these two lists have the same length. Note that both $p(v)$ and $p(w)$ have edges labeled $\mathsf{self}$ in the list at position $j+1$, and nowhere else. If $j' = j+1$, then, note that the sorting of the lists $N_{v'}[j']$ and $N_{w'}[j']$ implies that $v$ and $w$ are either both reached through an edge corresponding to a self loop of a leaf of $CT_i$, and in this case they both have history $j+1$, or they are both reached through edges corresponding to $(v',z,\beta^{j+1})$ and $(w',z',\beta^{j+1})$, and in this case they both have history $j$, because the edges $(z,v')$ and $(z',w')$ are both labeled $2\beta^j$ by assumption.
        If $j' \neq j+1$, then $S^{-1}(v)$ and $S^{-1}(w)$ are children of $v'$ and $w'$, respectively, and hence, by \Cref{obs:ctk}, they both have history $j'+1$, and they satisfy $S^{-1}(v), S^{-1}(w) \in V_{i'} \setminus V_{i'-1}$ for some $i' \ge j' \ge i+1$.
        Hence, in both cases $v$ and $w$ have the same history. 
        
        If $j' \neq j+1$, since $i+1 > d+1$, then $v$ and $w$ have the same history, $S^{-1}(v)$ and $S^{-1}(w)$ are both in $V_{i'} \setminus V_{i'-1}$ for some $i' > d+1$, and $S^{-1}(v)$ and $S^{-1}(w)$ are connected to $CT_{i'-1}$ with edges $(v',S^{-1}(v),2\beta^{j'})$, $(S^{-1}(v),v',\beta^{j'+1})$, $(w',S^{-1}(w),2\beta^{j'})$, $(S^{-1}(w),w',\beta^{j'+1})$, for some $j' \le i'$. Hence, the second case of the invariant applies.
        
        If $j' = j+1$, then there are two possible cases: either $S^{-1}(v)$ and $S^{-1}(w)$ are both in $V_i \setminus V_{i-1}$, if we reached them by following self loops, or they both are in the clusters of $V_{i-1}$.
        
        In the former case, if $d < i-1$, then the second case applies, while if $d = i-1$, then the first case applies.
        
        Consider the latter case. The path that we used to reach $v$ (resp. $w$) must start from a node of $CT_0$, and since $CT_k$ is a tree, and since $v'$ (resp.\ $w'$) is a leaf of $CT_i$, then it means that this path already passed from $S^{-1}(v)$ (resp. $S^{-1}(w)$) previously. 
        Let $v^h$ (resp.\ $w^h$) be the node that we used to reach $v$ (resp.\ $w$) and that is at distance $d-h$ from $v_0$ (resp.\ $v_1$).
        Let $h \ge 0$ be the smallest value such that $S^{-1}(v^{h}) = S^{-1}(v)$ or $S^{-1}(w^{h}) = S^{-1}(w)$. We get that $h \ge 1$ (since $v'$ and $w'$ are in $V_i \setminus V_{i-1}$ by assumption, and hence not in the clusters of $v$ and $w$, which are in $V_{i-1}$). 
        We apply the invariant on $v^h$ and $w^h$. If the first case of the invariant holds for $v^h$ and $w^h$, then, by the definition of $h$, we get that either $S^{-1}(v) \in V_{d-h}$ or $S^{-1}(w) \in V_{d-h}$. W.l.o.g.\ assume that $S^{-1}(v) \in V_{d-h}$. We prove that $S^{-1}(w) \in V_{i-1}$. In fact, assume that it is not the case. We know that $S^{-1}{w^{h}} \in V_{d-h}$, and we also know that $S^{-1}(w^{h-1}) \not\in V_{i-1}$ by the definition of $h$. This implies that $S^{-1}(w^{h-1}) = w'$, and hence that, in order to reach $w^{h}$ from $w^{h-1}$, we need to pass from the corresponding edge of $CT_k$ that $w'$ uses to reach $S^{-1}(w)$ (they must both go towards the parent of $w'$). Hence, $S^{-1}(w^h) = S^{-1}(w) \in V_{i-1}$, which is a contradiction. 
        Hence, both $S^{-1}(v),S^{-1}(w) \in V_{i-1}$ (since $d-h \le i-1$), thus the first case of the invariant applies to $v$ and $w$.
        Finally, suppose that the second case of the invariant holds for $v^h$ and $w^h$, and hence that both $S^{-1}(v^h)$ and $S^{-1}(w^h)$ are in $V_{i'} \setminus V_{i'-1}$ for some $i' \le i-1$. Similarly as above, we get that $S^{-1}(v^h)= S^{-1}(v)$ and $S^{-1}(w^h)= S^{-1}(w)$. If $i' \le d+1$, then both $S^{-1}(v)$ and $S^{-1}(w)$ are in $V_{i'}$, and since they have the same history, then the first case of the invariant applies on $v$ and $w$. If $i' > d+1$, then the second case of the invariant applies on $v$ and $w$.
    \end{itemize}
\end{proof}

We are now ready to prove \Cref{thm:sameview}.
\begin{proof}[\bf Proof of \Cref{thm:sameview}]
    Consider two nodes mapped by the algorithm, $v$ and $w = \phi(v)$. Let $v' = S^{-1}(v)$ and $w' = S^{-1}(w)$.
    Consider the two cases of \Cref{lem:maininvariant}. In the first case, since $v',w' \in V_d$, and $d < k$, then, by the definition of cluster tree skeleton, both $v'$ and $w'$ are \emph{internal}.
    In the second case, if $i < k$, then $v$ and $w$ are also both \emph{internal}, while if $i = k$, then they are both \emph{leaves}, and they also have the same history.
    Hence, by applying \Cref{cor:sufficient} the claim follows.
\end{proof}

\begin{algorithm2e}[p!]
	\DontPrintSemicolon
	\SetKwFunction{FMain}{{\sc FindIsomorphism}}
	\SetKwProg{Fn}{Function}{:}{}
	\Fn{\FMain{$G_k$, $k$, $v_0$, $v_1$}}{
		\KwIn{A CT graph $G_k$ with $g\geq 2k+1$, $k\in \mathbb{N}$, $v_0\in C_0$, $v_1\in C_1$}
		\KwOut{Isomorphism $\phi: V(G_k^k(v_0))\rightarrow V(G_k^k(v_1))$}
		$\phi \gets$ empty map\;
		$\phi(v_0) \gets v_1$\;
		\textsc{Walk($v_0$, $v_1$, $\bot$, $k$)}\;
		\Return{$\phi$}\;
	}
	\BlankLine
	\SetKwFunction{FWalk}{{\sc Walk}}
	\SetKwProg{Pn}{Function}{:}{{\sc Walk}}
	\Pn{\FWalk{$v$, $w$, $prev$, $depth$}}{
		\If{$depth = 0$}{
			\KwRet\;
		}
		$N_v\gets$ empty list of length $k+2$\;\label{alg:iso:neighborhood-start}
		$N_w\gets$ empty list of length $k+2$\;
		\For{$i \gets 0$ \emph{\textbf{to}} $k+1$} {
			\tcp{if edge $\beta^i$ does not exist, $N_v[i]$ (resp.\ $N_w[i]$) is empty}
			$N_v[i]\gets$ list of new nodes $v'\neq prev$ found using edge $\beta^i$ from $v$, sorted such that all edges labeled $\mathsf{self}$ come before the others\; \label{code:nvi}
			$N_w[i]\gets$ list of new nodes $w'\neq \phi(prev)$ found using edge $\beta^i$ from $w$, sorted such that all edges labeled $\mathsf{self}$ come before the others\;\label{code:nwi}\label{alg:iso:neighborhood-end}
		}
		\textsc{Map($N_v$, $N_w$)}\;\label{alg:iso:call-map}
		\For{$i \gets 0$ \emph{\textbf{to}} $k+1$} {
			\For{$v'$ \emph{\textbf{in}} $N_v[i]$}{
				\textsc{Walk($v'$, $\phi(v')$, $v$, $depth-1$)}
			}
		}
	}
	\BlankLine
	\SetKwFunction{FPair}{{\sc Map}}
	\SetKwProg{FPn}{Function}{:}{{\sc Map}}
	\FPn{\FPair{$N_v$, $N_w$}}{
		\For{$i \gets 0$ \emph{\textbf{to}} $k+1$\label{alg:iso:map-outer-for-start}} {
			\tcp{$zip(\cdot,\cdot)$ yields element tuples until the shorter list ends}
			\For{$v',w'$ \emph{\textbf{in}} $zip(N_v[i], N_w[i])$}{
				$\phi(v')\gets w'$\;\label{alg:iso:map-inner-for-end}
			}
		}
	\tcp{$len(\cdot)$ returns the length of a list}
	\If{$\exists~i\in [k+1]_0:  len(N_v[i]) \neq len(N_w[i])$\label{alg:iso:special-start}}{
	\tcp{we will prove that $len(L_v[i]) = len(L_w[i])$ for $i\in[k+1]_0\setminus \{i_v,i_w\}$}
	$i_v \gets i \in [k+1]_0: len(N_v[i]) = len(N_w[i]) + 1$\;
	$i_w \gets i \in [k+1]_0: len(N_v[i]) + 1 = len(N_w[i])$\;
	\tcp{$L[i][-1]$ retrieves the last element from list $i$ in~$L$}
	$\phi(N_v[i_v][-1]) \gets N_w[i_w][-1]$\;\label{alg:iso:special-end}
}
	}
	\caption{This is the algorithm presented in \cite{breezing}, that finds an isomorphism $\phi: V(G_k^k(v_0))\rightarrow V(G_k^k(v_1))$. The only lines that differ are \ref{code:nvi} and \ref{code:nwi}.}\label{algo:isomorphism}
\end{algorithm2e}

\subsection{Lifting}
\begin{proof}[\bf Proof of \Cref{lemma:randomlift}]
    We first prove that the probability for a node $\tilde{v}$ to be contained in a cycle of length at most $\ell\geq 3$ is upper bounded by $\Delta^\ell/q$. Let us first look at the probability that $\tilde{v}$ is contained in some cycle of length exactly $\ell'\in\set{3,\dots,\ell}$. For $\tilde{v}$ to be contained in a cycle of length $\ell'$ in $\tilde{G}$, one of the cycles of length $\ell'$ of $v$ in $G$ (that could pass from the same nodes or edges multiple times) has to be lifted to a cycle of length $\ell'$ of $\tilde{v}$ in $\tilde{G}$. Consider such a cycle $v=v_1, v_2, \dots, v_{\ell'}$ in $G$. In $\tilde{G}$, $\tilde{v}=\tilde{v}_1$ is connected to exactly one node $\tilde{v}_2$ of the fiber of $v_2$. Further, $\tilde{v}_2$ is connected to exactly one node $\tilde{v}_3$ in the fiber of $v_3$, and so on. There is therefore exactly one path of the form $\tilde{v}_1,\tilde{v}_2,\dots,\tilde{v}_{\ell'}$, where $\tilde{v}_1=\tilde{v}$ and where every node $\tilde{v}_i$ is in the fiber of node $v_i$ of $G$. For this path to extend to a cycle in $\tilde{G}$, the edge $\set{\tilde{v}_{\ell'},\tilde{v}_1}$ needs to be present in $\tilde{G}$. Because the edges in $\tilde{G}$ between the nodes in the fibers of $v_{\ell'}$ and $v_1$ are formed by a uniform random perfect matching, the probability for the edge $\set{\tilde{v}_{\ell'},\tilde{v}_1}$ to be present in $\tilde{G}$ is exactly $1/q$. For every cycle of length $\ell'$ of $v$ in $G$, the probability for the cycle to appear as a cycle of $\tilde{v}$ in $\tilde{G}$ is therefore exactly $1/q$. The number of cycles of length exactly $\ell'$ of nodes $v$ in $G$ can be upper bounded by $\Delta^{\ell'-1}$ and by a union bound, the probability for $\tilde{v}$ to be contained in a cycle of length $\ell'$ in $\mathcal{G}$ can consequently be upper bounded by $\Delta^{\ell'-1}/q$. The upper bound of $\Delta^\ell/q$ on the probability for $\tilde{v}$ to be contained in some cycle of length at most $\ell$ in $\mathcal{G}$ now follows by a union bound over all $\ell'\in\set{3,\dots,\ell}$.
    
    Consider a set $\tilde{S}\subseteq\tilde{C}$ of size $|\tilde{S}|=s\cdot q$, where $s$ is an integer with $s\geq 8\ln|C|$. We want to upper bound the probability that the nodes in $\tilde{S}$ form an independent set of $\tilde{G}$. For this, we partition $\tilde{S}$ according to the different fibers that are contained in $\tilde{S}$. Assume that $\tilde{S}=\tilde{S}_1,\dots,\tilde{S}_t$, where for every $i\in \set{1,\dots,t}$, all nodes in $\tilde{S}_i$ are from the same fiber and for any $i\neq j$, any two nodes in $\tilde{S}_i$ and $\tilde{S}_j$ are from different fibers. Assume further that for all $i\in \set{1,\dots,t}$, $|\tilde{S}_i|=\sigma_i$. Note that there clearly are no edges between two nodes of the same set $\tilde{S}_i$ (for any $i$). Further, the edges in $\tilde{G}$ between the different pairs of sets $\tilde{S}_i$ and $\tilde{S}_j$ are chosen independently (because different edges of $G$ lead to independent random perfect matchings between the corresponding fibers in $\tilde{G}$). 
    
    Let us therefore first concentrate on a single pair of sets $\tilde{S}_i$ and $\tilde{S}_j$ for $i\neq j$. Assume that $\tilde{S}_i$ is part of the fiber of some node $v_i\in V$ of $G$ and that $\tilde{S}_j$ is part of the fiber of some other node $v_j\in V$ of $G$. We enumerate the nodes in $\tilde{S}_i$ as $\tilde{S}_i=\set{\tilde{v}_{i,1},\dots,\tilde{v}_{i,\sigma_i}}$. The neighbor of $\tilde{v}_{i,1}$ in the fiber of $v_j$ is a uniformly random node of the $q$ nodes in the fiber. The probability that the neighbor of $\tilde{v}_{i,1}$ is outside $\tilde{S}_j$ is therefore equal to $(q-\sigma_j)/q=1-\sigma_j/q$. Conditioning on the event that $\tilde{v}_{i,1}$ is connected to a node in the fiber of $v_j$ outside $\tilde{S}_j$, the probability that the neighbor of $\tilde{v}_{i,2}$ is outside $\tilde{S}_j$ is equal to $(q-1-\sigma_j)/(q-1)=1-\sigma_j/(q-1)$ (or equal to $0$ if $\sigma_j\geq q-1$). More generally, the probablity that $\tilde{v}_{i,h}$ (for $h\in \set{1,\dots,\sigma_i}$) is connected to a node outside $\tilde{S}_j$ in the fiber of $v_j$ given that the nodes $\tilde{v}_{i,1},\dots,\tilde{v}_{i,h-1}$ are connected to nodes outside $\tilde{S}_j$ is equal to $1-\min\set{1,\sigma_j/(q-h+1)}$. Let $P_{i,j}$ be the probability that there is no edge between $\tilde{S}_i$ and $\tilde{S}_j$ in $\tilde{G}$. We have
    \[
    P_{i,j} = \prod_{h=1}^{\sigma_i} 1 - \min\set{1,\frac{\sigma_j}{q-h+1}} \leq \left(1-\frac{\sigma_j}{q}\right)^{\sigma_i}
    < e^{-\sigma_i\sigma_j/q}.
    \]
    The probability that $\tilde{S}$ is an independent set of $\tilde{G}$ can therefore be upper bounded by
    \[
    \Pr(\tilde{S}\text{ is indep.\ set of }\tilde{G}) = \prod_{1\leq i<j\leq t} P_{i,j} <
    \exp\left(-\frac{1}{q}\cdot\sum_{1\leq i<j\leq t} \sigma_i\cdot\sigma_j\right). 
    \]
    The sum over the $\sigma_i\sigma_j$ for all $1\leq i<j\leq t$ is equal to the total number of unordered pairs of nodes in $\tilde{S}$ such that the two nodes are in different fibers. Because a fiber only consists of $q$ nodes, for every node $\tilde{u}\in\tilde{S}$, there are at least $|\tilde{S}|-q=(s-1)q$ nodes $\tilde{v}\in\tilde{S}$ such that $\tilde{u}$ and $\tilde{v}$ are in different fibers. The total number of unordered pairs of nodes from different fibers in $\tilde{S}$ is therefore at least $sq\cdot(s-1)q/2$. Since we assume that $|C|\geq 2$ and $s\geq 8\ln|C|$, this number is lower bounded by $s^2q^2/4$. We thus get
    \[
    \Pr(\tilde{S}\text{ is indep.\ set of }\tilde{G}) < e^{-s^2q/4}.
    \]
    By a union bound over all possible sets $\tilde{S}$, the probability that $\tilde{G}[\tilde{C}]$ contains an independent set of size $s\cdot q$ can then be upper bounded by
    \[
    \binom{|\tilde{C}|}{sq}\cdot e^{-s^2q/4} = \binom{|C|q}{sq}\cdot e^{-s^2q/4} \leq 
    \left(\frac{e|C|}{s}\right)^{sq}\cdot e^{-s^2q/4} = e^{sq(1+\ln|C| - \ln s - s/4)}
    < e^{-s^2q/8}.
    \]
    The last inequality follows from $s\geq 8\ln|C|$. This implies that $\ln|C|-s/4\leq -s/8$ and it also implies that $1-\ln s < 0$.
\end{proof} 

\subsection{(Almost) High-Girth Graphs in \boldmath$\mathcal{G}_k$}
\label{app:missingHighGirth}

\begin{proof}[\bf Proof of \Cref{lem:smallgirth}]
    The bound on the independence number comes from the fact that each cluster is composed of disjoint cliques of size $\beta^i$, plus additional edges, and from the fact that, for each clique, at most one node can join the independent set.
    
    The bound on the maximum degree comes from the fact that internal nodes have exactly $\sum_{i=0}^{k}2\beta^{i}\le 2\beta^{k+1}$ neighbors, while leaves have exactly $2\beta^{k+1}$ neighbors (by \Cref{obs:ctk}).
    
    Let $T_i$ be the set of nodes of $CT_k$ at distance exactly $i$ from $v_0$. Note that, by construction, $|T_{i+1}| \le (k+1) |T_{i}|$. Consider now clusters of nodes of $G_k$ corresponding to nodes of $T_i$: they have size $2 \beta^{k+1} (\beta/2)^{k+1-i}$. Hence, the clusters at distance $i+1$ are smaller than the ones at distance $i$ by a factor $\beta/2$. 
    Let $S_i = \bigcup_{v : \mathrm{dist}(v,v_0)= i} S(v)$.
    We obtain the following:
    \[
        \frac{|S_{i+1|}}{|S_i|} \le 2(k+1) / \beta,
    \]
    that is, the ratio between the number of nodes of $G_k$ corresponding to nodes at distance $i+1$ from $v_0$, and the ones at distance $i$, is at most $2(k+1) / \beta$. Hence, the total number of nodes of $G_k$ is at most:
    \[
        2 \beta^{k+1} (\beta/2)^{k+1} \sum_{i=0}^{k+1} \left(\frac{2(k+1)}{\beta}\right)^i,
    \]
    and since $2(k+1)/\beta < 1/2$, then the number of nodes is $O(\beta^{2k+2})$.
\end{proof}

\begin{proof}[\bf Proof of \Cref{lem:lifted}]
    We apply \Cref{lemma:randomlift} on the the graph $G_k$ described in \Cref{lem:smallgirth}. Recall that each cluster $S(v)$ of $G_k$ corresponding to some node $v\in N(v_0)$ satisfying that $i=\psi(v)$, contains $t$ disjoint cliques of size $\beta^i$. Consider an arbitrary such clique $C$.
    \Cref{lemma:randomlift} guarantees that the graph induced by the set $\tilde{C}$ of nodes corresponding to copies of nodes of $C$ has independence number $\alpha(\tilde{G}_k[\tilde{C}]) \le s q$ with probability at least $1-e^{-s^2 q /8}$ for every integer $s \ge 8 \ln \beta^i$.
    Hence, the probability that all cliques of $S(v)$ have independence number at most $sq$ is at least $(1-e^{-s^2 q /8})^{t}$. Since $S(v)$ is a union of these cliques, plus some additional edges, then this value is also a lower bound on the probability that the independence number of $S(v)$ is at most $s q t$, as desired.
\end{proof}

\begin{proof}[\bf Proof of \Cref{cor:family}]
We apply \Cref{lem:lifted} with $q=\beta^{ck^2}$, for some large enough constant $c \ge 1$.
Let $v \in N(v_0)$ and $i = \psi(v)$. By fixing $s=\lceil 8\ln \beta^i \rceil$ we obtain that the probability that a cluster satisfies the independence requirements is at least
 \[
    (1-e^{-s^2q/8})^t \ge 1-te^{-s^2q/8} \ge 1 - \beta^{2k-i+1 -s^2 q /8} \ge 1 - \beta^{2k-8q \ln^2 \beta },
 \]
 where the last inequality holds because $i \ge 1$.
 
 Note that for $q \ge k$, $\beta^{2k-8q \ln^2 \beta} < 1$.
The probability that all clusters that are neighbors of $S(v_0)$ satisfy the requirements is then at least
\[
    (1 - \beta^{2k-8q \ln^2 \beta })^{k+1} \ge 1 - (k+1)\beta^{2k-8q \ln^2 \beta} \ge 1 - \beta^{2k+\log_\beta (k+1)-8q \ln^2 \beta} > 0,
\]
where the last inequality holds by the assumption on the value of $q$. Hence, there is non-zero probability to obtain a graph satisfying the requirements, and hence such a graph exists. We get that this graph satisfies that each cluster $S(v)$ has independence number at most 
\[
 s q t = 8 \ln \beta^i q 2 \beta^{k-i+1} (\beta/2)^{k} =  |S(v)| 8 \ln \beta^i / \beta^i,
\]
as required.

For a node $v$, the probability that it is contained in a cycle of length at most $2k+1$ is at most 
\[
    \frac{\Delta^{2k+1}}{q} = \frac{(2\beta^{k+1})^{2k+1}}{\beta^{ck^2}} = 2^{2k+1} \beta^{(k+1)(2k+1) - ck^2} \le 1/\beta,
\]
for a large enough constant $c$.

The only thing remaining to be done is to determine the total number of nodes, which is $O(q \beta^{2k+2}) = O(\beta^{ck^2 + 2k+2} ) = \beta^{O(k^2)}$, as desired.
\end{proof}

\subsection{Lower Bound on Node-Averaged Complexity of Maximal Matching}
\label{app:lb_avgMatching}

\begin{proof}[\bf Proof \Cref{thm:lb_avgMatching}]
    To prove the lower bound on the node-averaged complexity of the maximal matching problem, we use the variant of the KMW graph construction that has been used in \cite{kuhn16_jacm} for proving a worst-case lower bound on approximating the maximum matching problem. The graph consists of $2$ copies of the cluster graph construction as described in \Cref{sec:clustertrees}. Both copies contain independent sets $C_0$ and $C_1$ such that every node in $C_0$ has one neighbor in $C_1$ and every node in $C_1$ has $\delta=\omega(1)$ neighbors in $C_0$. In the following, we use $C_0$ and $C_0'$ to refer to the sets in the two copies and we similarly use $C_1$ and $C_1'$. In addition, the graph contains a perfect matching between the two copies of the basic cluster graph such that each edge of this perfect matching connects a node in the first copy with a node in the same cluster of the second copy. The graph has the property that every matching that does not contain any of the edges of the perfect matching between the two copies is of size $o(n)$. Further, the two sets $C_0$ and $C_0'$ together contain $(1-o(1))n$ nodes. Finally, the graph is constructed with no cycles of length less than $2k+1$ and it is shown in \cite{kuhn16_jacm} that in this case, all edges between clusters $C_0$, $C_1$, $C_0'$, and $C_1'$ have the same $k$-hop views. Consequently, after $k$ rounds, all those edges need to be added to the matching with the same probability. Because nodes in $C_1$ and $C_1'$ are incident to $\delta=\omega(1)$ such edges, this probability has to be $o(1)$. Because $(1-o(1))n/2$ of the perfect matching edges are between $C_0$ and $C_0'$, this implies that any matching after $k$ rounds has to be of size $o(n)$. However at the end, any maximal matching must contain $(1-o(1))n/2$ perfect matching edges between $C_0$ and $C_0'$. This implies that after $k$ rounds, most of the nodes in $C_0$ and $C_0'$ (and thus $(1-o(1))n$ nodes) are all not decided and thus, the node-averaged complexity is at least $k=\Omega\Big(\min\Big\{\frac{\log\Delta}{\log\log\Delta},\sqrt{\frac{\log n}{\log\log n}}\Big\}\Big)$.
\end{proof}

\end{document}